\tikzstyle{vertex}=[circle, draw, inner sep=0pt, minimum size=6pt]
\newcommand{\multiline}[1]{%
  \begin{tabularx}{\dimexpr\linewidth-\ALG@thistlm}[t]{@{}X@{}}
    #1
  \end{tabularx}
}
\def\BState{\State\hskip-\ALG@thistlm}
\newcommand{\ceil}[1]{\lceil #1 \rceil}
\titlespacing{\section}{0pt}{3ex}{2ex}
\titlespacing{\subsection}{0pt}{2ex}{1ex}
\titlespacing{\subsubsection}{0pt}{0.5ex}{0ex}
\newtheorem{theorem}{Theorem}[section]
\newtheorem{corollary}{Corollary}[section]
\newtheorem{lemma}{Lemma}[section]
\newtheorem{claim}{Claim}
\theoremstyle{remark}
\let\c@fconjecture\c@conjecture
\let\c@fconj\c@conj
\def \eps {\varepsilon}
\newcommand{\ignore}[1]{}
\def \poly { \text{\rm poly~} }
\def \eps {\epsilon}
\title{Algorithms and Hardness for Diameter in Dynamic Graphs}
\author{Bertie Ancona \footnote{\texttt{bancona@mit.edu}} \\ MIT
\and Monika Henzinger \footnote{\texttt{monika.henzinger@univie.ac.at}, the research leading to these results has received funding
from the European Research Council under the European Community's Seventh Framework Programme (FP7/2007-2013) / ERC grant agreement No.~340506.} \\ University of Vienna
\and Liam Roditty \footnote{\texttt{liam.roditty@biu.ac.il}}\\ Bar Ilan University
\and Virginia Vassilevska Williams \footnote{\texttt{virgi@mit.edu}, supported by an NSF CAREER Award, NSF
Grants CCF-1417238, CCF-1528078 and CCF-1514339, a BSF Grant
BSF:2012338 and a Sloan Research Fellowship.}\\MIT
\and Nicole Wein \footnote{\texttt{nwein@mit.edu}, supported by an NSF Graduate Fellowship and NSF Grant CCF-1514339}\\ MIT}
\date{}
\begin{document}
\maketitle
\begin{abstract}
The diameter, radius and eccentricities are natural graph parameters. While these problems have been studied extensively, there are no known {\em dynamic} algorithms for them beyond the ones that follow from trivial recomputation after each update or from solving dynamic All-Pairs Shortest Paths (APSP), which is very computationally intensive. This is the situation for dynamic {\em approximation} algorithms as well, and even if only edge insertions or edge deletions need to be supported.

This paper provides a comprehensive study of the dynamic approximation of Diameter, Radius and Eccentricities, providing both conditional lower bounds, and new algorithms whose bounds are optimal under popular hypotheses in fine-grained complexity. Some of the highlights include: 
\begin{itemize}
\item Under popular hardness hypotheses, there can be no significantly better {\em fully dynamic approximation} algorithms than recomputing the answer after each update, or maintaining full APSP.

\item Nearly optimal {\em partially dynamic} (incremental/decremental) algorithms can be achieved via efficient reductions to (incremental/decremental) maintenance of Single-Source Shortest Paths. For instance, a nearly $(3/2+\eps)$-approximation to Diameter in directed or undirected graphs can be maintained decrementally in total time $m^{1+o(1)}\sqrt{n}/\eps^2$. This nearly matches the static $3/2$-approximation algorithm for the problem that is known to be conditionally optimal.  
\end{itemize}
\end{abstract}
\section{Introduction}
Computing the shortest paths distances between all pairs of vertices in a graph, the All-Pairs Shortest Paths (APSP) problem, has been studied since the beginning of computer science as a field. Nevertheless, the fastest known algorithms \cite{ryanapsp,PettieR05,Pettie04} for APSP in $n$-vertex $m$-edge graphs are only slightly faster (by $n^{o(1)}$ factors) than the simple $\tilde{O}(mn)$ time\footnote{$\tilde{O}$ notation supresses polylogarithmic factors.} algorithm running Dijkstra's algorithm from every vertex\footnote{after Johnson's trick to make the weights nonnegative}. For dense graphs with small integer weights there are improved algorithms \cite{seidel,Zwick02} using fast matrix multiplication \cite{vstoc12,legallmult}, but these algorithms are not faster than $mn$ for sparser graphs or when the weights can be large.

There are many important graph parameters that can be easily computed when all the distances are known. These include the \emph{eccentricity} of each vertex (the maximum length of a shortest path from the vertex to another vertex),  the graph \emph{diameter} (the maximum over all eccentricities), the \emph{radius} (the minimum over all eccentricities) and many more. These parameters are of particular importance in the analysis of social networks (e.g. \cite{albertdiam}), but also in graphs generated for entities such as images and search queries (and web pages).

Unfortunately, there are no significantly faster algorithms to compute these parameters than just solving APSP, and this is far from practical.
In many cases the analyzed networks are so large that even enumerating all pairs of vertices is prohibitively expensive. Thus, obtaining all pairwise distances is essentially impossible. For graph parameters, on the other hand, the output is a single number; in principle looking at all vertex pairs might not be necessary, and subquadratic time algorithms (in the number of vertices) might exist for sparse graphs (whereas quadratic time is necessary for APSP as this is the size of the output). The existence of such fast algorithms is an important, practically motivated question.

In recent years, much progress has been made in understanding the complexity of graph parameter computation. Results from fine-grained complexity give that even obtaining a $(3/2-\eps)$-approximation for graph diameter \cite{RV13} or radius \cite{AbboudWW16}, or a $(5/3-\eps)$-approximation of all eccentricities \cite{chechikdiam} (for $\eps>0)$ requires $n^{2-o(1)}$ time even in very sparse graphs, assuming the Strong Exponential Time Hypothesis (SETH) and related conjectures. Even stronger hardness results were obtained by Backurs et al.~\cite{diamstoc18}, altogether showing that most of the known algorithms for diameter, radius and eccentricities are conditionally optimal.

In addition to computing graph parameters in a static graph, a very natural goal is to maintain estimates of these parameters in a {\em dynamic} graph, where edges are inserted and deleted. In this setting, we would like to have a fast algorithm which preprocesses the given graph, and builds a data structure which can support edge updates efficiently and can answer queries about the parameter of interest in the current state of the data structure. This dynamic version of the problems is even more practically motivated, as real networks are naturally dynamic.

Unfortunately, the state of the art of dynamic algorithms for graph parameters such as Diameter is somewhat disappointing. The best known dynamic algorithms either just use the best known dynamic algorithms for APSP, or recompute the parameter estimate from scratch after each update. This leads to the following bounds:

(1) Demetrescu and Italiano~\cite{DemetrescuI04} obtained a fully dynamic exact APSP algorithm with an amortized update time of $\tilde{O}(n^2)$ and $O(1)$ query time; this is the best exact dynamic algorithm for the graph parameters as well. Abboud and Vassilevska W. \cite{AbboudW14} showed that under SETH, any $(4/3-\eps)$-approximation fully dynamic algorithm for diameter (for $\eps>0$) requires $n^{2-o(1)}$ amortized update or query time even in sparse graphs. Thus the APSP approach is conditionally optimal for fully dynamic ($4/3-\eps$)-approximate diameter algorithms. It is unclear however whether a $4/3$-approximation with better update time is possible, and whether the APSP bounds are best for Radius.

(2) By recomputing the parameter estimates after each query, one can maintain a 2-approximation for Diameter in directed graphs and Radius in undirected graphs in worst case $O(m+n)$ time per update, and a $(2+\eps)$-approximation for all Eccentricities in directed graphs for all $\eps$ in $\tilde{O}(m/\eps)$ time per update using the algorithm of Backurs et al. \cite{diamstoc18}. One can also maintain a $3/2$-approximation for Diameter and Radius, and a $5/3$-approximation of all Eccentricities in worst case time $\tilde{O}(m^{3/2})$ per update using the algorithms of \cite{RV13,chechikdiam}\footnote{An $\tilde{O}(m\sqrt n)$ time running time also exists if a slight additive error is allowed.}. More algorithms follow from the static results of Cairo, Grossi and Rizzi \cite{cgr}. Can any of these algorithms be improved or are they conditionally optimal? The only related lower bounds here are (a) by Henzinger et al.~\cite{HenzingerKNS15} which showed that under the Online Matrix Vector hypothesis (OMv), any fully dynamic Diameter algorithm that achieves a $(2-\eps)$-approximation for undirected {\em weighted} graphs, or any finite approximation in directed graphs needs $n^{0.5-o(1)}$ amortized update time and (b) by Henzinger et al.~\cite{HenzingerLNV17} which proved under the combinatorial Boolean Matrix Multiplication conjecture any fully dynamic Diameter or Eccentricity algorithm that achieves a $(4/3 - \eps)$-approximation in undirected {\em unweighted} graphs with $n^{3-o(1)}$ preprocessing time requires $n^{2 - o(1)}$ update or query time (and the same result for undirected {\em weighted} graph using the APSP conjecture). While these results give some limitation, they are far from tight.

The first contributions of our paper are strong conditional lower bounds for fully dynamic graph parameter estimation.
Our first result is a strengthening of the conditional lower bound for Diameter of \cite{AbboudW14}: we increase the approximation ratio from $(4/3-\eps)$ to $(3/2-\eps)$.

\begin{theorem}
Under SETH, every fully dynamic $(3/2-\eps)$-approximation algorithm for Diameter with polynomial preprocessing time requires $n^{2-o(1)}$ amortized update or query time in the word-RAM model of computation with $O(\log n)$ bit words, even for dynamic undirected unweighted graphs that are always sparse.
\end{theorem}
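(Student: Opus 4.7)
My plan is to prove this theorem by replacing the static gadget in the $(4/3-\eps)$-hardness reduction of Abboud--Vassilevska W.~\cite{AbboudW14} with the tighter ``diameter $2$ versus $3$'' construction of Roditty--Vassilevska W.~\cite{RV13}. Both reductions start from Orthogonal Vectors (OV) under SETH, and the~\cite{AbboudW14} argument uses the static gadget essentially as a black box exhibiting (i) $O(n)$ vertices and $n^{1+o(1)}$ edges, (ii) a constant-gap diameter threshold separating OV YES from OV NO instances, and (iii) OV-dependent edges that can be toggled independently of a fixed skeleton. Swapping in a gadget whose gap is $\{2,3\}$ rather than $\{3,4\}$ should then upgrade the approximation ratio of the lower bound from $(4/3-\eps)$ to $(3/2-\eps)$ while preserving both the $n^{2-o(1)}$ amortized bound and the sparsity guarantee.

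The static $(3/2)$-gap gadget I would use is the standard one from~\cite{RV13}. Given an OV instance with $|A| = |B| = n$ vectors in $\{0,1\}^d$ for $d = n^{o(1)}$, build the undirected unweighted graph $H_{A,B}$ on vertex set $A \cup B \cup [d] \cup \{s,t\}$ with edges $\{s,t\}$, $\{s,x\}$ for every $x \in A \cup [d]$, $\{t,x\}$ for every $x \in B \cup [d]$, and $\{a,i\}$ (resp.\ $\{b,i\}$) whenever $a_i=1$ (resp.\ $b_i=1$). A short case analysis on endpoint types shows that every vertex pair is at distance at most $2$ in $H_{A,B}$ \emph{except} that pairs $(a,b) \in A \times B$ are at distance exactly $2$ if $a \cdot b \neq 0$ (via a shared $1$-coordinate) and distance exactly $3$ if $a \cdot b = 0$ (via the path $a\text{-}s\text{-}t\text{-}b$). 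Thus $\mathrm{diam}(H_{A,B}) \in \{2,3\}$, equal to $3$ iff $(A,B)$ contains an orthogonal pair; the graph has $O(n)$ vertices and $n^{1+o(1)}$ edges, so it is sparse, and any $(3/2-\eps)$-approximation to its diameter determines the OV answer.

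I would then install $H_{A,B}$ into the dynamic adversarial schedule of~\cite{AbboudW14}: preprocess a small OV-independent skeleton, and then toggle the vector--coordinate edges in phases, querying the diameter once per phase. Because~\cite{AbboudW14}'s total-work accounting depends only on the sparsity of the gadget, the constant-gap property, and the ability to reset the OV-dependent edges between phases --- all of which are satisfied by $H_{A,B}$ --- their argument would transfer to give the same $n^{2-o(1)}$ amortized update/query lower bound under SETH, now with approximation threshold $3/2-\eps$ in place of $4/3-\eps$. Note also that after every update the graph contains $O(n)$ vertices and $n^{1+o(1)}$ edges, so the ``always sparse'' promise of the statement continues to hold.

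The main obstacle I anticipate is ensuring that the~\cite{AbboudW14} schedule, which was originally designed around a $\{3,4\}$-gap gadget, does not implicitly exploit the larger gap (for example by tolerating off-by-one slack in distance computations that would spoil a $\{2,3\}$ gap). I would therefore go through each phase of the schedule and verify, using the case analysis above, that the diameter of the current graph is still exactly $2$ or $3$ according to the OV sub-instance installed in that phase, ruling out spurious distances from residual edges left by prior phases. Since the OV-dependent edges live entirely on the bipartite vector--coordinate side, and the skeleton $\{s,t\} \cup [d]$ together with the hub incidences $s\text{-}(A\cup[d])$ and $t\text{-}(B\cup[d])$ is never touched by the toggles, this verification is essentially edge-accounting rather than a conceptual difficulty.
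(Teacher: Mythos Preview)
Your proposal has a genuine gap. The \cite{AbboudW14} dynamic schedule does not take an arbitrary $2$-OV gadget as a black box; it is a reduction from (unbalanced) $3$-OV. Two of the three vector sets $U,V$ (each of size $N^{\delta}$) are encoded into the graph, and the third set $W$ (of size $N^{1-2\delta}$) is what supplies the phases: in the $w$-th phase one inserts only the $\tilde O(1)$ edges $c_U\text{--}c_V$ with $w[c]=1$, queries, and undoes. This third set is precisely what lets the argument tolerate \emph{arbitrary polynomial} preprocessing: the graph has only $\tilde O(N^{\delta})$ vertices, so preprocessing costs $N^{\delta t}=o(N)$ once $\delta<1/t$, and the $N^{1-2\delta}$ phases at $n^{2-\eps'}=N^{\delta(2-\eps')}$ each total $N^{1-\delta\eps'}=o(N)$, contradicting $3$-OV hardness. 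Your gadget $H_{A,B}$ encodes \emph{both} OV sets in the graph and leaves nothing to iterate over; if you build it once and query, preprocessing time $n^t$ with $t>2$ already exceeds the $2$-OV bound and there is no contradiction. If instead you encode only $A$ and install one $b\in B$ per phase, the total dynamic cost is $|B|\cdot \tilde O(1)\cdot |A|^{2-\eps'}$, which for balanced or unbalanced $2$-OV with $|A|\cdot|B|=N$ is $N\cdot|A|^{1-\eps'}\ge N$ whenever $\eps'\le 1$, so again no contradiction for the small $\eps'$ you need to rule out.

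A second issue is that, once you move to the correct $3$-OV staging, the clean diameter-$2$-versus-$3$ gap of \cite{RV13} does not survive. With $U$ and $V$ on opposite sides of a split coordinate layer whose internal $c_U\text{--}c_V$ edges depend on $w$, the non-orthogonal $u$-to-$v$ distance is already $3$, and the orthogonal case forces a detour through some other $u'$ or $v'$; the resulting ratio is strictly below $3/2$. The paper fixes this by replacing the relevant unit edges with length-$a$ paths and adding two hubs, which amplifies the two cases to diameters $4a+1$ and $6a+1$; choosing $a$ large enough then makes the ratio exceed $3/2-\eps$. So both the $3$-OV staging and the path-amplification are needed; the \cite{RV13} gadget does not transplant as a black box.
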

The same limitation applies for fully dynamic $(5/3-\eps)$-approximation algorithms for Eccentricities with polynomial preprocessing time, and for fully dynamic $(3/2-\eps)$-approximation algorithms for Radius with polynomial preprocessing time, under the related Hitting Set hypothesis.

These conditional lower bounds imply that the $\tilde{O}(m^{3/2})$ time estimation algorithms that recompute the answer from scratch are optimal in the sense that any improvement of the approximation factor causes the update time to grow to $n^2$, and Demetrescu and Italiano's algorithm achieves $\tilde{O}(n^2)$ update time even for the exact maintenance of APSP. 

We also show that recomputing a $2$-approximation from scratch in linear time is close to optimal under SETH. 
\begin{theorem}
Under SETH, any fully dynamic algorithm with polynomial preprocessing time that can maintain for $\eps>0$ any of the following in an $n$ node, $m$-edge undirected unweighted graph requires either $m^{1-o(1)}$ amortized update or query time, even when $m=\tilde{O}(n)$:
\begin{itemize}
\item a $(2-\eps)$-approximation of the eccentricity of a fixed vertex, or
\item a $(2-\eps)$-approximation of the Radius, or
\item a $(2-\eps)$-approximation of the Diameter.
\end{itemize}
\end{theorem}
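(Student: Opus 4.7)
The plan is to reduce from Orthogonal Vectors (OV): given sets $A,B\subseteq\{0,1\}^d$ with $|A|=|B|=n$ and $d=\omega(\log n)$, decide whether there exist $(a,b)\in A\times B$ with $a\cdot b=0$; under SETH this requires $n^{2-o(1)}$ time. I will show that a fully dynamic $(2-\epsilon)$-approximation algorithm with polynomial preprocessing and amortised update/query time $m^{1-\delta}$ on a sparse graph ($m=\tilde O(n)$) would yield an $n^{2-\Omega(\delta)}$-time OV algorithm, contradicting SETH.

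The reduction constructs a base graph $G_B$ that encodes $B$ and has $\tilde O(n)$ vertices and edges: a designated vertex $v$, coordinate vertices $c_1,\ldots,c_d$, a vector vertex $x_b$ for every $b\in B$, and a small number of auxiliary gadget vertices (pendants/hubs). The fixed edges encode $B$ via the $c_i$--$x_b$ incidence ($b[i]=1$), plus a few gadget edges. For each query vector $a\in A$, I install $a$ using $O(d)=\mathrm{polylog}(n)$ edge updates --- toggling the $v$--$c_i$ edges so that this edge is present iff $a[i]=1$ --- and then I issue a single eccentricity query at $v$ (respectively a single radius or diameter query). The gadget is designed so that after the updates, $\mathrm{ecc}(v)=D$ if no $b\in B$ is orthogonal to $a$, and $\mathrm{ecc}(v)\ge 2D$ if some $b\in B$ is orthogonal to $a$. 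A $(2-\epsilon)$-approximation then returns a value strictly below $2D$ in the first case and at least $2D$ in the second, thereby deciding OV for $a$.

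The delicate step is engineering the factor-$2$ gap. The textbook OV-to-distance construction only produces a $3/2$ gap ($\mathrm{ecc}(v)\in\{2,3\}$) because coordinate vertices with $a[i]=0$ are at distance $3$ from $v$ in the ``no-OV'' case, capping the gap. To amplify to factor~$2$ I augment the construction: attach a pendant target $t_b$ to each vector vertex $x_b$ (so distances to $x_b$ are inherited with $+1$), and insert a hub structure around $v$ so that every auxiliary vertex (coordinate vertex, hub, and every non-orthogonal $t_b$) lies within distance~$2$ of $v$, while an orthogonal $t_{b^*}$ is forced to distance exactly~$4$. The length-$4$ upper bound on the OV-case detour is certified by appending to $B$ a universal all-ones vector, which is non-orthogonal to any $a$ with at least one $1$. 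This hub-and-pendant arrangement is the main technical obstacle: one must prove simultaneously that the no-OV eccentricity is exactly $2$ (no vertex escapes the $2$-ball of $v$) and that the OV eccentricity is exactly $4$ (no orthogonal target is farther than $4$).

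The radius and diameter versions follow by standard augmentations. For radius, attach depth-controlling pendants to every non-$v$ vertex so that $v$ is uniquely central and $\mathrm{rad}(G)=\mathrm{ecc}(v)$; for diameter, mirror $v$ by a twin $v'$ with identical adjacencies so that $\mathrm{diam}(G)\ge\mathrm{ecc}(v)$, with the matching upper bound argued from the layered structure. To handle \emph{any} polynomial preprocessing time $n^c$, I apply the standard blocking trick: partition $B$ into $n/s$ blocks of size $s$, run the data structure afresh on each block's $G_B$, paying $s^c$ preprocessing plus $\tilde O(n s^{1-\delta})$ work per block; balancing $s$ yields total OV running time $n^{2-\Omega(\delta)}$ for every constant $c$ and every $\delta>0$, contradicting SETH.
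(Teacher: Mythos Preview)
Your high-level reduction strategy---encode one OV side statically, install each vector of the other side via $\tilde O(d)$ edge toggles, issue one query, and absorb arbitrary polynomial preprocessing by a blocking/unbalancing argument---matches the paper. The paper uses unbalanced $2$-OV with $|U|=N^{\delta}$ encoded in the graph and $|V|=N^{1-\delta}$ iterated in stages, which is equivalent to your blocking trick; and it handles radius and diameter by a left/right mirror around the query vertex $s$, analogous to your ``twin $v'$'' and ``force $v$ to be central'' ideas.

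The genuine gap is the factor-$2$ amplification. The hub-and-pendant gadget you sketch cannot realise the claimed $2$-versus-$4$ dichotomy. Concretely: if a hub $h$ adjacent to $v$ and to every $c_i$ puts all coordinate vertices within distance $2$ of $v$, then for an orthogonal $b^*$ you get $d(v,x_{b^*})\le 3$ via $v\!-\!h\!-\!c_j\!-\!x_{b^*}$ and hence $d(v,t_{b^*})\le 4$; but in the no-OV case each pendant $t_b$ sits at distance $d(v,x_b)+1=3$ (since $x_b$ is reached via $v\!-\!c_i\!-\!x_b$), not $2$. That is a $4/3$ gap, not $2$. Forcing the non-orthogonal $t_b$ down to distance $2$ would require $d(v,x_b)=1$, i.e.\ a fixed edge $v\!-\!x_b$ independent of $a$, which erases the gap entirely. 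More generally, any static shortcut that tames the nuisance vertices $c_j$ with $a[j]=0$ simultaneously shortens the detour to an orthogonal $x_{b^*}$; no constant-radius gadget separates the two.

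The paper's fix is \emph{path subdivision} rather than hubs and pendants. It keeps, for every coordinate $c$, a permanent length-$2a$ path from $s$ to $c$, and additionally toggles a single edge $(s,c)$ during the stage when the current query vector has a $1$ in coordinate $c$; the $c$-to-$u^a$ and $u^a$-to-$u^0$ links are also length-$a$ paths. In the no-OV case the short route $s\to c\to u^a\to u^0$ has length $1+a+a=2a+1$, while in the OV case every route from $s$ to the orthogonal $u^0$ must either use a permanent length-$2a$ path or detour through some other $u'^a$, costing at least $4a$. Since $4a/(2a+1)\to 2$, taking $a=\Theta(1/\eps)$ defeats any $(2-\eps)$-approximation. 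This subdivision parameter $a$ is precisely the missing amplification device in your proposal; once you insert it, the remainder of your argument (blocking, mirroring for diameter, centrality for radius) goes through essentially as in the paper.
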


This result significantly strengthens the OMv based lower bound of \cite{HenzingerKNS15}: the update time lower bound is now linear as opposed to $\sqrt n$, the result holds for unweighted graphs as well, it also holds for Radius and single-node eccentricity, and it has implications for partially dynamic algorithms with worst case time bounds, unlike the one of \cite{HenzingerKNS15}. Furthermore, the lower bound for the eccentricity of a fixed vertex is tight in the sense that  one can simply recompute the answer exactly from scratch in time $O(m)$. 

Much stronger lower bounds are possible for {\em directed} graphs. Our first hardness result for directed graphs is that under SETH and the Hitting Set hypothesis, respectively, nearly quadratic time is needed for Eccentricities and Radius, even for $(2-\eps)$-approximation. (Recall that for undirected graphs we could only show this for $(3/2-\eps)$-approximate Radius and $(5/3-\eps)$-approximate Eccentricities.) This means that for directed graphs, recomputing a $2$-approximation from scratch (in linear time) after each update is very much optimal -- for any better approximation one might as well use the exact dynamic APSP algorithms.

\begin{theorem}
Every fully dynamic algorithm with polynomial preprocessing time for $(2-\eps)$-approximate (for $\eps>0$) Eccentricities (under SETH) or Radius (under a version of the Hitting Set Hypothesis) in directed, unweighted graphs with $n$ vertices and $m=\tilde{O}(n)$ edges requires amortized update or query time $m^{2-o(1)}$.
\end{theorem}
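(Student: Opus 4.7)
I would prove both lower bounds via a single-query reduction from the corresponding hard problem: Orthogonal Vectors (OV) under SETH for the Eccentricities case, and the $\exists\,\forall$ version of Hitting Set for the Radius case. Both problems, on inputs of $N$ vectors of dimension $d = \log^{O(1)} N$, require $N^{2-o(1)}$ time under the respective hypothesis. The plan is to build in polynomial time a sparse directed graph $G$ with $n = \tilde O(N)$ vertices and $m = \tilde O(N)$ edges that encodes the instance, preprocess the dynamic algorithm on $G$, and then decide the instance using at most $n^{o(1)}$ edge updates followed by a single eccentricity or radius query. Since the reduction uses only $n^{o(1)}$ operations, any amortized update-or-query time $T(n)$ satisfies $\poly(n) + n^{o(1)} \cdot T(n) \ge N^{2-o(1)}$; choosing the graph size so that $\poly(n) \ll N^{2-o(1)}$ then yields $T(n) \ge n^{2-o(1)} = m^{2-o(1)}$.

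The base gadget is the classical tripartite OV graph: vertices $u_a$ for $a \in A$, $v_b$ for $b \in B$, and $c_i$ for each coordinate $i \in [d]$, with directed edges $u_a \to c_i$ when $a[i]=1$ and $c_i \to v_b$ when $b[i]=1$. This ensures $\mathrm{dist}(u_a, v_b) = 2$ iff $a \cdot b \neq 0$ and is infinite otherwise. To replace the infinite distance by a finite one in a ratio strictly larger than $2$ to the alternative distance of $2$, I attach a shared backup path of length $5$: auxiliary vertices $q_1, \ldots, q_4$ with edges $u_a \to q_1$ for every $a$, the chain $q_1 \to q_2 \to q_3 \to q_4$, and $q_4 \to v_b$ for every $b$, forcing $\mathrm{dist}(u_a, v_b) = 5$ whenever $a \cdot b = 0$. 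For Radius under the Hitting Set hypothesis this already suffices: $\mathrm{Radius}(G) = \min_a \max_b \mathrm{dist}(u_a, v_b)$ equals $2$ iff some $a^* \in A$ hits every $b \in B$, and is at least $5$ otherwise, so a single $(2-\eps)$-approximate radius query decides Hitting Set. For Eccentricities under SETH, the same graph has Diameter $2$ when no orthogonal pair exists and at least $5$ otherwise, so a single query to the eccentricities data structure---either a maximum-eccentricity query, or the eccentricity of a designated probe vertex $s$ engineered (via $s \to u_a$ for all $a$ together with careful padding) so that $\mathrm{ecc}(s)$ realizes the Diameter of the OV subgraph---decides OV.

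The main technical obstacle is realizing the above dichotomy while keeping the graph sparse and simultaneously ensuring that the backup gadget does not inflate the queried vertex's eccentricity. A naive attachment of the length-$4$ chain already forces $\mathrm{dist}(u_a, q_4) = 4$ for every $a$, so $\mathrm{ecc}(u_a) \ge 4$ regardless of the OV answer, collapsing the $2$ vs $5$ ratio needed for a $(2-\eps)$-approximation. My plan is to route the backup through dedicated shadow copies $u'_a$ and $v'_b$ that are reachable only along the intended critical traversal, and to add carefully chosen short-cut edges from the probe vertex so that every padding interior vertex contributes distance at most $2$ to the probe's eccentricity, while no extraneous shortcut shrinks the length-$5$ backup in the orthogonal case. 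A secondary issue is that the reduction is only meaningful when the algorithm's polynomial preprocessing is truly smaller than $N^{2-o(1)}$; this is enforced by picking the OV instance size so $n = \tilde\Theta(N)$ and $\poly(n) = o(N^{2-o(1)})$. Once the construction is verified to satisfy sparsity, the distance dichotomy, and the non-inflation property simultaneously, the theorem is immediate: a single query after polynomial preprocessing solves OV (respectively Hitting Set) in $\poly(n) + T(n)$ time, forcing $T(n) \ge n^{2-o(1)}$.
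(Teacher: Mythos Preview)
Your proposal has a fatal gap in the handling of preprocessing time. You build a graph with $n=\tilde\Theta(N)$ vertices, preprocess it once, and then decide the OV/HS instance with $n^{o(1)}$ updates and a single query. But the theorem must rule out algorithms with \emph{arbitrary} polynomial preprocessing time $p(n)=O(n^t)$. For any fixed $t\ge 2$, your total cost bound reads $n^t + n^{o(1)}\cdot T(n)\ge N^{2-o(1)}$, and since $n^t=\tilde\Theta(N^t)\ge N^2$ the inequality is satisfied by the preprocessing term alone, giving no constraint whatsoever on $T(n)$. Your sentence ``this is enforced by picking the OV instance size so $n=\tilde\Theta(N)$ and $\poly(n)=o(N^{2-o(1)})$'' is self-contradictory: if $n=\tilde\Theta(N)$ then $n^t$ is not $o(N^{2-o(1)})$ for $t\ge 2$.

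The paper resolves this by reducing from \emph{three}-set problems (unbalanced $3$-OV under SETH, unbalanced $3$-HS for Radius) rather than two-set problems. With $|U|=|V|=N^\delta$ and $|W|=N^{1-2\delta}$, the hard instance still requires $N^{1-o(1)}$ time, but the graph encodes only $U$ and $V$ and hence has size $\tilde O(N^\delta)$. Setting $\delta=(1-\eps')/t$ makes preprocessing $(N^\delta)^t=N^{1-\eps'}$ genuinely sublinear in $N$. The third set $W$ is then handled dynamically: one stage per $w\in W$, each stage inserting/deleting $\tilde O(1)$ edges and asking one query, for $N^{1-2\delta}$ stages in total. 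This amortization over many stages is precisely what forces $u(n)+q(n)\ge N^{2\delta-o(1)}=n^{2-o(1)}$. A single-query reduction from $2$-OV cannot achieve this; the staged structure is the missing idea.

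As a secondary point, your gadget for Eccentricities is not worked out: you correctly observe that the backup chain inflates every $u_a$'s eccentricity to at least $4$, destroying the $2$-vs-$5$ gap, and you gesture at ``shadow copies'' without showing they simultaneously preserve the gap and keep every non-probe vertex's eccentricity controlled. The paper's construction uses directed length-$a$ path gadgets and two hub vertices $x,y$ so that in the non-orthogonal case every $u^a$ has eccentricity at most $a+3$, while in the orthogonal case the witnessing $u^a$ has eccentricity at least $2a+3$; the parameter $a$ is then chosen large enough to beat the $(2-\eps)$ ratio. But even a correct gadget would not salvage your argument without the three-set staged reduction.
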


Surprisingly, we also show conditionally that no {\em finite}
\footnote{A finite approximation algorithm is defined as an algorithm that can distinguish whether the value is finite or infinite.}
approximation can be maintained in {\em sublinear} time. Henzinger et al.~\cite{HenzingerKNS15}, building upon \cite{AbboudW14}, showed that any finite approximation for Diameter in directed graphs requires $m^{0.5-o(1)}$ time under the OMv Hypothesis. We strengthen the lower bound to linear, using a very natural hypothesis on the complexity of $k$-Cycle.

All known algorithms for detecting $k$-cycles in sparse directed graphs with $m$ edges run at best in time $m^{2-c/k}$ for various small constants $c$ \cite{YuZw04,AlYuZw97,patternscycles19}, even if you use powerful tools such as fast matrix multiplication. A natural hypothesis completely consistent with the state of the art of cycle detection is that one needs $m^{2-f(k)-o(1)}$ time to find a $k$-cycle, for some continuous (over the reals) $f(k)$ that goes to $0$ as $k$ goes to infinity. Let us call this the $k$-Cycle Hypothesis. We obtain:

\begin{theorem}
Under the $k$-Cycle Hypothesis, any fully dynamic algorithm with polynomial preprocessing time that can maintain a {\em finite} approximation for any of the following in an $n$ node, $m=\tilde{O}(n)$-edge directed unweighted graph requires either $m^{1-o(1)}$ amortized update or query time:
\begin{itemize}
\item the eccentricity of a fixed vertex, or
\item the Radius, or
\item the Diameter.
\end{itemize}
\end{theorem}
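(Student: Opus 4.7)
The plan is to reduce $k$-Cycle in sparse directed graphs (with $m = \tilde{O}(n)$ edges) to the fully dynamic problem via $\tilde{O}(m)$ dynamic operations starting from an essentially empty graph, so that any $m^{1-\delta}$ amortized per-op bound would decide $k$-Cycle in $\tilde{O}(m^{2-\delta})$ total time, contradicting the hypothesis once $k$ is chosen large enough that $f(k) < \delta$. The first step is standard color-coding: over $O(\log n)$ random colorings $c : V(G) \to [k]$, any fixed $k$-cycle becomes rainbow with constant probability, so it suffices to find a rainbow $k$-cycle in the $k$-layered graph $L = V_1 \sqcup \cdots \sqcup V_k$---that is, a closed walk $v_1 \in V_1, \ldots, v_k \in V_k$ with $(v_i, v_{i+1}) \in E$ for $i < k$ and closing edge $(v_k, v_1) \in E$.

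For the fixed-vertex eccentricity lower bound, introduce a source $s$ and a broadcast vertex $t$. The static base graph $H_0$ consists of the layered edges of $L$ together with broadcast edges $t \to z$ for every $z \in V(H_0) \setminus \{t\}$, inserted up front via $\tilde{O}(m)$ dynamic insertions. Then, for each $v \in V_1$ in turn, insert $s \to v^{(1)}$ and the edges $u^{(k)} \to t$ for each $u \in N_G^-(v)$, query whether $s$'s eccentricity is finite, and delete those edges before moving on. The correctness claim is that the eccentricity is finite iff $v$ lies on a rainbow $k$-cycle: if the cycle $v = v_1, \ldots, v_k$ exists, the path $s \to v^{(1)} \to v_2^{(2)} \to \cdots \to v_k^{(k)} \to t$ together with $t$'s static broadcast carries $s$ to every vertex; conversely, if no such cycle exists, then $s$'s only out-edge enters $v^{(1)}$ and layered edges only advance through $V_2, \ldots, V_k$, so $s$ cannot reach $t$ (nor any $v'^{(1)}$ with $v' \neq v$), forcing eccentricity $= \infty$. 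Summed over $v \in V_1$, the update count telescopes to $O(n + \sum_v |N_G^-(v)|) = O(m)$ with $O(n)$ queries.

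For Diameter, symmetrically add the static edges $z \to s$ for every $z \in V(H_0) \setminus \{s\}$. Now the claim is that $H$ is strongly connected---equivalently, its diameter is finite---iff $v$ lies on a rainbow $k$-cycle: in the cycle case every pair $a, b$ is connected via $a \to s \to t \to b$ once the dynamic gadget bridges the $s \to t$ bottleneck, while in the no-cycle case $s$ still fails to reach $v'^{(1)}$ for $v' \neq v$. The operation count is unchanged.

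The main obstacle is Radius, because the broadcast vertex $t$ used above reaches every other vertex statically and would make Radius trivially $1$. The fix is to replace the static broadcast by a ``private target'' vertex $P$ whose only in-edges are the dynamic edges $u^{(k)} \to P$ for $u \in N_G^-(v)$, and to design the remaining auxiliary structure so that (i)~$s$ still reaches every non-$P$ vertex unconditionally, (ii)~every vertex other than $s$ fails to reach $P$, making $s$ the sole potential Radius witness, and (iii)~the shortcut $s \to V_k \to P$ is blocked, so that $s$ reaches $P$ only through the genuine cycle gadget. Arranging (i)--(iii) simultaneously while keeping the graph sparse is the crux of the Radius case and the main technical difficulty of the proof.
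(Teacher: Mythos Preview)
Your reductions for fixed-vertex eccentricity and Diameter are sound, and building up from an empty graph is a clean way to absorb arbitrary polynomial preprocessing. But the Radius case is not a proof: the requirements (i)--(iii) you list are mutually inconsistent. If $s$ reaches every non-$P$ vertex unconditionally (condition~(i)), then in particular $s$ reaches each $u^{(k)}$ with $u\in N_G^-(v)$, and those vertices have direct dynamic edges into $P$; so the ``shortcut $s\to V_k\to P$'' you want blocked in~(iii) is forced open, and~(ii) is already false since those $u^{(k)}$ reach $P$ in one step. You have correctly located the obstacle but not removed it, and the proposal ends by naming it as the ``main technical difficulty'' rather than resolving it.

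The paper sidesteps this entirely with a \emph{symmetric} construction that makes every eccentricity finite or every eccentricity infinite simultaneously, so that Diameter, Radius, and any fixed-vertex eccentricity are all decided by the same query. Take two disjoint copies of a $(k{+}1)$-layered blow-up of $G$, a left copy and a right copy, together with a shared middle vertex $s$ and terminals $t_{left},t_{right}$. Statically give $t_{left}$ an edge to every left vertex and every left vertex an edge to $s$ (symmetrically on the right). In the stage for $v$ add only the four edges $s\to v^{0}_{left}$, $s\to v^{0}_{right}$, $v^{k}_{left}\to t_{left}$, $v^{k}_{right}\to t_{right}$. If $v$ lies on a $k$-cycle, then $s$ reaches both terminals, the terminals broadcast to their halves, and every vertex already reaches $s$; the graph is strongly connected and all eccentricities are finite. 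If not, $s$ cannot reach $t_{left}$; but any vertex must route through $s$ to cross sides, so no vertex reaches $t_{left}$ and every eccentricity is infinite. This single gadget handles Radius for free, uses $O(1)$ dynamic edges per stage rather than $O(\deg^-(v))$, and avoids color coding altogether.
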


All known approaches to estimating graph proximity parameters such as the Diameter, at the very least require maintaining approximate distances from a single node, up to some distance. The conditional lower bounds above say that even if we only want to maintain an estimate of the largest distance from a fixed node, and even if that distance is never more than a constant, we still need linear update time. Thus, to have better than $2$ approximations of our undirected distance parameters or any finite approximation in the directed case that can be maintained in sublinear time we probably need to abandon our need for fully dynamic algorithms. We thus turn to {\em partially} dynamic algorithms that handle either only edge insertions (incremental) or only edge deletions (decremental).

Our conditional lower bounds for the fully dynamic setting also apply to incremental and decremental algorithms that have {\em worst case} update and query time guarantees. This is due to the nature of our reductions: they all produce an initial graph on which we perform update stages that only insert or only delete (we can choose which) a small batch of edges, ask a query and undo the changes just made, returning to the initial graph. An incremental/decremental algorithm can be used to implement such reductions by performing the deletions/insertions by rolling back the data structure. Because of this, we have very strong worst case lower bounds, and it makes sense to focus on amortized guarantees instead.

The strong conditional lower bounds in the static case, imply strong limitations for partially dynamic algorithms as well.
For {\em undirected} graphs, these limitations are as follows: Due to \cite{RV13,diamstoc18,chechikdiam}, under SETH, every incremental and decremental algorithm for Diameter in $n$ node undirected unweighted graphs requires total time at least $m^{3/2-o(1)}$ to maintain a $(8/5-\eps)$-approximation, and at least $m^{2-o(1)}$ total time to maintain a $(3/2-\eps)$-approximation for $\eps>0$ under $m=\tilde{O}(n)$ insertions or deletions. For Eccentricities, the static conditional lower bounds are slightly stronger. For partially dynamic algorithms they imply that under SETH, for every $k\geq 1$, maintaining a $((3k+2)/(k+2)-\eps)$-approximation for $\eps>0$ requires total time $m^{1+1/k}$ for $m=\tilde{O}(n)$ insertions or deletions.
For Radius, they just imply that under the Hitting Set hypothesis \cite{survey,AbboudWW16} maintaining a $(3/2-\eps)$-approximation requires total time $m^{2-o(1)}$ even in a sparse graph. 

For {\em directed} graphs, there are stronger lower bounds: maintaining a $(2-\eps)$-approximation for Radius and Eccentricities requires total time $m^{2-o(1)}$ under Hitting Set and SETH, respectively~\cite{AbboudWW16, diamstoc18}. 

The incremental lower bounds directly follow from the static ones by starting from an empty graph and inserting edges until we reach the graph from the static construction. The decremental lower bounds hold since the static lower bound instances are all subgraphs of the same global graph, 
 independent of the SAT/Hitting Set instance that the reduction is trying to solve; thus we start with the global graph and delete edges until reaching the graph from the static construction.

A natural question is, are these conditional lower bounds tight? Can one create partially dynamic algorithms that can achieve the same total runtime as the known static approximation algorithms? We give positive answers to these questions by developing new partially dynamic algorithms that are essentially optimal. Our algorithms are actually very efficient reductions to incremental and decremental single source shortest paths (SSSP), so that any improvement over dynamic SSSP would improve our parameter estimation algorithms. 

Let $D_0$ and $D_f$ be the initial and final values of the diameter, respectively. Let $T_{inc}(n,m,k,\epsilon)$ (resp., $T_{dec}(n,m,k,\epsilon)$) be the total time of an incremental (decremental) approximate SSSP algorithm from source $u$ that maintains an estimate $d'(u,v)$ for all $v$ such that if $d(u,v)\leq k$ then $(1-\epsilon)d(u,v)\leq d'(u,v)\leq d(u,v)$.
For directed graphs we assume that the approximate SSSP algorithm works in directed graphs, and for undirected graphs, the SSSP algorithm only needs to work in undirected graphs. Our black-box reductions can be summarized in the theorem below.

\begin{theorem}\label{thm:full} 
There is a Las Vegas randomized algorithm for incremental (resp., decremental) diameter in unweighted, directed graphs against an oblivious (resp., adaptive) adversary that given $\eps>0$, runs in total time $\tilde{O}(\max_{D_f\leq D'\leq D_0}\{T_{inc}(n,m,D',\eps)\frac{\sqrt{n/ D'}}{\eps^2}\})$ (resp., $\tilde{O}(\max_{D_0\leq D'\leq D_f}\{T_{dec}(n,m,D',\eps)\frac{\sqrt{n/ D'}}{\eps^2}\})$) with high probability, and maintains an estimate $\hat{D}$ such that $\frac{2(1-\eps)}{3}D-\frac{2}{3}\leq \hat{D}\leq D$ where $D$ is the diameter of the current graph. 
\end{theorem}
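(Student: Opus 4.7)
The plan is to dynamize the classical static $(3/2)$-approximation algorithm for diameter, in the variant that uses $\tilde\Theta(\sqrt{n/D}/\eps^2)$ BFS-like calls truncated at depth $\Theta(D)$. The basic schema, descending from the Aingworth--Chekuri--Indyk--Motwani and Roditty--Vassilevska Williams approaches, is: sample a set $S$ of random sources, run BFS from each, pick a pivot $w$ maximizing distance to $S$, take the $|S|$ closest vertices $N$ to $w$, run BFS from $\{w\}\cup N$ as well, and output the maximum observed distance. Crucially, every distance the analysis cares about is at most the current diameter $D$, so every BFS can be truncated at depth $\Theta(D)$ without loss. This allows us to substitute a dynamic $(1-\eps)$-approximate SSSP of threshold $k=\Theta(D)$ for each BFS, at a cost of $T_{inc}(n,m,D,\eps)$ (resp.\ $T_{dec}$) per source.

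Since $D$ is unknown and moreover changes over the update sequence, I would spawn $O(\log n)$ parallel instances, one for each geometric guess $D'\in\{1,2,4,\dots,n\}$. Instance $D'$ samples a fresh set $S_{D'}$ of size $\tilde\Theta(\sqrt{n/D'}/\eps^2)$, runs one dynamic approximate SSSP of depth $\Theta(D')$ from each source, maintains a pivot $w_{D'}$ and a near-neighbor set $N_{D'}$ of size $|S_{D'}|$ using those approximate distance estimates, and runs one further dynamic SSSP from each vertex in $\{w_{D'}\}\cup N_{D'}$. The global estimate $\hat D$ is read off the instance whose guess is within a factor two of the current estimated diameter. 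Monotonicity of $D$ in both partially dynamic models means the diameter traverses each geometric band at most once, so summing the per-instance totals $\sqrt{n/D'}/\eps^2\cdot T(n,m,D',\eps)$ over the $O(\log n)$ active bands, and absorbing the logarithm into $\tilde O$, yields the advertised total running time.

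For correctness I would transcribe the static case analysis with exact distances replaced by their $(1-\eps)$-approximations. Either some random source $s\in S_{D'}$ witnesses a true distance at least $2D/3$, whose dynamic SSSP from $s$ then reports at least $(1-\eps)\cdot 2D/3$; or else the pivot $w_{D'}$ together with an appropriate vertex $v\in N_{D'}$ does, again up to the same factor. Combining the cases establishes $\hat D\geq \frac{2(1-\eps)}{3}D-\frac{2}{3}$ with high probability, where the additive $-2/3$ absorbs the integer rounding used to cut a length-$D$ path into three thirds, and the $1/\eps^2$ blowup in $|S_{D'}|$ compensates for the $(1\pm\eps)$ distortion of the distance values used to select the pivot and its near-neighbors. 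The upper bound $\hat D\leq D$ is immediate because every reported value is a real shortest-path length in the current graph.

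The main obstacle is that $w_{D'}$ and $N_{D'}$ are functions of the randomized, continuously changing SSSP estimates, so naively they can be recomputed (and hence trigger many fresh SSSP initializations) adversarially often. I would handle this by refreshing them only when the current estimate of $D$ crosses into a new geometric band, which happens $O(\log n)$ times total and whose cost is absorbed into $\tilde O$. For the incremental setting this is compatible with an oblivious-adversary SSSP, giving the incremental bound. For the decremental setting, the outer algorithm's choice of pivot and near-neighbors depends on the SSSP's internal random bits, so I must plug in a dynamic approximate SSSP that is robust against an adaptive adversary; using such a routine as a black box yields the decremental bound. Verifying that the $(1\pm\eps)$-approximate ranking of near-neighbors of $w_{D'}$ still suffices to preserve the static case analysis, and checking that the adaptive composition goes through without corrupting the SSSP guarantees, are the two most delicate pieces of the proof.
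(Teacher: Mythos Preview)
Your high-level plan---dynamize the Roditty--Vassilevska Williams scheme and layer geometric guesses for $D'$---matches the paper's outer structure (the guessing is Lemma~\ref{lem:max}). But the heart of the proof, how to control the pivot $w$ and its neighbor set under updates, has a genuine gap.

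You propose to refresh $w_{D'}$ and $N_{D'}$ only when the diameter estimate crosses a geometric band. This does not suffice. Correctness at query time requires that $w$ still satisfy $d(w,S)>D'/3$, so that $B_{out}(w,D'/3)$ is small enough for $N$ to be a $\delta D'$-net of it. In the incremental setting, edge insertions can drive $d(w,S)$ down to $1$ while $D$ remains in the same band indefinitely; your frozen $w$ then fails the hypothesis of Case~2 and the $2D/3$ bound is lost. Band crossings of $D$ give no control whatsoever over how often this happens.

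The paper's mechanism is different and is the missing idea. It never tries to track the vertex farthest from $S$. It maintains the set $W=\{v:d'(v,S)>D'/3\}$, and \emph{any} $w\in W$ suffices for correctness. In the decremental setting distances only grow, so once $w\in W$ it stays there forever: one phase, no re-selection, and this is precisely why decremental tolerates an adaptive adversary (not the adaptive-composition reason you give). In the incremental setting $w$ is sampled uniformly at random from $W$; against an oblivious adversary, with probability at least $1/2$ half of $W$ leaves before $w$ does, so only $\tilde O(1)$ re-samplings occur. This random-pivot durability argument is what bounds the number of SSSP initializations.

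A secondary but related point: the paper does not maintain the $|S|$ closest vertices to $w$, which would be hard to keep stable. Instead it independently includes each vertex of $B_{out}(w,D'/3)$ in $S'$ with probability $\tilde\Theta(1/(\eps D'))$, and in the incremental case simply flips a coin for each vertex as it enters the ball. This sidesteps rank maintenance entirely and directly yields the $\tilde O(\sqrt{n/D'}/\eps)$ source count.
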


We obtain similar black-box reductions for nearly $(5/3+\eps)$-approximate Eccentricities and $(3/2+\eps)$-approximate Radius in undirected graphs.

Henzinger et al.~\cite{HenzingerKNFOCS14} obtained a ramdomized $(1+\eps)$-approximate decremental algorithm for SSSP in undirected unweighted graphs against an oblivious adversary with total expected update time $m^{1+o(1/\eps)}$.\footnote{The exact expected update time of this algorithm is $m^{1+O(\log^{5/4} ((\log n)/\epsilon)/\log^{1/4} n) } \log W$.} As an immediate corollary we obtain:

\begin{corollary}
There is a Las Vegas randomized algorithm for decremental diameter in unweighted, undirected graphs against an oblivious adversary that given $\eps>0$, runs in total time $m^{1+o(1/\eps)}\sqrt{n}/\eps^2$ in expectation, and maintains an estimate $\frac{2(1-\eps)}{3}D-\frac{2}{3}\leq \hat{D}\leq D$, where $D$ is the diameter of the current graph.
\end{corollary}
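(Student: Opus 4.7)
The plan is to obtain the corollary by plugging the decremental SSSP algorithm of Henzinger et al.~\cite{HenzingerKNFOCS14} into the black-box reduction of Theorem~\ref{thm:full}. Since the result of \cite{HenzingerKNFOCS14} maintains, against an oblivious adversary, a $(1+\eps)$-approximate estimate for the distance from a source to every vertex in an unweighted undirected graph, it in particular satisfies the specification of $T_{dec}(n,m,D',\eps)$ for every choice of the cutoff $D'$, and does so in total expected time $m^{1+o(1/\eps)}$ independent of $D'$. Thus we have the uniform bound $T_{dec}(n,m,D',\eps)\le m^{1+o(1/\eps)}$ for all $D'$ in the range $D_0\le D'\le D_f$.

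First I would substitute this bound into the expression $\tilde{O}\bigl(\max_{D_0\le D'\le D_f}\{T_{dec}(n,m,D',\eps)\sqrt{n/D'}/\eps^2\}\bigr)$ from Theorem~\ref{thm:full}. Since the SSSP cost does not depend on $D'$, the max is determined by the factor $\sqrt{n/D'}$, which is maximized when $D'$ is as small as possible. In the worst case $D'\ge 1$, giving $\sqrt{n/D'}\le\sqrt{n}$, and the overall total time is $\tilde{O}(m^{1+o(1/\eps)}\sqrt{n}/\eps^2)$. The $\tilde{O}$ polylogarithmic factors can be absorbed into the $m^{1+o(1/\eps)}$ factor, yielding the claimed bound $m^{1+o(1/\eps)}\sqrt{n}/\eps^2$.

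Finally I would transfer the approximation guarantee and the probabilistic guarantee directly from Theorem~\ref{thm:full}: the reduction produces a Las Vegas algorithm against an oblivious adversary achieving $\tfrac{2(1-\eps)}{3}D-\tfrac{2}{3}\le \hat D\le D$, and the expectation in the running time inherits from the expected running time of \cite{HenzingerKNFOCS14}. The only subtlety worth checking is that the adversary model matches (the reduction is designed for an oblivious adversary in the decremental case, as is \cite{HenzingerKNFOCS14}) and that the SSSP algorithm's approximation guarantee for all vertices meets the one-sided $(1-\eps)d(u,v)\le d'(u,v)\le d(u,v)$ requirement used in the definition of $T_{dec}$; this is a routine check since multiplying the standard $(1+\eps)$-approximation by a factor of $(1+\eps)^{-1}$ (or rescaling $\eps$) gives the required form.
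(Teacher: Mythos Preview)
Your approach is correct and matches the paper's, which simply states the result as an immediate corollary of plugging \cite{HenzingerKNFOCS14} into Theorem~\ref{thm:full}. One small slip: in the decremental case Theorem~\ref{thm:full} actually works against an \emph{adaptive} adversary (it is the incremental case that requires obliviousness), so the final corollary is against an oblivious adversary only because the SSSP subroutine of \cite{HenzingerKNFOCS14} requires it, not because of the reduction itself.
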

Due to lower bounds in the static setting described above, this result is conditionally optimal in terms of both running time and approximation factor except for a small loss in the approximation factor. A similar result hold for decremental undirected Radius with a conditionally essentially optimal approximation factor. A similar result also holds for Eccentricities, which is conditionally essentially optimal in terms of both running time and approximation factor.

For decremental algorithms in directed graphs and for incremental algorithms in undirected or directed graphs, the best known algorithms for SSSP up to distance $k$ are achieved by the Even and Shiloach Trees data structure \cite{EvenS81}, giving amortized update time $O(k)$.
\footnote{Although it is not published, it is likely that one can do better for undirected incremental graphs. In particular, it is widely believed that the known decremental SSSP algorithms for undirected graphs can be modified to the incremental setting with the same running time.}
Henzinger and King recognized that this data structure can be extended to directed graphs~\cite{HenzingerK95}. As a corollary we obtain:
\begin{corollary}
There is a Las Vegas randomized algorithm for incremental/decremental diameter in unweighted, directed graphs that given $\eps>0$, runs in total time $\tilde{O}(m\sqrt{nD_{\max}}/\eps^2)$ with high probability where $D_{\max}$ is the maximum diameter throughout the algorithm, and maintains an estimate $\hat{D}$ such that $\frac{2(1-\eps)}{3}D-1\leq \hat{D}\leq D$, where $D$ is the diameter of the current graph. The incremental algorithm works against an oblivious adversary and the decremental algorithm works against an adaptive adversary.
\end{corollary}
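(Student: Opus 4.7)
The plan is to derive the corollary as a direct instantiation of Theorem \ref{thm:full} with the Even-Shiloach (ES) tree as the partially dynamic SSSP subroutine.

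First, I would invoke the properties of the ES-tree \cite{EvenS81}: for a source $u$ and a distance threshold $k$, an ES-tree maintains \emph{exact} BFS distances from $u$ to every vertex within distance $k$ under edge deletions in total time $O(mk)$ (amortized $O(k)$ per deletion), and Henzinger and King \cite{HenzingerK95} extended this to directed graphs with the same bound. A symmetric data structure handles incremental updates with the same running time. Because ES-trees return exact distances, the $(1-\eps)$-approximation requirement in the definition of $T_{\mathrm{inc}}(n,m,k,\eps)$ and $T_{\mathrm{dec}}(n,m,k,\eps)$ is satisfied trivially with $d'(u,v)=d(u,v)$, so we may take
\[
T_{\mathrm{inc}}(n,m,k,\eps)=T_{\mathrm{dec}}(n,m,k,\eps)=O(mk),
\]
independent of $\eps$.

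Next I would plug this bound into the running time from Theorem \ref{thm:full}. For either direction, the total time becomes
\[
\tilde O\!\left(\max_{D'} m D' \cdot \frac{\sqrt{n/D'}}{\eps^2}\right) \;=\; \tilde O\!\left(\max_{D'} \frac{m\sqrt{n D'}}{\eps^2}\right) \;=\; \tilde O\!\left(\frac{m\sqrt{n D_{\max}}}{\eps^2}\right),
\]
using that $D'\mapsto\sqrt{D'}$ is monotone so the maximum of $\sqrt{D'}$ over the range of diameters the graph assumes during the execution equals $\sqrt{D_{\max}}$. The approximation factor, the high-probability Las Vegas correctness, and the adversary distinction (oblivious for incremental, adaptive for decremental) are all inherited verbatim from Theorem \ref{thm:full}; the ES-tree itself is deterministic and therefore introduces no additional randomness and does not further restrict the adversary model.

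The main---and, in fact, essentially the only---point to verify is that the ES-tree's specification matches the SSSP interface Theorem \ref{thm:full} requires, namely that the estimator must satisfy $(1-\eps)d(u,v)\leq d'(u,v)\leq d(u,v)$ whenever $d(u,v)\leq k$; this is immediate since ES-trees return exact distances within the threshold. The slight weakening of the additive term from $-\tfrac{2}{3}$ in Theorem \ref{thm:full} to $-1$ in the corollary is absorbed by the rounding used when turning the integer distance estimates into the reported diameter estimate $\hat D$, and is therefore harmless.
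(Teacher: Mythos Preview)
Your proposal is correct and is essentially identical to the paper's own derivation: the paper obtains the corollary by plugging the Even--Shiloach tree bound $T_{\mathrm{inc}}(n,m,k,\eps)=T_{\mathrm{dec}}(n,m,k,\eps)=O(mk)$ (via \cite{EvenS81,HenzingerK95}) into Theorem~\ref{thm:full} and simplifying the maximum. The only quibble is your remark about ``rounding'' for the additive term: no extra rounding is needed, since the corollary's $-1$ is simply a weaker constant than the theorem's $-\tfrac{2}{3}$ and is implied directly.
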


Similar results hold for Radius and Eccentricities but only for undirected graphs. Recall that static conditional lower bounds rule out such algorithms in directed graphs. 

The algorithms so far are all randomized. We present some deterministic incremental algorithms as well, again via a reduction to incremental SSSP. Let $D_0,D_f$ and $T_{inc}(n,m,k,\epsilon)$ be as before.

\begin{theorem}  
There is a deterministic algorithm  for incremental diameter in unweighted, directed  graphs that, for any   $\eps$ with  $0<\eps<2$, runs in total time $\tilde{O}(\max_{D_f\leq D'\leq D_0} \{(T_{inc}(n,m,D',\eps) +
m) n/(\eps^2 D') \})$, and maintains an estimate $\hat{D}$ such that $(1-\eps)D\leq \hat{D}\leq D$, where $D$ is the diameter of the current graph.
\end{theorem}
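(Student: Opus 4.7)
The plan is to reduce incremental $(1-\eps)$-approximate diameter to incremental SSSP via a hitting-set argument, with the algorithm running in phases of geometrically-decreasing diameter estimates, and maintaining approximate SSSP from a small, deterministically chosen source set within each phase.

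I would divide the timeline into phases $i = 0, 1, \ldots$, where phase $i$ holds an estimate $D'_i$ with $(1-\eps) D'_i \le D \le D'_i$ for the current diameter $D$. Because $D$ is monotone non-increasing under insertions, the sequence $D'_0 > D'_1 > \cdots$ decreases geometrically and there are $O(\eps^{-1} \log D_0)$ phases total. A phase ends once the algorithm's own estimate $\hat D$ drops below $(1-\eps) D'_i$, at which point a new estimate $D'_{i+1}$ and source set are installed and the incremental SSSP structures are reinitialized.

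At the start of each phase with estimate $D'$, I would deterministically build a source set $S$ of size $\tilde O(n/(\eps D'))$ with the property that for every vertex $u$ whose eccentricity in the current graph is at least $D'$, some $s \in S$ satisfies $d(u, s) \le \eps D'$. Such an $S$ exists because every such $u$ has at least $\eps D'$ vertices within out-distance $\eps D'$ (the initial prefix of a path realizing its eccentricity), so a greedy hitting-set construction suffices. For each $s \in S$, the algorithm maintains the given incremental approximate SSSP structure up to depth $D'$ in time $T_{inc}(n, m, D', \eps) + O(m)$, and reports $\hat D = \max_{s \in S,\, v} \tilde d(s, v)$ at each query. Correctness follows from the directed triangle inequality applied to a diameter pair $(u, v)$: choosing $s$ by the hitting-set property gives $d(s, v) \ge d(u, v) - d(u, s) \ge D - \eps D' \ge (1 - O(\eps)) D$, and the $(1-\eps)$-approximate SSSP then returns $\tilde d(s, v) \ge (1 - O(\eps)) D$; rescaling $\eps$ delivers the stated $(1-\eps) D \le \hat D \le D$ bound. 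Crucially, the hitting-set property persists throughout the phase because edge insertions only decrease $d(u, s)$ and eccentricities, so no vertex newly requiring coverage can appear.

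Summing the per-phase cost $|S| \cdot (T_{inc}(n, m, D', \eps) + O(m))$ plus the hitting-set construction across $O(1/\eps)$ phases, and bounding by the worst-case $D'$, yields the claimed $\tilde O((T_{inc} + m)\, n/(\eps^2 D'))$. The main obstacle I expect is producing the deterministic hitting set within the stated time budget: the natural greedy scheme adds uncovered vertices iteratively and verifies coverage using maintained transposed SSSP structures from current sources, but confirming the $\tilde O(n/(\eps D'))$ size bound and keeping the construction time compatible with the target requires exploiting the lower bound on out-ball sizes for high-eccentricity vertices and avoiding an expensive full BFS from every vertex at each phase boundary.
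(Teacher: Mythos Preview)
Your approach is essentially the same as the paper's at its core: deterministically select $\tilde O(n/(\eps D'))$ source vertices, maintain truncated approximate SSSP from each, and return the largest estimate found. The triangle-inequality correctness argument you give is exactly the paper's (up to the symmetric choice of running out-SSSP from a source near the diameter source $s^*$, whereas the paper runs in-SSSP from a center near the sink $t^*$). The paper organizes the multiple values of $D'$ differently: rather than sequential phases with geometrically decreasing $D'$, it invokes a general wrapper (Lemma~\ref{lem:max}) that runs $O(\eps^{-1}\log n)$ instances in parallel, one for each $D'$ in a geometric grid, and takes the overall maximum. Both schemes contribute the extra $1/\eps$ factor.

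Two points worth flagging. First, your hitting-set threshold of $D'$ is slightly off: during a phase with $(1-\eps)D'\le D\le D'$, the diameter source has eccentricity $D$, which may be below $D'$, so it need not be covered. You should build the hitting set for all vertices with eccentricity at least $(1-\eps)D'$ at the phase start; the out-ball size argument still gives $\ge \eps D'$ vertices, so the size bound survives. Second, the obstacle you correctly isolate---constructing the source set deterministically within the time budget---is precisely where the paper's content lies. The paper does \emph{not} do a greedy hitting-set over out-balls (which would naively need a BFS from every vertex). Instead it builds an out-ball \emph{covering}: repeatedly pick any $v$ whose out-ball of radius $\eps'D'/2$ in the residual graph has size $\ge \eps'D'/2$, remove that ball, and repeat; once no such $v$ remains, every residual source inherits the label of a removed in-neighbor (using strong connectivity, which is implied by finite diameter). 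This guarantees that \emph{every} vertex, not just high-eccentricity ones, is within distance $\eps'D'$ of some center---so no threshold bookkeeping is needed---and a BFS-with-memoized-counters trick keeps the construction cost to $O(m)$ per center, i.e.\ $O(nm/(\eps D'))$ total, which is within the stated bound.
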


Using Even and Shiloach trees we obtain as a corollary a deterministic incremental $(1+\eps)$-approximation  algorithm for diameter with total update time $\tilde{O}(mn/\eps^2)$. The running time is essentially tight for $\eps<1/2$ by the SETH based quadratic lower bound for $(3/2-\delta)$-approximate static diameter \cite{RV13}. Similar algorithms with essentially tight running times hold for radius in directed graphs and eccentricities in directed, strongly connected graphs.

\subsection{Our techniques for partially dynamic algorithms}
Our partially dynamic nearly $3/2$-approximation algorithms for diameter and radius and our nearly $5/3$-approximation algorithm for eccentricities are based on known algorithms in the static setting~\cite{RV13,chechikdiam}. These static algorithms work by carefully choosing a set $U$ of vertices, performing SSSP from every vertex in $U$, and showing that at least one of these SSSP instantiations yields a good estimate for the parameter of interest. The set $U$ is chosen as follows. We pick a random sample $S$ of $\tilde{\Theta}(\sqrt{n})$ vertices and let $w^*$ be the vertex that is farthest from $S$; that is, $w^*$ is the vertex that maximizes $\min_{s\in S}d(w^*,s)$. Then, letting $N(w,\sqrt{n})$ be the closest $\sqrt{n}$ vertices to $w$, we set $U=S\cup\{w^*\}\cup N(w^*,\sqrt{n})$.

Adapting these static algorithms to the dynamic setting presents two main challenges:

(1) Firstly, given a set $S$ of vertices, the farthest vertex $w^*$ from $S$ can change over time. We wish to minimize the total number of vertices that we ever run dynamic SSSP from, as reinitializing dynamic SSSP from a new vertex is expensive. Suppose we run dynamic SSSP from every vertex in $N(w^*,\sqrt{n})$ at all times. Then, every time $w^*$ changes, we must reinitialize the dynamic SSSP data structure from $\sqrt{n}$ new vertices. If $w^*$ changes frequently, this is prohibitively slow. To overcome this issue, we show that it suffices to choose a vertex $w$ that \emph{approximates} $w^*$ (for a careful notion of approximation); and furthermore, by doing so we can limit the number of times we choose a new $w$.

Due to inherent differences between the incremental and decremental settings, we choose $w$ in different ways in the different settings. In the decremental setting, distances can only increase, so our current choice of $w$ can only become a poor approximation for $w^*$ if $d(w^*,S)$ increases. Then, we use the fact that $d(w^*,S)$ is monotonically increasing to bound the number of times we need to choose a new $w$.

The incremental setting is more involved. Since distances can only decrease, our current choice of $w$ becomes a poor approximation of $w^*$ if $d(w,S)$ decreases. A challenge arises because unlike $d(w^*,S)$, the distance $d(w,S)$ does not change monotonically. One can imagine a scenario in which whenever we choose a new $w$, an edge is added causing $d(w,S)$ to immediately decrease to 1, which mandates that we choose a new $w$. We address this challenge by carefully employing randomness against an oblivious adversary. We argue that by randomly sampling $w$ from a specifically chosen set of vertices, in expectation it will take a long time for $w$ to become a poor approximation for $w^*$.

(2) Secondly, we wish to apply a partially dynamic SSSP algorithm as a subroutine, however the state of such algorithms is 
much better for undirected graphs than directed graphs. 
For instance, for \emph{undirected} decremental graphs, there is a randomized $(1+\epsilon)$-approximate SSSP algorithm that runs amortized $m^{o(1)}$ time~\cite{HenzingerKNFOCS14} (and it is believed, but not published, than a similar result is possible for incremental graphs), while for incremental/decremental \emph{directed} graphs the best known algorithms for SSSP up to distance $k$ run in amortized time $O(k)$ \cite{EvenS81}. 
To address this discrepancy, we carefully exploit the fact that longer paths are easier to hit by randomly sampling: we augment the algorithm with an additional subsampling routine that quadratically decreases the dependence of the running time on the diameter $D$. 



\section{Preliminaries}\label{sec:prelim}

Let $G=(V,E)$ be a graph, where $|V|=n$ and $|E|=m$. For every $u,v\in V$ let $d_G(u,v)$ be the length of the shortest path from $u$ to $v$. We omit the subscript when $G$ is clear from context. Let $N_{out}(v, s)$ (resp., $N_{in}(v, s)$) be the set of the $s$ closest outgoing (incoming) vertices of $v$, where ties are broken by taking the vertex with the smaller ID. The eccentricity $\varepsilon(v)$ of a vertex $v$ is defined as $\max_{u\in V} d(v,u)$. The diameter $D$ of a graph is $\max_{v\in V}\varepsilon(v)$. The radius $R$ of a graph is $\min_{v\in V}\varepsilon(v)$. 

\subsection{Algorithms}
For all of our algorithms for diameter, radius, and eccentricities in undirected graphs as well as diameter in directed graphs, we assume that the diameter is finite. One can easily check if this is the case by running a dynamic 
reachability algorithm from a single vertex.

For our partially dynamic algorithms, we let $D_0$ and $D_f$ be the initial and final values of the diameter, respectively. Similarly, $R_0$ and $R_f$ are the initial and final values of the radius, respectively.

The running times of our randomized algorithms are with high probability, which we take to mean with probability at least $1-1/n^c$ for all constants $c$. The running times of our algorithms is written in terms of $n$ and $m$, which refer to an upper bound on the number of vertices and edges, respectively, over the entire sequence of updates. That is, for incremental algorithms, the running time is written in terms of the final values of $n$ and $m$ and for decremental algorithms the running time written is in terms of the initial values of $n$ and $m$.

Each of our algorithms is written as a reduction to a black-box incremental or decremental approximation algorithm for \emph{truncated SSSP}; that is, SSSP which provides a distance estimate for all nodes whose distance from the source is at most a given value $k$.  For generality, our algorithms are written for directed graphs and use directed SSSP algorithms, however if the graph is undirected one can simply run an undirected SSSP algorithm instead. Let \emph{out-$\mathcal{A}_{inc}(u,k,\delta)$} (resp., \emph{out-$\mathcal{A}_{dec}(u,k,\delta)$}) be an incremental (resp., decremental) algorithm that maintains for all $v$ an estimate $d'(u,v)$ such that if $d(u,v)\leq k$ then $(1-\delta)d(u,v)\leq d'(u,v)\leq d(u,v)$. Analogously, let \emph{in-$\mathcal{A}_{inc}(u,k,\delta)$} (resp., \emph{in-$\mathcal{A}_{dec}(u,k,\delta)$}) be an incremental (decremental) algorithm that maintains an estimate $d'(v,u)$ for all $v$ such that if $d(u,v)\leq k$ then $(1-\delta)d(v,u)\leq d'(v,u)\leq d(v,u)$. We assume that after every update, these algorithms output all nodes whose distance estimate has changed. Let $T_{inc}(n,m,k,\delta)$ (resp., $T_{dec}(n,m,k,\delta)$) be the total time of out-$\mathcal{A}_{inc}(u,k,\delta)$ and in-$\mathcal{A}_{inc}(u,k,\delta)$ (resp., out-$\mathcal{A}_{inc}(u,k,\delta)$ and in-$\mathcal{A}_{inc}(u,k,\delta)$) (or the corresponding undirected algorithms, depending on the setting).

The running times of our algorithms are written as the maximum of an expression over all values of the diameter $D$ (or radius $R$) throughout the entire sequence of updates. Although the maximum value of $D$ and $R$ in a partially dynamic graph either occurs at the beginning or end of the update sequence, the maximum value of the running time expression could occur for any value of $D$ or $R$.


Suppose we run in-$\mathcal{A}_{inc}$,  in-$\mathcal{A}_{dec}$, out-$\mathcal{A}_{inc}$, or out-$\mathcal{A}_{dec}$ from a vertex $v$. Then, let $B_{out}(v,r)$ be the set of vertices $u$ with $d'(v,u)\leq r$. 

For a subset $S\subseteq V$ of vertices and a vertex $v \in V$ we define $d(S,v):=\min_{s \in S}d(s,v)$. Similarly, $d(v,S):=\min_{s \in S}d(v,s)$. When the algorithms call for an approximation $d'(S,v)$ of $d(S,v)$, we add a dummy vertex $x$ with an edge to every vertex in $S$ and run out-$\mathcal{A}_{inc}$ (or out-$\mathcal{A}_{dec}$) from $x$; let $d'(S,v)=d'(x,v)-1$. We define and maintain $d'(v,S)$ analogously by adding a dummy vertex with an edge from every vertex in $S$. 
\begin{claim}
For all $u\notin S$, $(1-2\delta)d(u,S) \leq d'(u,S) \leq d(u,S)$.
\end{claim}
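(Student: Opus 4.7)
The plan is to unfold the definition of $d'(u,S)$ in terms of the dummy-vertex construction and then invoke the guarantee of the underlying SSSP algorithm, being careful about the $\pm 1$ from the dummy edge.

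First, I would recall the construction: to obtain $d'(u,S)$, we add a dummy vertex $y$ with an edge from every $s\in S$ to $y$ and run the appropriate SSSP algorithm with $y$ as the ``sink'' (so that $d'(u,y)$ is maintained for every $u$). Since every $s\in S$ has an edge $s\to y$ of length $1$ and $y$ has no other incoming edges, for every $u\neq y$ we have exactly
\[
d(u,y)=\min_{s\in S}\bigl(d(u,s)+1\bigr)=d(u,S)+1,
\]
and we set $d'(u,S):=d'(u,y)-1$.

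Next, I would apply the guarantee of in-$\mathcal{A}_{inc}$ (resp.\ in-$\mathcal{A}_{dec}$): since $d(u,y)=d(u,S)+1$ is at most the truncation parameter by assumption, we have
\[
(1-\delta)\bigl(d(u,S)+1\bigr)\ \leq\ d'(u,y)\ \leq\ d(u,S)+1.
\]
Subtracting $1$ throughout gives
\[
(1-\delta)d(u,S)-\delta\ \leq\ d'(u,S)\ \leq\ d(u,S),
\]
which already yields the upper bound.

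For the lower bound, this is where the hypothesis $u\notin S$ is used: it guarantees $d(u,S)\geq 1$ (the graph is unweighted), so $\delta\leq \delta\cdot d(u,S)$, and therefore
\[
d'(u,S)\ \geq\ (1-\delta)d(u,S)-\delta\ \geq\ (1-\delta)d(u,S)-\delta\cdot d(u,S)\ =\ (1-2\delta)d(u,S),
\]
as claimed. There is no real obstacle here; the only subtlety worth flagging in the write-up is that one must use $u\notin S$ to absorb the additive $-\delta$ error (coming from the dummy edge) into the multiplicative factor, turning $(1-\delta)$ into $(1-2\delta)$.
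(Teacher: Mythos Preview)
Your proof is correct and follows essentially the same approach as the paper: both arguments use the dummy-vertex identity $d(u,x)=d(u,S)+1$, apply the $(1-\delta)$ SSSP guarantee, and then use $u\notin S\Rightarrow d(u,S)\geq 1$ (equivalently $d(u,x)\geq 2$) to absorb the additive $-\delta$ into the multiplicative factor. Your write-up is in fact more explicit than the paper's about where the hypothesis $u\notin S$ enters.
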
\label{claim:pre}
\begin{proof}
$(1-2\delta)d(u,S) =
(1-2\delta)(d(u,x)-1) = d(u,x)-\delta d(u,x)-1+\delta(2-d(u,x))\leq d(u,x)-\delta d(u,x)-1=(1-\delta)d(u,x)-1 \leq d'(u,x)-1 =d'(u,S)$ and $d'(u,S)=d'(u,x)-1 \leq d(u,x)-1 = d(u,S)$. 
\end{proof}

For all of our algorithms for diameter and eccentricities, the bulk of the argument is to prove a lemma of the following form: if one is given values $P'$ and $\eps$ such that $P'$ is at most the true value $P$ of the parameter of interest, then there is an algorithm that outputs an estimate $\hat{P}$ such that $\alpha(1-\eps)P'-\beta\ \leq \hat{P}\leq P$ for appropriate $\alpha$ and $\beta$. In Lemma~\ref{lem:max}, we show that a lemma of the above form suffices to prove our theorems. (In Lemma~\ref{lem:max}, the number $k$ of parameters is 1 for the case of diameter and $n$ for the case of eccentricities.)
Lemma~\ref{lem:max} also requires a fast constant-factor approximation for the parameter of interest in the static setting. Such algorithms exist for directed diameter and eccentricities in near-linear time.


\begin{lemma}\label{lem:max}
Let $\pi_1,\pi_2,\dots,\pi_k$ be a set of graph parameters (e.g. eccentricities). Suppose there is a static $\tilde{O}(T'(n,m))$ time algorithm that gives a constant-factor approximation for all $\pi_i$. 

Let $P_1,P_2,\dots,P_k$ be the dynamically changing values of  $\pi_1,\pi_2,\dots,\pi_k$, respectively. 
Suppose there is an algorithm $\mathcal{P}$ that given a partially dynamic graph and values $P'$ and $\eps>0$, runs in total time $T(n,m,P',\eps)$ where $T$ is a polynomial, and maintains a set of estimates $\hat{P}_1\leq P_1,\dots,\hat{P}_k\leq P_k$ such that for all $i$, if $P'\leq P_i$, then $\hat{P}_i\geq \alpha (1-\eps)P'-\beta$ for constants $\alpha$ and $\beta$. 

Let $P_{min}$ and $P_{max}$ be the minimum and maximum respectively over all $P_i$ over the entire sequence of updates. Then there is an algorithm $\mathcal{P'}$ that given a partially dynamic graph and $\eps>0$, runs in total time $\tilde{O}(T'(n,m)+\max_{P_{min}\leq P'\leq P_{max}}T(n,m,P',\eps)/\eps)$ and maintains estimates $\hat{P}_1,\dots,\hat{P}_k$ such that for all $i$, $\alpha(1-\eps)P_i-\beta\leq \hat{P}_i\leq P_i$.
\end{lemma}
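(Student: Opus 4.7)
The plan is to reduce to $\mathcal{P}$ by running it in parallel with a geometric sequence of guesses $P'$ that covers the range $[P_{min}, P_{max}]$, and then take the pointwise maximum of the returned estimates. Concretely, set $\eps' = \eps/2$, choose guesses $P'_j = U(1-\eps')^j$ for $j = 0, 1, \dots, J$ where $U$ is an upper bound on $P_{max}$ and $J$ is chosen so that $P'_J \leq P_{min}$, and launch an instance of $\mathcal{P}$ with parameters $(P'_j, \eps')$ for each $j$. At any time, output $\hat{P}_i := \max_j \hat{P}_i^{(j)}$, where $\hat{P}_i^{(j)}$ denotes the estimate produced by instance $j$.

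To bound the number of guesses, I would first run the static $\tilde{O}(T'(n,m))$ constant-factor approximation on the initial graph. Because each $\pi_i$ is monotone under edge insertions (only decreasing) or edge deletions (only increasing), one endpoint of $[P_{min}, P_{max}]$ is pinned down up to a constant factor by these initial estimates, while the other endpoint is controlled by the trivial bounds $1 \leq P_i \leq n$. In particular $U/L \leq \poly(n)$, so $J = O(\log n / \eps)$.

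For correctness, $\hat{P}_i \leq P_i$ is immediate from $\hat{P}_i^{(j)} \leq P_i$ for every $j$. For the lower bound, fix $i$ and let $j^*$ be the smallest index with $P'_{j^*} \leq P_i$; such an index exists because $P'_J \leq P_{min} \leq P_i$, and by minimality $P'_{j^*-1} > P_i$, so $P'_{j^*} = (1-\eps')P'_{j^*-1} > (1-\eps')P_i$. The guarantee of $\mathcal{P}$ then yields
\[
\hat{P}_i \;\geq\; \hat{P}_i^{(j^*)} \;\geq\; \alpha(1-\eps')P'_{j^*} - \beta \;>\; \alpha(1-\eps')^2 P_i - \beta \;\geq\; \alpha(1-\eps)P_i - \beta,
\]
using $(1-\eps/2)^2 \geq 1 - \eps$. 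The total running time is $\tilde{O}(T'(n,m))$ for preprocessing plus $\sum_{j=0}^{J} T(n,m, P'_j, \eps') \leq J \cdot \max_j T(n,m, P'_j, \eps')$, matching the stated bound.

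The main obstacle is justifying that the maximum in the running time can be restricted to $P' \in [P_{min}, P_{max}]$ rather than the wider interval $[L, U]$ that the algorithm actually uses. For the applications in this paper the subroutine time $T(n,m,P',\eps)$ is monotone in $P'$ (it comes from a truncated-SSSP primitive whose cost grows with the depth), so the additional guesses inflate the max by at most a constant factor; alternatively, one can dynamically launch and retire instances as $P_i$ evolves, using the current $\hat{P}_i$ together with the initial constant-factor approximation as a monitor to decide when a new $P'_j$ is needed.
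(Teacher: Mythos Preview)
Your approach is essentially the same as the paper's: both run $\tilde{O}(1/\eps)$ copies of $\mathcal{P}$ on a geometric grid of guesses $P'_j$ with ratio $(1-\eps')$ where $\eps'=\eps/2$, output $\hat{P}_i=\max_j\hat{P}_i^{(j)}$, and prove the lower bound by picking the largest $j$ with $P'_j\le P_i$ and using $(1-\eps')^2\ge 1-\eps$. The correctness argument you wrote is exactly the paper's.

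The one place where the paper is more careful is precisely the obstacle you flag at the end. The paper does \emph{not} assume monotonicity of the $\pi_i$ (that assumption is not part of the lemma), nor does it appeal to monotonicity of $T$ in $P'$. Instead it implements your second suggestion: it uses the static constant-factor approximation only to seed the initial grid around the \emph{initial} range of the $P_i$, and thereafter uses the current estimates $\hat{P}_i$ themselves to detect when the running minimum or maximum has drifted outside the current grid, at which point it spawns a new instance with the next $P'_j$ above or below. Because the lemma's guarantee on $\hat{P}_i$ pins $P_i$ between two consecutive grid points, this gives a constant-factor estimate of the running min/max without any monotonicity assumption, and the total number of instances ever created is $\log_{1/(1-\eps')}(P_{\max}/P_{\min})=\tilde{O}(1/\eps)$, with every $P'_j$ in $[\Omega(P_{\min}),O(P_{\max})]$. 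So your sketch is correct; the paper just commits to the dynamic-launching option and spells out how the estimates themselves serve as the monitor.
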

\begin{proof}
Let $\eps'=\eps/2$. We run $\ell=\tilde{O}(1/\eps)$ instantiations of algorithm $\mathcal{P}$ with input values $P'=P'_1,P'_2,\dots,P'_\ell$ respectively and $\eps=\eps'$. In the next paragraph we will define the $P'_i$ so that $P'_0\leq P_{min}$ and $P'_0=\Omega(P_{min})$,  $P'_{\ell}\geq P_{max}$ and $P'_{\ell}=O(P_{max})$, and for all $i<\ell$, $P'_i=(1-\eps')P'_{i+1}$. This implies that $\ell\leq \log_{\frac{1}{1-\eps'}} n=\tilde{O}(1/\eps)$, so the total running time is $\tilde{O}(\max_{P'_0\leq P'\leq P'_{\ell}}T(n,m,P',\eps)/\eps)=\tilde{O}(\max_{P_{min}\leq P'\leq P_{max}}T(n,m,P',\eps)/\eps)$. 

Let $Q_{min}$ and $Q_{max}$ be the dynamically changing values of the minimum and maximum (respectively) of all $P_i$ seen so far. That is, at the end of the update sequence $Q_{min}=P_{min}$ and $Q_{max}=P_{max}$. To set the $P'_i$ initially, we run the static constant-factor approximation for all $\pi_i$ from the theorem statement. This allows us to set the $P'_i$'s to satisfy the constraints from the previous paragraph, but with $P_{min}$ and $P_{max}$ replaced by the initial values of $Q_{min}$ and $Q_{max}$ respectively. Throughout the execution, we maintain the constraints from the previous paragraph, but with $P_{min}$ and $P_{max}$ replaced by the current values of $Q_{min}$ and $Q_{max}$ respectively. That is, as $Q_{min}$ and $Q_{max}$ change, we may need to add new values $P'_i$ either above or below the existing values. To do this, we use the algorithm $\mathcal{P}$ to dynamically estimate $Q_{min}$ and $Q_{max}$ as follows. We maintain that for all $P_i$, there exist $P'_j$, $P'_{j+1}$ such that $\alpha (1-\eps)P'_j-\beta\leq \hat{P}_i\leq \alpha (1-\eps)P'_{j+1}-\beta$. When this inequality is not true for some $P_i$, we add new values of $P'_j$ either above or below the existing values as appropriate to make the inequality true (renumbering the $P'_i$ as appropriate). Given this inequality, the lemma statement implies that $\alpha (1-\eps)P'_j-\beta\leq P_i\leq P'_{j+1}$. Then since $P'_j$ and $P'_{j+1}$ differ by only a factor of $(1-\eps')$, this gives a constant-factor approximation for $Q_{min}$ and $Q_{max}$, which enables us to set the $P'_i$ to satisfy the required constraints.

Let $\hat{P}_i^j$ be the value $\hat{P}_i$ returned by $\mathcal{P}$ given the input parameter $P'_j$. Following every update, for all $i$, the return value $\hat{P}_i$ for $\mathcal{P'}$ is $\max_j \hat{P}_i^j$. Fix a point in the sequence of updates and fix an $i$. We will show that there exists $j$ such that $\alpha(1-\eps)P_i-\beta\leq \hat{P}_i^j\leq P_i$. The second inequality is directly implied by the lemma statement, so we prove the first. Let $j$ be the largest value such that $P'_j\leq P_i$, so $(1-\eps')P_i\leq P'_j$. Then the lemma statement implies that $\hat{P}_i^j\geq \alpha (1-\eps')P'_j-\beta\geq \alpha (1-\eps')^2P_i-\beta\geq \alpha(1-\eps)P_i-\beta$.  \end{proof}

Lemma~\ref{lem:min} is analogous to Lemma~\ref{lem:max}, but applies to minimization problems (such as radius) instead of maximization problems (such as diameter and eccentricities). 

\begin{lemma}\label{lem:min}


Let $\pi_1,\pi_2,\dots,\pi_k$ be a set of graph parameters. Suppose there is a static $\tilde{O}(T'(n,m))$ time algorithm that gives a constant-factor approximation for all $\pi_i$. 

Let $P_1,P_2,\dots,P_k$ be the dynamically changing values of  $\pi_1,\pi_2,\dots,\pi_k$, respectively. 
Suppose there is an algorithm $\mathcal{P}$ that given a partially dynamic graph and values $P'$ and $\eps>0$, runs in total time $T(n,m,P',\eps)$ where $T$ is a polynomial, and maintains a set of estimates $\hat{P}_1\geq P_1,\dots,\hat{P}_k\geq P_k$ such that for all $i$, if $P'\geq P_i$, then $\hat{P}_i\leq \alpha (1+\eps)P'+\beta$ for constants $\alpha$ and $\beta$. 

Let $P_{min}$ and $P_{max}$ be the minimum and maximum respectively over all $P_i$ over the entire sequence of updates. Then there is an algorithm $\mathcal{P'}$ that given a partially dynamic graph and $\eps>0$, runs in total time $\tilde{O}(T'(n,m)+\max_{P_{min}\leq P'\leq P_{max}}T(n,m,P',\eps)/\eps)$ and maintains estimates $\hat{P}_1,\dots,\hat{P}_k$ such that for all $i$, $P_i\leq \hat{P}_i\leq \alpha(1+\eps)P_i+\beta$.

\end{lemma}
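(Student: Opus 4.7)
The plan is to mirror the proof of Lemma~\ref{lem:max} in a symmetric fashion, replacing the role of ``lower bound guess on $P_i$'' with ``upper bound guess on $P_i$'', geometric sequences of ratio $1-\eps'$ with sequences of ratio $1+\eps'$, and $\max$ with $\min$. I would set $\eps'=\eps/3$ (or any constant fraction so that $(1+\eps')^2\le 1+\eps$), and run $\ell=\tilde O(1/\eps)$ parallel copies of $\mathcal{P}$ on input values $P'_1<P'_2<\dots<P'_\ell$ with $P'_{j+1}=(1+\eps')P'_j$, chosen so that $P'_1=O(P_{min})$ with $P'_1\le P_{min}$ and $P'_\ell=\Omega(P_{max})$ with $P'_\ell\ge P_{max}$. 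Summing the runtimes over the $\tilde O(1/\eps)$ copies gives the claimed total time $\tilde O(T'(n,m)+\max_{P_{min}\le P'\le P_{max}}T(n,m,P',\eps)/\eps)$.

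At each step I would return, for each parameter $\pi_i$, the estimate $\hat P_i=\min_j \hat P_i^j$ where $\hat P_i^j$ is the estimate produced by the $j$th copy of $\mathcal{P}$. The lower bound $\hat P_i\ge P_i$ follows trivially since by hypothesis $\hat P_i^j\ge P_i$ for every $j$. For the upper bound, fix any moment and any $i$, and let $j^\ast$ be the smallest index with $P'_{j^\ast}\ge P_i$; such an index exists because $P'_\ell\ge P_{max}\ge P_i$. Then $P'_{j^\ast-1}<P_i$, so $P'_{j^\ast}=(1+\eps')P'_{j^\ast-1}\le (1+\eps')P_i$. Applying the hypothesis to copy $j^\ast$ yields
\[
\hat P_i\le \hat P_i^{j^\ast}\le \alpha(1+\eps')P'_{j^\ast}+\beta\le \alpha(1+\eps')^2 P_i+\beta\le \alpha(1+\eps)P_i+\beta,
\]
as required.

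The one remaining point is to maintain the invariant that the grid $\{P'_j\}$ brackets the current range of $P_i$ values, since in a partially dynamic graph $Q_{min}$ may decrease and $Q_{max}$ may increase over time. I would handle this exactly as in Lemma~\ref{lem:max}: initialize the grid using the static $\tilde O(T'(n,m))$ constant-factor approximation to bracket the initial values of $Q_{min}$ and $Q_{max}$, and whenever some parameter's estimate $\hat P_i$ drifts outside the interval $[\alpha(1+\eps')P'_j+\beta,\alpha(1+\eps')P'_{j+1}+\beta]$ currently bracketing it, insert new grid points above $P'_\ell$ or below $P'_1$ as needed (renumbering). The same argument as before shows that adjacent grid values differ by a factor of $1+\eps'$ and hence the $P'_j$'s currently in use give a constant-factor estimate of $Q_{min}$ and $Q_{max}$, so only $\tilde O(1/\eps)$ grid points are ever active at any time and the total work to maintain the grid is subsumed by the running time above.

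The only subtlety, which I expect to be the main point to check carefully, is that the lemma's upper-bound guarantee for $\mathcal{P}$ is \emph{conditional} on $P'\ge P_i$; the copies with $P'_j<P_i$ may return arbitrarily large estimates, but since we take a minimum and a valid $j^\ast$ always exists, these bad copies cannot harm the output. This is the analogue of why Lemma~\ref{lem:max} takes a maximum rather than a minimum, and once this observation is in place the proof goes through by essentially the same bookkeeping.
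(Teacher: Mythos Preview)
Your proposal is correct and follows essentially the same approach as the paper: the paper also sets $\eps'=\eps/3$, reuses the construction of Lemma~\ref{lem:max}, returns $\min_j \hat P_i^j$, and bounds $\hat P_i^{j^\ast}$ via the smallest $j^\ast$ with $P'_{j^\ast}\ge P_i$. The only inessential difference is that the paper keeps the same grid ratio $P'_j=(1-\eps')P'_{j+1}$ from Lemma~\ref{lem:max} (yielding $P'_{j^\ast}\le P_i/(1-\eps')$ and hence $\hat P_i\le \alpha\tfrac{1+\eps'}{1-\eps'}P_i+\beta$), whereas you use ratio $1+\eps'$; both choices give the desired bound.
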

\begin{proof}
The proof is identical to the proof of Lemma~\ref{lem:max} except we set $\eps'=\eps/3$ and the last paragraph is replaced with the following:

Let $\hat{P}_i^j$ be the value $\hat{P}_i$ returned by $\mathcal{P}$ given the input parameter $P'_j$. Following every update, for all $i$, the return value is $\min_j \hat{P}_i^j$. Fix a point in the sequence of updates and fix an $i$. We will show that there exists $j$ such that $P_i\leq \hat{P}_i^j\leq \alpha(1+\eps)P_i+\beta$. The first inequality is directly implied by the lemma statement, so we prove the second. Let $j$ be the smallest value such that $P'_j\geq P_i$, so $P'_j\leq \frac{P_i}{1-\eps'}$. Then the lemma statement implies $\hat{P}_i^j\leq \alpha (1+\eps')P'_j+\beta\leq \alpha (\frac{1+\eps'}{1-\eps'})P_i+\beta\leq \alpha(1+\eps)P_i+\beta$.  \end{proof}

\subsection{Lower bounds}
Let $k\geq 2$. The $k$-Orthogonal Vectors Problem ($k$-OV) is as follows: given $k$  sets $S_1,\ldots,S_k$, where each $S_i$ contains $n$ vectors in $\{0,1\}^d$, determine whether there exist $v_1\in S_1,\ldots,v_k\in S_k$ so that their generalized inner product is $0$, i.e. $\sum_{i=1}^d \prod_{j=1}^k v_j[i]=0$. The $k$-OV Hypothesis is that $k$-OV requires $n^{k-o(1)}$ time in the word-RAM model of computation with $O(\log n)$ bit words, even for randomized algorithms.

The {\em unbalanced} version of $k$-OV has the sets $S_i$ potentially have different sizes, $|S_i|=n_i$. The unbalanced $k$-OV Hypothesis is that unbalanced $k$-OV requires $(\prod_i n_i)^{1-o(1)}$ time. When each $n_i$ is polynomial in $n$, the unbalanced $k$-OV Hypothesis is known to be equivalent to the $k$-OV Hypothesis.

R. Williams~\cite{TCS05} (see also \cite{survey}) showed that if for some $\eps>0$ there is an $n^{k-\eps}\poly(d)$ time algorithm for $k$-OV, then CNF-SAT on formulas with $N$ variables and $m$ clauses can be solved in $2^{N(1-\eps/k)}\poly(m)$ time. In particular, 
such an algorithm would contradict
the Strong Exponential Time Hypothesis (SETH) of Impagliazzo, Paturi and Zane~\cite{ipz2} which states that for every $\eps>0$ there is a $K$ such that $K$-SAT on $N$ variables cannot be solved in $2^{(1-\eps)N}\poly N$
time (say, on a word-RAM with $O(\log N)$ bit words) even by randomized algorithms. Thus SETH implies the $k$-OV Hypothesis for all constants $k$. Each of our lower bounds conditional upon SETH is a reduction from either unbalanced 2-OV or unbalanced 3-OV.

The Hitting Set (HS) problem \cite{AbboudWW16} is as follows: given two sets $U$ and $V$ of $n$ vectors each in $\{0,1\}^d$, is there some $u\in U$ so that for all $v\in V$, $u\cdot v\neq 0$? The HS Hypothesis states that HS requires $n^{2-o(1)}$ time in the word-RAM model with $O(\log n)$ bit words, even for randomized algorithms.

We introduce the unbalanced version of HS, for three unbalanced sets. Unbalanced $3$-HS is the problem, given $U,V,W\subseteq \{0,1\}^d$ with $|U|=n, |V|=n^a, |W|=n^b$ for constants $a,b>0$, are there $u\in U,w\in W$ so that for all $v\in V$, $u\cdot v\cdot w \neq 0$? This is in similar spirit to unbalanced $3$-OV.

The unbalanced 3-HS Hypothesis is that unbalanced 3-HS requires $(|U|\cdot |V|\cdot |W|)^{1-o(1)}= n^{1+a+b-o(1)}$ time in the word-RAM model with $O(\log n)$ bit words, even for randomized algorithms. Due to its similarity to $3$-OV and the lack of good algorithms, the 3-HS Hypothesis is believable. Refuting it would imply some very interesting improved algorithms for a balanced variant of Quantified Boolean Formulas with 2 quantifiers \cite{openproblems}.

As mentioned in the introduction, the $k$-Cycle Hypothesis is that in the word-RAM model with $O(\log n)$ bit words, any possibly randomized algorithm needs $m^{2-f(k)-o(1)}$ time to find a $k$-cycle in an $m$-edge graph, for some continuous (over the reals) $f(k)$ that goes to $0$ as $k$ goes to infinity. The Hypothesis is completely consistent with the state of the art $k$-Cycle algorithms (e.g. \cite{YuZw04,AlYuZw97,patternscycles19}).



\section{Lower Bounds}

\subsection{Base graph $G_{\delta}$ and reduction outline}
\label{basegraph}

Most of our reductions based on SETH and the 3-HS hypothesis begin with a variation of the same undirected, unweighted graph $G_{\delta}$ and proceed in the same way. This construction is inspired by that of \cite{AbboudW14}, which gives a quadratic lower bound for a $(4/3-\eps)$-approximate diameter. A similar construction is also used in~\cite{abboud2016near} and~\cite{bringmann2018note}. We will describe $G_{\delta}$ and then how we can use $G_{\delta}$ to show lower bounds. In most of our SETH or 3-HS hypothesis reductions below, we will only describe how the reduction differs.

\subsubsection*{$G_\delta$ Initialization}
We will generally assume for the sake of contradiction that a dynamic $(\alpha-\eps)$-approximation algorithm exists (for the appropriate value of $\alpha$) for Diameter, Radius, or Eccentricities with preprocessing time $n^t$, amortized update time $n^{2-\eps'}$ and query time $n^{2-\eps'}$, for positive numbers $\eps,\eps',$ and $t$. We define $\delta = \frac{1-\eps'}{t}$. Our construction also includes a parameter $a$ that is defined separately for each individual construction such that the supposed $(\alpha-\eps)$-approximation algorithm may distinguish between the existence or non-existence of an orthogonal triple or hitting set in each stage.

We begin with an instance of 3-OV or 3-HS with vector sets $U$, $V$, and $W$ such that $|U| = |V| = N^\delta$ and $|W| = N^{1-2\delta}$. We first discard degenerate vectors and coordinates: any coordinates that are 0 for every vector in $U$, every vector in $V$, or every vector in $W$, and any vectors that are all zeroes. Note that removing degenerate coordinates does not change the correct output value. If there is a degenerate vector in a 3-OV instance or in $U$ or $W$ in a 3-HS instance, the correct output value is always ``yes", while if there is a degenerate vector in $V$ in a 3-HS instance, removing it does not change the correct output value. 

The construction of $G_\delta$ is shown in Figure~\ref{figGdelta} and described here. For each coordinate $c$, create two nodes, and denote one by $c_{U}$ and the other by $c_{V}$. We denote the two sets of coordinate nodes by $C_U$ and $C_V$ respectively. Next, create a path of length $a$ for each vector $u\in U$; denote the start by $u^0$ and the end by $u^a$, and each node at distance $i$ from $u^0$ by $u^i$. We make a similar path for each $v\in V$. Finally, we encode each vector $u\in U$ in the graph by adding a path of length $a$ between $u^a$ and $c_U$ if $u[c] = 1$, and encode each $v\in V$ by adding a path of length $a$ between $c_V$ and $v^0$ if $v[c] = 1$.

\subsubsection*{Stages}
We proceed in $N^{1-2\delta}$ stages, one for each element $w\in W$. Generally, we will add an edge to $G_\delta$ between $c_U\in C_U$ to $c_V\in C_V$ if $w[c] = 1$, for all coordinates $c$. Then, we will make a query to an algorithm we assume exists for the sake of contradiction. Depending on the result of the query, we will either detect an orthogonal vector triple or hitting set and halt, or we will undo our edge additions for this stage and continue to the next $w$. 

Each stage may be modified to be decremental by initializing $G_\delta$ with edges between each $c_U,c_V$ pair and removing the excess edges during each stage. 

\begin{figure}[h]
  \centering
  \includegraphics[width=.8\linewidth]{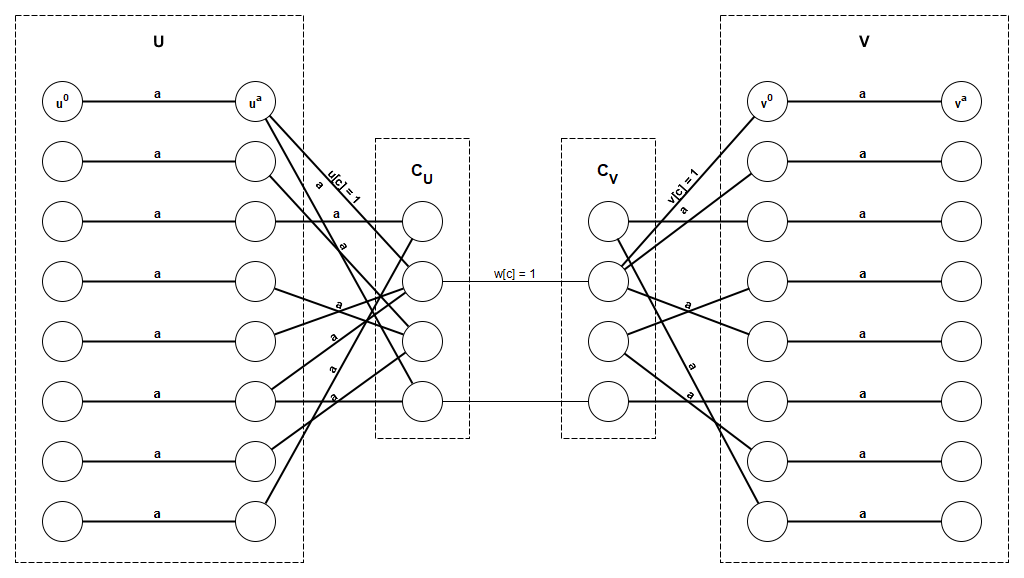}
  \caption{Sketch of $G_\delta$ construction. Bold edges represent paths, whose labels denote their length.}
  \label{figGdelta}
\end{figure}

\subsubsection*{Analysis}
We prove correctness separately for each individual reduction.

The running time analysis is the same for all of the reductions from 3-OV and 3-HS. The preprocessing time of the algorithm for a graph of size $N^\delta$ is $(N^\delta)^t = N^{1-\eps'}$. Each update or query takes amortized $O((N^\delta)^{2-\eps'})$ time, so after $N^{1-2\delta}$ stages wherein we make $\tilde{O}(1)$ updates and queries, the total amortized time is $\tilde{O}(N^{1-\delta\eps'})$. This gives an algorithm for 3-OV or 3-HS in $\tilde{O}(N^{1-\eps'}+N^{1-\delta\eps'}) = O((|U||V||W|)^{1-\eps''})$ for a positive constant $\eps''$, refuting SETH or the 3-HS hypothesis.

\subsection{$(3/2-\eps)$-approximation requires quadratic update time}

In this section we give a quadratic lower bound per update for $(3/2-\eps)$-approximation for diameter and radius, and $(5/3-\eps)$-approximation for eccentricities on undirected, unweighted graphs.

\subsubsection{Diameter}

\begin{theorem}
\label{thm:32lb}
Let $t$, $\eps$, and $\eps'$ be positive constants. SETH implies that there exists no fully dynamic algorithm for $(3/2-\eps)$-approximate Diameter on undirected, unweighted graphs with $n$ vertices and $\tilde{O}(n)$ edges, which has preprocessing time $p(n)=O(n^t)$, amortized update time $u(n)=O(n^{2-\eps'})$, and amortized query time $q(n)=O(n^{2-\eps'})$. 

The same result holds for the incremental and decremental settings but for worst-case update and query times.
\end{theorem}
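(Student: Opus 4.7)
The plan is to extend the base 3-OV reduction from Section~\ref{basegraph}, choosing the path-length parameter $a$ as the constant $\lceil 1/(8\eps)\rceil$, so that $(6a+1)/(4a+1) > 3/2 - \eps$. In each stage $w \in W$ we add the bridge edges $c_Uc_V$ for every coordinate $c$ with $w[c]=1$, ask a single diameter query, and then undo the additions. The rest of the proof is to show that in each stage the current graph's diameter is exactly $4a+1$ when no $u\in U, v\in V$ make $(u,v,w)$ orthogonal, and at least $6a+1$ when at least one such pair exists, so that a $(3/2-\eps)$-approximate oracle separates the two values and the query answers whether an orthogonal triple containing this particular $w$ exists. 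Aggregating over the $N^{1-2\delta}$ stages settles the full unbalanced 3-OV instance, and the running-time bookkeeping of Section~\ref{basegraph} then refutes SETH.

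The core step is the distance analysis for the critical pair $(u^0, v^a)$. When $(u,v,w)$ is non-orthogonal, the shared coordinate $c$ with $u[c]=v[c]=w[c]=1$ produces the path $u^0\to u^a\to c_U\to c_V\to v^0\to v^a$ of length exactly $4a+1$. When $(u,v,w)$ is orthogonal, any $u^0$-$v^a$ walk must cross $C_U$-$C_V$ an odd number of times; a single crossing would demand the shared coordinate that orthogonality forbids, so the walk is forced into a length-$2a$ side detour through some $v'^0$ on the $V$-side (or symmetrically some $u''^a$ on the $U$-side) that supplies a second usable coordinate, pushing the total length to at least $6a+1$. To ensure that no non-critical pair inflates the NO-case diameter beyond $4a+1$, I plan to augment $G_\delta$ with two hub vertices: $z_U$ connected by a length-$a$ path to every $u^a$, and $z_V$ connected by a length-$a$ path to every $v^0$, with no direct edge between the hubs and no direct edge from either hub to a coordinate node. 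These cap $d(u^0,u'^0)$, $d(c_U,c'_U)$, $d(v^0,v'^0)$ all by $4a$ and every remaining pair by at most $4a+1$ in the NO case.

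The main obstacle is verifying that these hubs do not inadvertently give a sub-$(6a+1)$ shortcut for the critical pair in the orthogonal case. I will handle this by enumerating the only relevant hub-using routes, such as $u^0\to u^a\to z_U\to u''^a\to c_U\to c_V\to v^0\to v^a$ and $u^0\to u^a\to c_U\to c_V\to v'^0\to z_V\to v^0\to v^a$, and showing that each adds an extra $2a$ on top of the single mandatory crossing (since the crossing itself still requires a triple-supported coordinate for $(u'',v,w)$ or $(u,v',w)$, not for the forbidden $(u,v,w)$), so the minimum attainable hub-based length is again $6a+1$; routes using both hubs cost even more. Finally, the partially dynamic version with worst-case guarantees follows from the observation sketched earlier in the paper: each stage performs only insertions, or, by starting from the graph with all $c_Uc_V$ edges present, only deletions, with the effects reverted before the next stage; a worst-case incremental or decremental algorithm can simulate the reversal by rollback without inflating its per-operation bound, whereas the corresponding amortized lower bound is intentionally not claimed here.
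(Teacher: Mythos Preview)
Your proposal is correct and takes essentially the same approach as the paper. Your hub vertices $z_U$ and $z_V$ are exactly the paper's $x$ and $y$ (each connected by a length-$a$ path to every $u^a$, respectively every $v^0$), your choice of $a$ is equivalent to the paper's $a=\lceil(1-2\eps)/(8\eps)\rceil+1$, and your distance analysis for the NO case (diameter $\le 4a+1$ since every vertex is within $a$ of some $u^a$ or $v^0$) and the YES case ($d(u^0,v^a)\ge 6a+1$ because any $u^a$--$v^0$ path must detour through some other $u'^a$ or $v'^0$ at cost $2a$) matches the paper's argument, with your odd-crossing and hub-route enumeration giving a somewhat more explicit justification of the same bound.
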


\begin{proof}[Proof of Theorem~\ref{thm:32lb}]$ $

\paragraph{Construction}
\subparagraph{Initialization}

Let $a=\ceil{\frac{1-2\eps}{8\eps}}+1$.

We begin with an instance of 3-OV and construct a graph $G$ by first creating $G_\delta$ from section \ref{basegraph}. We add two nodes $x$ and $y$. For each node $u^a$, add a path of length $a$ between $x$ and $u^a$, and for each node $v^0$, add a path of length $a$ between $y$ and $v^0$.

\subparagraph{Stages} We proceed in $N^{2-\delta}$ stages, one for each element $w\in W$. For the current $w$, for each coordinate $c$ where $w[c]=1$, we add an edge between $c_U$ and $c_V$. We then query the diameter of $G$. We will show that if there is an orthogonal triple that includes $w$, then the diameter is least $6a+1$, and otherwise, the diameter is at most $4a+1$. A $(3/2-\eps)$-approximation algorithm for diameter distinguishes between these two cases and can thus detect an orthogonal triple that includes $w$ if one exists. If such an orthogonal triple is not detected, we undo the edge additions for the stage and continue to the next $w$.

\begin{figure}[h]
  \centering
  \includegraphics[width=.8\linewidth]{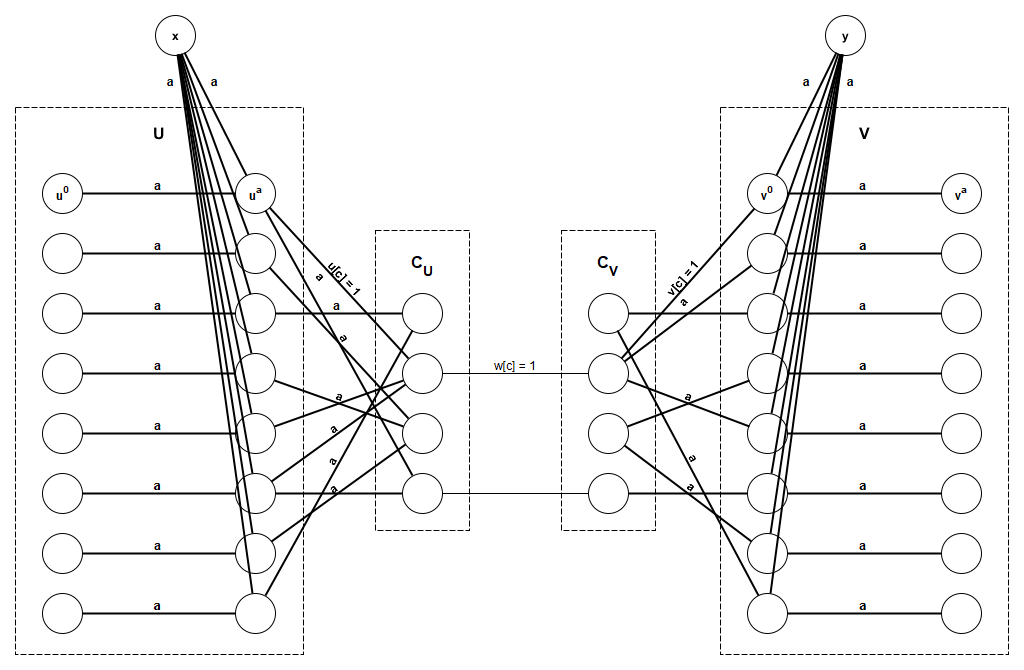}
  \caption{Sketch of Theorem \ref{thm:32lb} Construction. Bold edges represent paths, whose labels denote their length.}
\end{figure}


\paragraph{Correctness}
If for the current stage, for all $u\in U$, $v\in V$, $u\cdot v\cdot w\not= 0$, then for each $u,v$ there exists a coordinate $c$ such that $u[c]=v[c]=w[c]=1$. Thus, for all $u$ and $v$, $d(u^a,v^0)= 2a+1$. Also, for all $u,u'\in U$, $d(u^a,u'^a)\leq 2a$. We note that every vertex in the graph is of distance at most $a$ from \emph{some} vertex $u^a$ or $v^0$. Thus, the diameter is at most $4a+1$.

Suppose for the current stage there exist $u\in U$, $v\in V$ such that $u\cdot v\cdot w =0$. Fix $u$ and $v$. We claim that $d(u^0,v^a)\geq 6a+1$. The only paths between $u^0$ and $v^a$ go through $u^a$ and $v^0$. There does not exist a coordinate such that $u[c]=v[c]=w[c]=1$, so every path between $u^a$ and $v^0$ must visit a vertex $u'^a$ or $v'^0$ (for $u\in U$, $u'\neq u$, $v'\in V$, $v'\neq v$), which are of distance $2a$ from $u^a$ and $v^0$, respectively. Thus, $d(u^a,v^0)\geq 4a+1$ so $d(u^0,v^a)\geq 6a+1$.
\end{proof}
\subsubsection{Radius}
\begin{theorem}
\label{thm:32radlb}
Let $t$, $\eps$, and $\eps'$ be positive constants. The 3-HS hypothesis implies that there exists no fully dynamic algorithm for $(3/2-\eps)$-approximate Radius on undirected, unweighted graphs with $n$ vertices and $\tilde{O}(n)$ edges, which has preprocessing time $p(n)=O(n^t)$, amortized update time $u(n)=O(n^{2-\eps'})$, and amortized query time $q(n)=O(n^{2-\eps'})$. 

The same result holds for the incremental and decremental settings but for worst-case update and query times.
\end{theorem}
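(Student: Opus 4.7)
The plan is to adapt the framework of Theorem~\ref{thm:32lb} to radius by reducing from unbalanced 3-HS rather than 3-OV, exploiting the fact that 3-HS's quantifier structure $(\exists u)(\exists w)(\forall v)$ matches the min-over-centers/max-over-farthest-points structure of radius. I keep the overall construction style from Section~\ref{basegraph}: build a modification of $G_\delta$, run $N^{1-2\delta}$ stages indexed by $w \in W$, each stage adding edges $c_U c_V$ for $w[c]=1$, querying the radius, and then undoing the insertions. Set $a = \lceil (1-2\eps)/(8\eps)\rceil + 1$ as in Theorem~\ref{thm:32lb}, so that a target gap of $6a+1$ versus $4a+1$ yields a $(3/2-\eps)$ ratio; the running time accounting then follows verbatim from the base construction, contradicting the 3-HS hypothesis under the assumed update/query bounds.

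Correctness has two directions. For the upper bound (hitter case), take any $u \in U$ that hits every $v \in V$ for the current $w$, and show that $u^0$ has eccentricity at most $4a+1$: for each $v$, the hitter coordinate $c$ with $u[c]=w[c]=v[c]=1$ yields the length-$(4a+1)$ path $u^0 \to u^a \to c_U \to c_V \to v^0 \to v^a$, and distances from $u^0$ to apex, coordinate, and intermediate vertices are bounded analogously. For the lower bound (no-hitter case), I need to argue that \emph{every} vertex $z$ has eccentricity at least $6a+1$. For $z=u^i$ along a $U$-path, the witness is $v^{*a}_u$ for some $v^*_u$ with $u \cdot v^*_u \cdot w = 0$; the detour analysis from Theorem~\ref{thm:32lb}'s correctness gives $d(u^i,v^{*a}_u)\geq 6a+1-i$, and the symmetric argument handles $z=v^j$.

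The main obstacle is ruling out the remaining vertex types---apex vertices, coordinate nodes $c_U,c_V$, and intermediate path nodes near the ``center-friendly'' side---as spurious low-eccentricity centers. In the plain $G_\delta$ construction from Theorem~\ref{thm:32lb}, an apex like $x$ can reach every $v^a$ through some intermediate hitter $u'$ of $v$ alone (a much weaker property than a single $u$ hitting all $v$), giving $\varepsilon(x)\leq 4a+1$ regardless of whether a global hitter exists and destroying the $3/2$ gap. To defeat these spurious centers, I would augment the base graph with auxiliary structure---for instance, pendant paths of carefully chosen length attached to the apex and coordinate vertices, or an asymmetric apex construction that breaks the symmetry between $x$ and the intended centers $u^0$---tuned so that (i) they do not lower $\varepsilon(u^0)$ below $4a+1$ in the hitter case, yet (ii) they force every non-$u^0$ vertex to have eccentricity at least $6a+1$ in the no-hitter case. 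The bulk of the work is a case analysis over vertex types verifying both (i) and (ii) simultaneously, and balancing these two requirements is the primary technical challenge, since naive pendants tend to shift eccentricities uniformly across vertex types and erase the gap.
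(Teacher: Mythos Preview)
Your proposal correctly identifies the obstacle but does not overcome it. You recognize that in a single copy of the Theorem~\ref{thm:32lb} graph, the apex $x$ (and similarly $y$, coordinate nodes, and path vertices near them) can reach every $v^a$ via some $u'$ that hits $v$ \emph{individually}, giving $\varepsilon(x)\le 4a+1$ regardless of whether a global hitter exists. But your proposed fix---pendant paths or an asymmetric apex---is left entirely unspecified, and you yourself observe that naive pendants shift eccentricities uniformly and erase the gap. As written, there is no construction and hence no proof; the ``bulk of the work'' you allude to is precisely the missing idea.

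The paper sidesteps this difficulty with a clean symmetry trick rather than pendant tuning: take two copies $G_{\mathrm{left}}$ and $G_{\mathrm{right}}$ of the Theorem~\ref{thm:32lb} graph and, for each $u\in U$, merge $u^0_{\mathrm{left}}$ with $u^0_{\mathrm{right}}$ into a single vertex $u^0$. Now every vertex other than some $u^0$ lies strictly on one side and must pass through a $u^0$ to reach the opposite side; by left/right symmetry, each $u^0$ has a farthest node on \emph{each} side, so any non-$u^0$ vertex is strictly farther than that $u^0$ from the farthest node on the opposite side. Hence the center must be some $u^0$, and the analysis collapses to the $u^0$-versus-$v^a$ detour bound already proved for diameter: if $\{u,w\}$ is a hitter then $\varepsilon(u^0)\le 4a+1$, and if no hitter exists then every $u^0$ has $\varepsilon(u^0)\ge 6a+1$. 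This doubling eliminates all spurious centers in one stroke, with no case analysis over apex or coordinate vertices.
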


\begin{proof}[Proof of Theorem~\ref{thm:32radlb}]$ $

\paragraph{Construction}
\subparagraph{Initialization}

Let $a=\ceil{\frac{1-2\eps}{8\eps}}+1$.

We begin with an instance of 3-HS and construct a graph $G$ by first creating two copies of the graph from Theorem \ref{thm:32lb}; call these copies $G_{left}$ and $G_{right}$. $G$ will be the graph composed of $G_{left}$ and $G_{right}$, where for each $u\in U$, we merge the vertex $u_{left}^0$ in $G_{left}$ with the vertex $u_{right}^0$ in $G_{right}$.

\subparagraph{Stages} We proceed in $N^{2-\delta}$ stages, one for each element $w\in W$. For the current $w$, for each coordinate $c$ where $w[c]=1$, we add an edge between $c_{U,left}$ and $c_{V,left}$, and the same edge on the right side. We then query the radius of $G$. We will show that if there is a hitting set that includes $w$, then the radius is most $4a+1$, and otherwise, the radius is at least $6a+1$. A $(3/2-\eps)$-approximation algorithm for radius distinguishes between these two cases and can thus detect a hitting set that includes $w$ if one exists. If such a hitting set is not detected, we undo the edge additions for the stage and continue to the next $w$.

\begin{figure}[h]
  \centering
  \includegraphics[width=\linewidth]{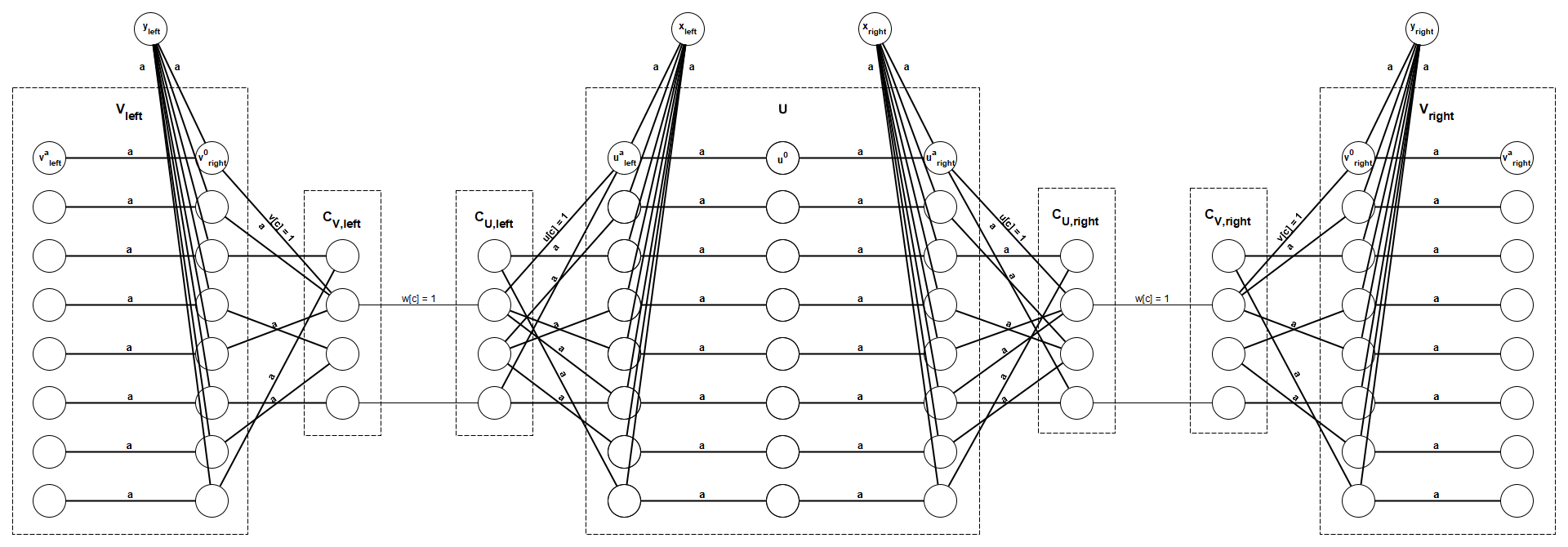}
  \caption{Sketch of Theorem \ref{thm:32radlb} Construction. Bold edges represent paths, whose labels denote their length.}
\end{figure}

\paragraph{Correctness}

Suppose there is a hitting set $\{u,w\}$ for the current stage. We will show that the node $u^0$ is of distance at most $4a+1$ from every other node. The node $u^0$ can reach all $u'^a$ nodes where $u'\in U$ on either side of the graph in $3a$ steps, via $x_{left}$ or $x_{right}$. 
Since $u$ and $w$ define the hitting set, $u^0$ has a path of length $3a+1$ to all $v^0$ on both sides of the graph, via the nodes for the coordinates equal to 1 in $u^0$, $w$, and each $v^0$. 

We note that all nodes are at distance $a$ from some node $u'^a$ where $u'\in U$, or some $v^0$ either on the left or right side of the graph. Thus, the previous paragraph implies that $u^0$ can reach every node in $4a+1$ steps.

Suppose there is no hitting set involving $w$ and any $u$. Fix a side of the graph (we will omit subscripts \emph{left} and \emph{right}). The only paths between $u^0$ and any $v^a$ go through $u^a$ and $v^0$ (on the appropriate side of the graph). For all $u$, there exists a $v$ such that there does not exist a coordinate $c$ with $u[c]=v[c]=w[c]=1$. Fix $u$ and $v$. Every path between $u^a$ and $v^0$ must visit some node $u'^a$ or some node $v'^0$ (for $u\in U$, $u'\neq u$, $v'\in V$, $v'\neq v$), which are of distance $2a$ from $u^a$ and $v^0$, respectively. Thus, $d(u^a,v^0)\geq 4a+1$ so $d(u^0,v^a)\geq 6a+1$. 
Thus, all $u^0$ have eccentricity at least $6a+1$. All other nodes have a higher eccentricity than some $u^0$, because they must travel via some $u^0$ to the other side of the graph and are thus farther from $u^0$'s farthest node (of which it has at least one on either side of the graph, by symmetry). Thus, the radius must be at least $6a+1$.
\end{proof}
\subsubsection{Eccentricities}
\begin{theorem}
\label{thm:53ecclb}
Let $t$, $\eps$, and $\eps'$ be positive constants. SETH implies that there exists no fully dynamic algorithm for $(5/3-\eps)$-approximate all Eccentricities on undirected, unweighted graphs with $n$ vertices and $\tilde{O}(n)$ edges, which has preprocessing time $p(n)=O(n^t)$, amortized update time $u(n)=O(n^{2-\eps'})$, and amortized query time $q(n)=O(n^{2-\eps'})$. 

The same result holds for the incremental and decremental settings but for worst-case update and query times.
\end{theorem}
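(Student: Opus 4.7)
The plan is to follow the 3-OV framework of Section~\ref{basegraph} and the diameter construction of Theorem~\ref{thm:32lb} verbatim, only replacing the $G_\delta$ gadget with one whose eccentricity of a distinguished class of vertices has a gap approaching $5/3$ (not just $3/2$) between ``some $(u,v)$ with $u\cdot v\cdot w=0$'' and ``no such $(u,v)$'' stages. The static $(5/3-\eps)$-hardness for all eccentricities proved in \cite{chechikdiam,diamstoc18} provides the template gadget; I would adapt it to the stage-based dynamic setting.

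My starting point is to keep the backbone of $G_\delta$ (the $u^0$--$u^a$--$C_U$--$C_V$--$v^0$--$v^a$ paths together with the hubs $x,y$), because that structure is exactly what turns ``$u\cdot v\cdot w\neq 0$'' into a short direct route through a common coordinate while forcing any $u^a\leadsto v^0$ path in the OV case to detour through some other $u'^a$ or $v'^0$ via $x$ or $y$. To lift the direct/detour gap from the diameter ratio $(6a+1)/(4a+1)\to 3/2$ to eccentricity ratio $5/3$, I would rescale the coordinate-side path lengths asymmetrically, shortening the $u^a$--$c_U$ and $c_V$--$v^0$ segments relative to the $u$- and $v$-paths, so that a direct $u^0$--$v^a$ route has length roughly $3a+\Theta(1)$ while the forced reroute in the OV case has length at least $5a+\Theta(1)$. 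The distinguished witness vertex in each stage is then $u^{*0}$ for the $u^*$ participating in the OV triple.

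Correctness would mirror Theorem~\ref{thm:32lb}: in a no-OV stage, every $(u,v)$ pair has some $c$ with $u[c]=v[c]=w[c]=1$, so $d(u^0,v^a)$ is bounded by the short direct value and $\mathrm{ecc}(u^0)\leq 3a+\Theta(1)$ for every $u$; in an OV stage the specific pair $(u^*,v^*)$ forces every $u^{*0}$--$v^{*a}$ path to visit some $u'^a$ or $v'^0$, each of which is $2a$ away through $x$ or $y$, yielding $\mathrm{ecc}(u^{*0})\geq 5a+\Theta(1)$. A $(5/3-\eps)$-approximate all-eccentricities oracle queried after each stage's batch of coordinate edges is inserted (or removed, for the decremental variant) therefore flags $u^{*0}$ and solves the current stage's 2-OV instance. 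Iterating over the $N^{1-2\delta}$ values of $w\in W$ as in Section~\ref{basegraph} with $\tilde O(1)$ updates and one query per stage gives a 3-OV algorithm fast enough to refute SETH under the stated preprocessing/update assumptions. The worst-case-time incremental and decremental analogs follow from the ``roll-back'' argument noted in the introduction, since each stage only makes a small batch of edge changes that can be inserted and then undone (or pre-inserted and then deleted).

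The hard part will be calibrating the path lengths so the $5/3$ ratio is actually realized by the \emph{eccentricity} of the witness, not just by the $u^0$--$v^a$ distance. In particular, once the coordinate-side paths are shortened, the hub-mediated distance $d(u^0,u'^0)$ through $x$ (equal to twice the $u$-path length plus twice the $x$-path length) no longer automatically falls below the new $3a+\Theta(1)$ no-OV target, and it can saturate $\mathrm{ecc}(u^0)$ and flatten the ratio back toward $3/2$. Managing this likely needs asymmetric path lengths on the $u$- and $v$-sides, or splitting the witness role between two vertex classes with controlled attachments to $x$ and $y$, so that $\mathrm{ecc}$ of the witness is dominated purely by its $u^0$--$v^a$ distance. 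Once this gadget is in place, both the fully-dynamic $n^{2-o(1)}$ bound and its worst-case partial-dynamic counterpart are inherited from Section~\ref{basegraph} with no further changes to the reduction.
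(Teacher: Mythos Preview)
Your overall framework is right, and you correctly diagnose the obstruction: if you keep the diameter gadget verbatim and read off eccentricities of the $u^0$ vertices, the hub-mediated $u^0$--$u'^0$ distance through $x$ caps the ratio at $3/2$. But the fix the paper uses is much simpler than the asymmetric rescaling or witness-splitting you propose, and it avoids the delicate balancing you anticipate.

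The paper keeps all $G_\delta$ path lengths at $a$, drops $y$ entirely, and changes the hub attachment: $x$ is connected by \emph{unit edges} to each $u^0$ (not by length-$a$ paths to each $u^a$). The witness is then $u^a$, not $u^0$. With this one change, $d(u^a,u'^a)\le 2a+2$ via $u^0\to x\to u'^0$, so in a no-OV stage every $u^a$ has eccentricity at most $3a+2$ (everything is within $a$ of some $u'^a$ or $v'^0$, and $d(u^a,v'^0)=2a+1$). In an OV stage, for the bad pair $(u,v)$ there is no direct coordinate route, so any $u^a$--$v^a$ path must visit some other $u'^a$ or $v'^0$, adding a $2a$ detour to the would-be $3a+1$ direct path and giving $d(u^a,v^a)\ge 5a+1$. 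The ratio $(5a+1)/(3a+2)\to 5/3$ is exactly what you wanted, with no rescaling at all. The removal of $y$ is what forces the OV-case detour to cost the full $2a$ on the $v$-side; the unit attachment of $x$ to $u^0$ is what keeps the $u$-side hub distance from inflating the no-OV eccentricity beyond $3a+2$.

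So your plan would likely work eventually, but you are heading toward a more complicated gadget than necessary. The rest of your argument (stage-based 3-OV reduction, running-time accounting from Section~\ref{basegraph}, and the roll-back observation for worst-case partial-dynamic bounds) matches the paper.
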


\begin{proof}[Proof of Theorem~\ref{thm:53ecclb}]$ $

\paragraph{Construction}
\subparagraph{Initialization}

Let $a=\frac{7-6\eps}{9\eps}$.

We begin with an instance of 3-OV and construct a graph $G$ by first creating $G_\delta$ from section \ref{basegraph}. We then add a node $x$, with an edge between $x$ and each node $u^0$ for all $u\in U$.

\subparagraph{Stages} We proceed in $N^{2-\delta}$ stages, one for each element $w\in W$. For the current $w$, for each coordinate $c$ where $w[c]=1$, we add an edge between $c_U$ and $c_V$. We then query all eccentricities of $G$. We will show that if there is an orthogonal triple $u,v,w$, then $u^a$ will have eccentricity at least $5a+1$; otherwise, for all $u\in U$, $u^a$ will have eccentricity at most $3a+2$. A $(5/3-\eps)$-approximation algorithm for all eccentricities distinguishes between these two cases and can thus detect an orthogonal triple that includes $w$ if one exists.

If such an orthogonal triple is not detected, we undo the edge additions for the stage and continue to the next $w$.

\begin{figure}[h]
  \centering
  \includegraphics[width=.8\linewidth]{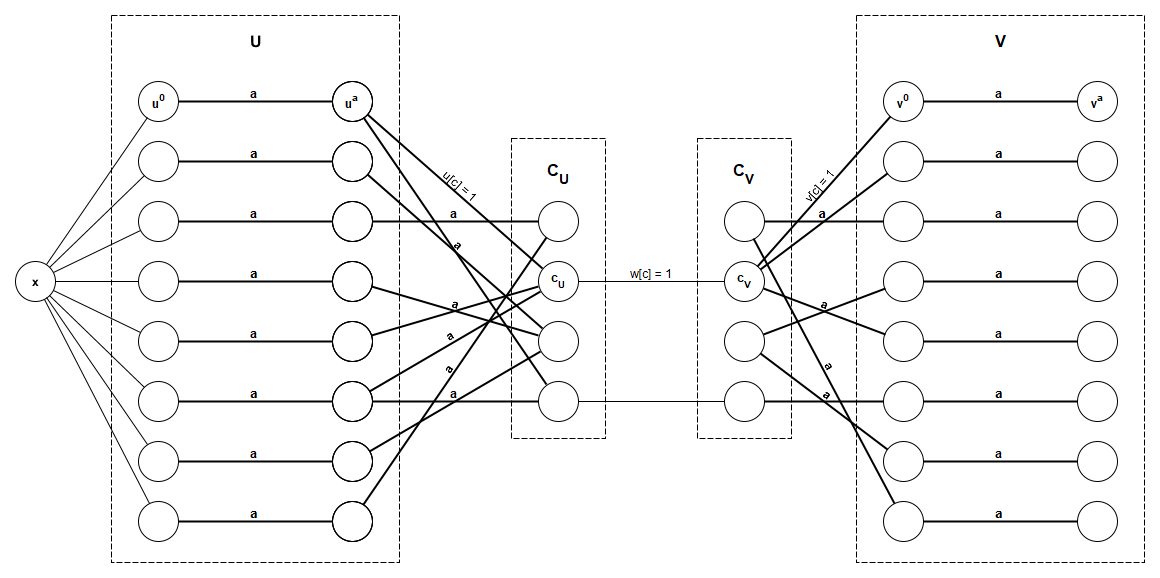}
  \caption{Sketch of Theorem \ref{thm:53ecclb} Construction. Bold edges represent paths, whose labels denote their length.}
\end{figure}


\paragraph{Correctness}
If for the current stage, for all $u\in U$, $v\in V$, $u\cdot v\cdot w\not= 0$, then for each $u,v$ there exists a coordinate $c$ such that $u[c]=v[c]=w[c]=1$. Thus, for all $u$ and $v$, $d(u^a,v^0) = 2a + 1$. Also, for each $u,u'\in U$, $d(u^a,u'^a)\leq 2a+2$ via a path through $x$. We note that all nodes are within distance $a$ of \emph{some} node $u^a$ or $v^0$, except $x$ which is at most distance $a+1$ from any $u^a$, so the eccentricity of all $u^a$ nodes is at most $3a+2$.

Suppose for the current stage there exist $u\in U$, $v\in V$ such that $u\cdot v\cdot w =0$. Fix $u$ and $v$. We claim that $d(u^a,v^a)\geq 5a+1$. Since there is no coordinate $c$ for which $u[c]=v[c]=w[c]=1$, the path from $u^a$ to $v^a$ must visit some $u'^a$ or $v'^a$ (for $u\in U$, $u'\neq u$, $v'\in V$, $v'\neq v$). This means adding a detour of at least $2a$ from a direct path of length $3a+1$, giving $d(u^a,v^a)\geq 5a+1$. Thus, the eccentricity of $u^a$ must be at least $5a+1$.
\end{proof}

\subsection{$(2-\eps)$-approximation requires linear update time}

In this section we give a linear lower bound per update for $(2-\eps)$-approximation for diameter, radius, and fixed-vertex eccentricity.

\begin{theorem}
\label{thm:2approxdiamrad}
Let $t$, $\eps$, and $\eps'$ be positive constants. SETH implies that there exists no fully dynamic algorithm for $(2-\eps)$-approximate Diameter, Radius, or fixed-vertex Eccentricity on undirected, unweighted graphs with $n$ vertices and $\tilde{O}(n)$ edges, which has preprocessing time $p(n)=O(n^t)$, amortized update time $u(n)=O(n^{1-\eps'})$, and amortized query time $q(n)=O(n^{1-\eps'})$.

The same result holds for the incremental and decremental settings but for worst-case update and query times.
\end{theorem}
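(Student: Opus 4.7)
The plan is to reduce from unbalanced 2-OV, which SETH implies requires $(|U|\cdot|V|)^{1-o(1)}$ time by the same padding argument used above to derive the unbalanced 3-OV hypothesis. The reduction is the natural linear analogue of the $G_\delta$ framework of Section~\ref{basegraph}: since we now want to rule out amortised update time $n^{1-\eps'}$ rather than $n^{2-\eps'}$, we drop from $3$ vectors down to $2$ and encode only the $U$-side of the instance into the preprocessed graph, leaving one vector per stage for the $V$-side. Concretely, set $\delta=(1-\eps')/t$, take $|U|=N^{\delta}$ and $|V|=N^{1-\delta}$, and discard degenerate vectors and coordinates; preprocessing on $n=\tilde O(N^{\delta})$ vertices then costs $n^{t}=N^{1-\eps'}$, and we run $N^{1-\delta}$ stages, one per $v\in V$.

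For the graph I would begin with the ``left half'' of $G_\delta$ with path length $a=\Theta(1/\eps)$: for each $u\in U$ a length-$a$ path $u^{0}$-$u^{a}$, for each coordinate $c$ a pair $(c_{U},c_{V})$ joined by a length-$a$ path, and for each $(u,c)$ with $u[c]=1$ a length-$a$ path from $u^{a}$ to $c_{U}$. A designated vertex $y$ replaces the erased $V$-side; it is the update-attachment site for each stage and also the fixed vertex in the Eccentricity statement. For the Radius statement I would glue a mirrored copy of the base graph along the $u^{0}$ vertices exactly as in Theorem~\ref{thm:32radlb}, so that a radius-realising centre must be some $u^{0}$. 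Each stage $v\in V$ inserts a short edge from $y$ to $c_{V}$ for each $c$ with $v[c]=1$, asks one approximate query, then undoes the batch---or, in the partially dynamic worst-case versions, rolls the data structure back, exactly as in the $n^{2-o(1)}$ theorems above.

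The correctness claim is the familiar orthogonality-to-distance encoding: when $u\cdot v\neq 0$ the path $y\to c_{V}\to c_{U}\to u^{a}\to u^{0}$ realises the small distance for a witnessing coordinate $c$, while when $u\perp v$ any $y$-to-$u^{0}$ path is forced to detour through some $u'\not\perp v$, roughly doubling the direct length. Making this detour an actual factor-$2$ rather than the factor-$3/2$ achieved by $G_\delta$-style constructions is the main technical obstacle: I would close the gap by a careful choice of $a$ together with always-on pendant paths from $y$ to each $c_{V}$ that cap $d(y,c_{V})$ in the no-instance case, ensuring the coordinate side never realises the eccentricity and that both the direct and the detour path lengths are dominated by one quantity differing by a factor of exactly $2-o_a(1)$. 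A $(2-\eps)$-approximation of Diameter, Radius, or the eccentricity of the fixed $y$ then distinguishes the two cases, so scanning all $v\in V$ solves the 2-OV instance. The time bookkeeping is identical to Section~\ref{basegraph}: $N^{1-\delta}$ stages using $\tilde O(1)$ updates each at amortised cost $\tilde O(n^{1-\eps'})=\tilde O(N^{\delta(1-\eps')})$ total to $\tilde O(N^{1-\delta\eps'})$, which together with the $N^{1-\eps'}$ preprocessing gives an $\tilde O(N^{1-\Omega(1)})$ algorithm for unbalanced 2-OV and therefore contradicts SETH.
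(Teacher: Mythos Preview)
Your high-level reduction plan is right and matches the paper: unbalanced $2$-OV with $|U|=N^{\delta}$, $|V|=N^{1-\delta}$, encode $U$ into the preprocessed graph, one stage per $v\in V$, and the time bookkeeping you give is exactly the paper's. The problem is the graph itself.

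With your construction (length-$a$ paths $u^0$--$u^a$, $u^a$--$c_U$, $c_U$--$c_V$, and stage edge $y$--$c_V$), the YES distance $d(y,u^0)$ is $3a+1$ via $y\to c_V\to c_U\to u^a\to u^0$. In the NO case, the detour $y\to c_V\to c_U\to u'^a\to c'_U\to u^a\to u^0$ has length $5a+1$. So the ratio is $(5a+1)/(3a+1)\to 5/3$, not $2$, for every choice of $a$. Your proposed fix, adding always-on pendant paths from $y$ to each $c_V$, cannot help: such pendants give an \emph{additional} $y$-to-$u^0$ route and hence can only \emph{shorten} the NO distance, never lengthen it. The $5a+1$ detour is always present, so the gap is stuck at $5/3$.

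The paper's construction avoids this by dropping the $c_U$/$c_V$ split entirely. There is a single coordinate node $c$ on each side, and the central vertex $s$ is joined to $c$ both by a permanent length-$2a$ path and, during the stage for $v$, by a length-$1$ edge whenever $v[c]=1$. Now YES gives $d(s,u^0)=1+a+a=2a+1$, while in the NO case the shortest $s$-to-$u^0$ route uses the permanent $2a$ path to a coordinate $c$ with $u[c]=1$, giving $2a+a+a=4a$; the detour through some $u'$ is $4a+1$ and hence no shorter. The ratio $4a/(2a+1)\to 2$ is exactly what is needed. Moreover, the paper builds two mirror copies around $s$ (encoding $U$ on both sides) so that $s$ is automatically the radius center and $d(u^0_{\text{left}},u^0_{\text{right}})=8a$ realises the diameter; this handles Diameter, Radius, and the eccentricity of $s$ with a single graph rather than the separate radius gluing you propose.
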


\begin{proof}[Proof of Theorem~\ref{thm:2approxdiamrad}]$ $

\subsubsection{Construction}
\subsubsection*{Initialization}

Let $a = \ceil{\frac{2-\eps}{2\eps}}+1$.

We begin with an instance of 2-OV with vector sets $U$ and $V$ of vectors, with $|U| = N^\delta$ and $|V| = N^{1-\delta}$. We create a graph $G$ as follows. Add a node $s$ and for each coordinate $c$, create two paths of length $2a$ beginning at $s$, and denote the endpoints of the paths as $c_{left}$ and $c_{right}$.

Next, create two paths of length $a$ for each vector $u\in U$. Denote the endpoints of one path by $u_{left}^0$ and $u_{left}^a$, and the endpoints of the other by $u_{right}^0$ and $u_{right}^a$. Finally, we encode each vector $u\in U$ in the graph by connecting $u_{left}^a$ to $c_{left}$ with a path of length $a$ if $u[c] = 1$, and doing the same on the right side of $G$. If $u$ has no coordinates equal to 1, then we may report that there is an orthogonal pair and halt; thus there will be no disconnected nodes in $G$.

Essentially, we have just created a modified version of $G_\delta$, but we encoded $U$ twice instead of encoding $V$.

\subsubsection*{Stages}

We proceed in $N^{1-\delta}$ stages, one for each element $v\in V$. For the current $v$, for each coordinate $c$ where $v[c] = 1$, we add edges $(s,c_{left})$ and $(s,c_{right})$. 

We then query the diameter or the radius of $G$ or the eccentricity of $s$. We will show that the eccentricity of $s$ is always equal to the radius, and we will show that if the diameter is least $8a$ or the radius is at least $4a$, then there is an orthogonal pair $u,v$; otherwise, the diameter is at most $4a+2$ or the radius is at most $2a+1$. We have set $a$ such that a $(2-\eps)$-approximation algorithm for diameter or radius or eccentricity of $s$ distinguishes between these two cases and can thus detect an orthogonal pair $u, v$ if one exists. If the query does not detect such an orthogonal pair, we undo the edge additions for the stage and continue to the next $v$.

We can modify the stage to be decremental by beginning with edges from $s$ to all nodes $c_{left}$ and $c_{right}$, and removing the excess edges each stage.

\begin{figure}[h]
  \centering
  \includegraphics[width=.8\linewidth]{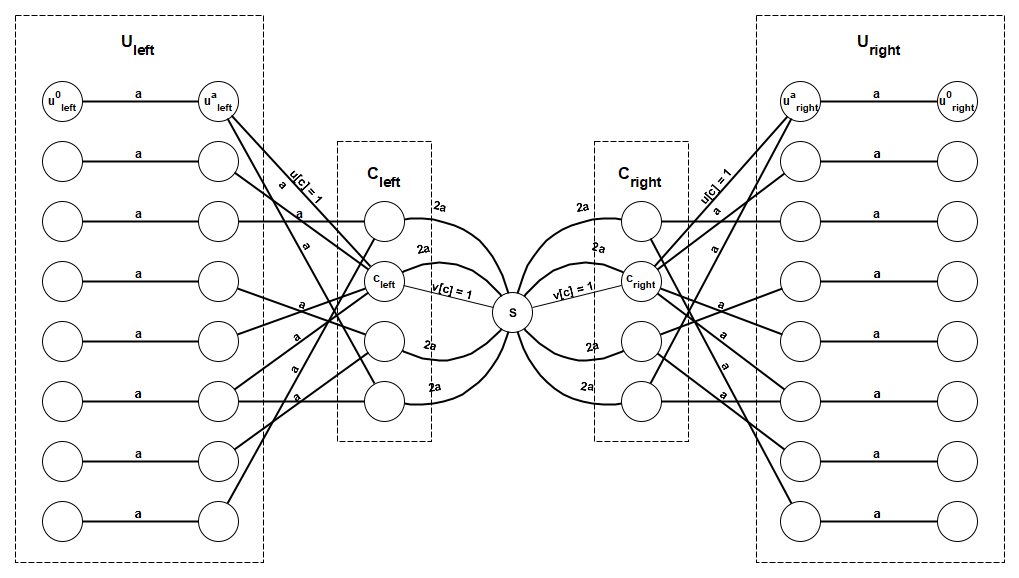}
  \caption{Sketch of Theorem \ref{thm:2approxdiamrad} Construction. Bold edges represent paths, whose labels denote their length.}
\end{figure}

\subsubsection{Analysis}

\subsubsection*{Correctness}

We first claim that the node $s$ is always the center of $G$, so the eccentricity of $s$ and the radius of $G$ are always equal. Let $x_{right}$ be the node farthest from $s$ on the right side of $G$. Since the graph is exactly symmetrical, the counterpart $x_{left}$ of $x_{right}$ is such that $d(s,x_{right})=d(s,x_{left})$. Any node $y$ to the left of $s$ must pass through $s$ to reach $x_{right}$, so $y$ must have a higher eccentricity than $s$ because it is farther from the node farthest from $s$. Symmetrically, any node $y$ to the right of $s$ must pass through $s$ to reach $x_{left}$, so $y$ must have a higher eccentricity than $s$ because it is farther from the node farthest from $s$. 

If for the current stage, for all $u$, $u\cdot v \neq 0$, then for each $u$ there must be some coordinate $c$ such that $u[c]=v[c]=1$. Then there is a path of length 1 from $s$ to $c_{left}$, and a path of length $a$ from $c_{left}$ to $u_{left}^a$, for all $u$. The same is true on the right side. Then since all nodes except $s$ are of distance at most $a$ from a node $u_{left}^{'a}$ or $u_{right}^{'a}$ for some $u'\in U$, all nodes are accessible in at most $2a+1$ steps from $s$. This means that the radius and eccentricity of $s$ is $2a+1$, and the diameter is at most $4a+2$.

If for the current stage there is some $u$ such that $u\cdot v = 0$, then there is no direct path from $s$ to $u^0$ on either side via a vector coordinate $c$ and $u^a$. A path via a different $u'_a$ would be of length at least $4a+1$, because returning to a $c'$ where $u[c']=1$ would cost an additional $2a$ from the direct path. The shortest path would thus be along the length-$2a$ path from $s$ to $c'$, giving $d(s,u^0) = 4a$. The radius and eccentricity of $s$ must be at least $4a$ and diameter must be at least $8a$, because $d(u_{left}^0,u_{right}^0) = d(s, u_{left}^0) + d(s,u_{right}^0) = 4a + 4a = 8a$.

\subsubsection*{Running time}

We assume for the sake of contradiction that the algorithm of Theorem \ref{thm:2approxdiamrad} exists. Let $n=N^\delta$ be the size of $G$. We have that $u(n) = q(n) = O((N^{\delta})^{1-\eps'})$. After initialization and $|V| = N^{1-\delta}$ stages, the total update and query time is then at most $\tilde{O}(N^{1-\delta\eps'})$. The preprocessing time $p(n)$ for the algorithm on $G$ is $O((N^\delta)^t) = O(N^{1-\eps'})$. Thus the total time of the algorithm is $\tilde{O}(N^{1-\eps'}+N^{1- \delta\eps'})$. This contradicts SETH, because SETH implies that no algorithm exists for 2-OV in $O((|U|\cdot|V|)^{1-\eps''}) = O(N^{1-\eps''})$ time for any $\eps''>0$.

\end{proof}

\subsection{$(2-\eps)$-approximation in directed graphs requires quadratic update time}

In this section we give a quadratic lower bound per update for $(2-\eps)$-approximation for  eccentricities and $(2-\eps)$-approximation for radius on directed, unweighted graphs.
\subsubsection{Eccentricities}
\begin{theorem}
\label{thm:2direcc}
Let $t$, $\eps$, and $\eps'$ be positive constants. SETH implies that there exists no fully dynamic algorithm for $(2-\eps)$-approximate all Eccentricities on directed, unweighted graphs with $n$ vertices and $\tilde{O}(n)$ edges, which has preprocessing time $p(n)=O(n^t)$, amortized update time $u(n)=O(n^{2-\eps'})$, and amortized query time $q(n)=O(n^{2-\eps'})$.

The same result holds for the incremental and decremental settings but for worst-case update and query times.
\end{theorem}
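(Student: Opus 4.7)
The plan is to adapt the undirected base graph $G_\delta$ of Section~\ref{basegraph} to the directed setting by orienting every edge so that distance information flows strictly from the $U$-side, through the coordinate gadget, to the $V$-side. Explicitly, I orient the $a$-path for each $u\in U$ as $u^0\to u^1\to\cdots\to u^a$, each $u$-encoding path as $u^a\to\cdots\to c_U$ (present iff $u[c]=1$), each stage edge as $c_U\to c_V$ (for $w[c]=1$), each $v$-encoding path as $c_V\to\cdots\to v^0$, and each $v$-path as $v^0\to v^1\to\cdots\to v^a$. For the decremental variant, start with all $c_U\to c_V$ edges present and remove those with $w[c]=0$ in each stage. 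The parameter $a$ may be taken to be any fixed positive constant.

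For correctness within a single stage $w$, first suppose no orthogonal triple involving $w$ exists. Then for every pair $(u,v)$ some coordinate $c$ satisfies $u[c]=v[c]=w[c]=1$, producing a directed path of length $4a+1$ from $u^0$ to $v^a$; every vertex of the graph is reachable from every $u^0$ along forward edges, so in particular every eccentricity is finite. Conversely, suppose $(u^*,v^*,w)$ is an orthogonal triple. I claim $\varepsilon(u^{*0})=\infty$: the only outgoing edges from $u^{*0}$ lead forward to $u^{*a}$; from $u^{*a}$ one only reaches $c_U$'s with $u^*[c]=1$; such a $c_U$ has an outgoing edge to $c_V$ only when $w[c]=1$, but then $v^*[c]=0$ so $c_V$ has no edge toward $v^{*0}$; and by construction no coordinate or $V$-side node has any edge back toward the $U$-side, so the ``forward'' route just described is the only candidate. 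A $(2-\eps)$-approximation algorithm for all Eccentricities must return an infinite estimate exactly when the true eccentricity is infinite (the lower bound $\varepsilon(v)/(2-\eps)\leq \hat{\varepsilon}(v)$ forces $\hat{\varepsilon}(v)=\infty$ whenever $\varepsilon(v)=\infty$), so scanning the output for any infinite estimate decides whether an orthogonal triple involving $w$ exists.

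The running-time accounting is the template of Section~\ref{basegraph}. Set $\delta=(1-\eps')/t$, $|U|=|V|=N^\delta$, $|W|=N^{1-2\delta}$, so $|U|\cdot|V|\cdot|W|=N$; the constructed graph has $n=\tilde{O}(N^\delta)$ vertices and $\tilde{O}(n)$ edges. Preprocessing costs $n^t=N^{1-\eps'}$, and the $N^{1-2\delta}$ stages each perform $\tilde{O}(1)$ updates and one query of amortized cost $\tilde{O}(n^{2-\eps'})$, contributing $\tilde{O}(N^{1-2\delta}\cdot N^{\delta(2-\eps')})=\tilde{O}(N^{1-\delta\eps'})$ in total. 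Summing yields $\tilde{O}(N^{1-\eps''})$ for some $\eps''>0$, which contradicts SETH through the 3-OV hypothesis. The incremental and decremental worst-case statements then follow by the rollback argument described earlier in Section~\ref{basegraph}: within each stage all updates are of the same type (insertions for incremental, deletions for decremental), and we simply undo them after issuing the query.

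The main obstacle is checking that the directed orientation truly forbids every detour around the coordinate gadget when an orthogonal triple exists. This reduces to the single observation that no edge of the graph has an endpoint on the coordinate or $V$-side with a target on the $U$-side, so no ``re-entry'' into the $U$-side is possible and the forward route is the only candidate, which must fail when $u^*\cdot v^*\cdot w=0$. Everything else is a direct specialization of the SETH/3-OV template in Section~\ref{basegraph}.
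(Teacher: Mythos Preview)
Your construction has a fatal flaw: as you have oriented it, the graph is a DAG with multiple sources (the vertices $u^0$ for $u\in U$) and multiple sinks (the $v^a$), so \emph{every} vertex has infinite eccentricity in \emph{every} stage, regardless of whether an orthogonal triple exists. Concretely, $u^0$ can never reach $u'^{0}$ for $u'\neq u$ (those are sources with no in-edges), and can never reach any $c_U$ with $u[c]=0$; symmetrically, no vertex on the coordinate or $V$-side can reach anything on the $U$-side. Hence your claim that ``every vertex of the graph is reachable from every $u^0$'' in the no-triple case is false, and ``scan the output for any infinite estimate'' always fires and tells you nothing about the 3-OV instance.

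The paper's proof avoids exactly this issue by adding two auxiliary vertices $x$ and $y$ and a few short directed paths (edges $u^a\to x$, $x\to u^0$, $c_U\to y$, and a length-$a$ path $y\to c_V$). These create enough cyclic structure that every $u^a$ reaches all vertices with finite distance, and the distinction becomes quantitative: the eccentricity of $u^a$ is $a+3$ if $u$ is in no orthogonal triple with the current $w$, and $2a+3$ if it is. A $(2-\eps)$-approximation separates these for $a$ large enough. This extra connectivity is the missing idea in your attempt; simply orienting $G_\delta$ forward cannot work for eccentricities, since a $(2-\eps)$-approximation is required to return $\infty$ on an infinite eccentricity, and in a multi-source DAG all eccentricities are infinite.
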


\begin{proof}[Proof of Theorem~\ref{thm:2direcc}]$ $

\paragraph{Construction}
\subparagraph{Initialization}

Let $a=\ceil{\frac{3-3\eps}{\eps} }+1.$

We construct a graph $G$ by first creating $G_\delta$ from section \ref{basegraph}. We then replace the paths between nodes $u^a$ and $c_U$ with single directed edges $u^a\rightarrow c_U$, and replace the paths between $c_V$ and $v^0$ with single directed edges $c_V\rightarrow v^0$. Replace undirected edges between $u^i$ and $u^{i+1}$ with $u^i\rightarrow u^{i+1}$, and do the same for the $v^i$ edges. 

We then add two nodes $x$ and $y$. For each node $u^a$, add edge $u^a\rightarrow x$, and for each node $u^0$, add $x\rightarrow u^0$. Finally, add directed edges $c_U\rightarrow y$ for all $c_U\in C_U$, and add a directed path of length $a$ from $y$ to each $c_V\in C_V$. 

\subparagraph{Stages} 

We proceed in $N^{1-2\delta}$ stages, one for each element $w\in W$. For the current $w$, for each coordinate $c$ where $w[c]=1$, we add edge $c_U\rightarrow c_V$. We then query for all eccentricities of $G$. We will show that if there is an orthogonal triple that includes $w$ and a vector $u\in U$, then the eccentricity of $u^a$ in $G$ is $2a+3$; otherwise, the eccentricity of $u^a$ is $a+3$. A $(2-\eps)$-approximation algorithm for eccentricities distinguishes between these two cases and can thus detect an orthogonal triple that includes $w$ if one exists. If such an orthogonal triple is not detected, we undo the edge additions for the stage and continue to the next $w$.

\begin{figure}[h]
  \centering
  \includegraphics[width=.8\linewidth]{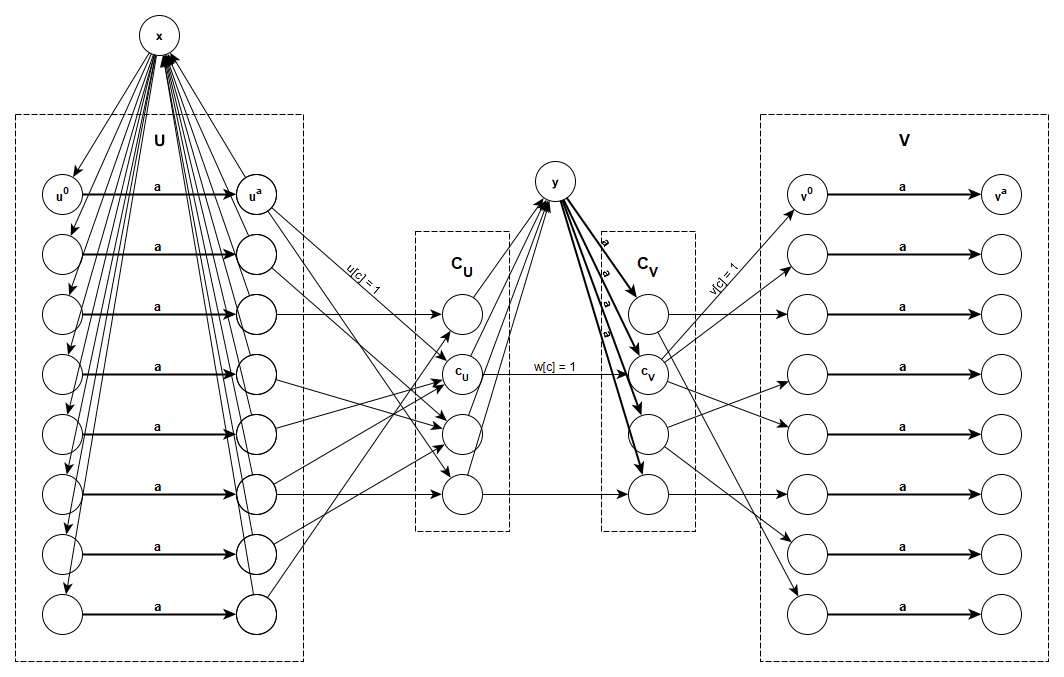}
  \caption{Sketch of Theorem \ref{thm:2direcc} Construction. Bold edges represent directed paths, whose labels denote their length.}
\end{figure}

\paragraph{Analysis}
\label{analysis:2eccdir}

\subparagraph{Correctness}
We first claim that all nodes other than the $v^i$ are at most distance $a+3$ from all nodes $u^a$. The nodes on the paths from $y$ to all $c_V$ (inclusive) are accessible from each $u^a$ via any $c_U$ for which $u[c]=1$ in at most $a+2$ steps, and all nodes $u^i$ are accessible by each $u^a$ in at most $a+2$ steps via $x$. All $u^a$ may thus also reach all $c_U$ in $a+3$ steps via another $u'^a$.

If for the current stage, for all $u\in U$, $v\in V$, $u\cdot v\cdot w\not= 0$, then for each $u,v$ there exists a coordinate $c$ such that $u[c]=v[c]=w[c]=1$. Thus, $d(u^a,v^i) = i + 3 \leq a+3$ for all $u$ and $v$. Thus, the eccentricities of all $u^a$ must be at most $a+3$.

Suppose for the current stage there exists $u\in U$, $v\in V$ such that $u\cdot v\cdot w =0$. Fix $u$ and $v$. We claim that $d(u^a,v^a)\geq 2a+3$. There does not exist a coordinate such that $u[c]=v[c]=w[c]=1$, so any path must either go via $x$ or $y$. We have $d(x,v^a)\geq 2a+4$ and $d(y,v^a)=2a+1$, and we also have $d(u^a,x)=1$ and $d(u^a,y)=2$, so $d(u^a,v^a)=2a+3$.
\end{proof}
\subsubsection{Radius}
\begin{theorem}
\label{thm:2dirrad}
Let $t$, $\eps$, and $\eps'$ be positive constants. The 3-HS hypothesis implies that there exists no fully dynamic algorithm for $(2-\eps)$-approximate Radius on directed, unweighted graphs with $n$ vertices and $\tilde{O}(n)$ edges, which has preprocessing time $p(n)=O(n^t)$, amortized update time $u(n)=O(n^{2-\eps'})$, and amortized query time $q(n)=O(n^{2-\eps'})$.

The same result holds for the incremental and decremental settings but for worst-case update and query times.
\end{theorem}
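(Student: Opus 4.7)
The plan is to use essentially the same construction as in Theorem~\ref{thm:2direcc}: build $G_\delta$, add the hub $x$ with edges $u^a \to x$ and $x \to u^0$ for every $u$, add the hub $y$ with $c_U \to y$ and a length-$a$ directed path from $y$ to each $c_V$, and process stages by inserting the edges $c_U \to c_V$ whenever $w[c] = 1$. The only change is that each stage we query the \emph{radius} of the current graph rather than the eccentricities. I would set $a$ as in Theorem~\ref{thm:2direcc} so that $(2-\eps)(a+3) < 2a+3$; the reduction from unbalanced 3-HS then outputs ``hitting set exists'' iff some stage returns a small radius.

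The first analytical step is a reachability claim: the only vertices with finite out-eccentricity are $x$ and the $u^i$ vertices ($0 \leq i \leq a$, $u \in U$). Every vertex in $C_U \cup \{y\} \cup C_V \cup \{v^0,\ldots,v^a\}$ has outgoing edges leading only further into the $V$-side, and no directed edge returns to any $u$-side vertex; hence none of them reaches $u^0$, so they all have infinite out-eccentricity. Consequently $R = \min\bigl(\varepsilon(x),\, \min_{u,i}\varepsilon(u^i)\bigr)$.

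Next I would argue that $u^a$ attains this minimum. For $i < a$, the only outgoing edge of $u^i$ is $u^i \to u^{i+1}$, so every path from $u^i$ to an off-path vertex passes through $u^a$, giving $\varepsilon(u^i) \geq (a-i) + \varepsilon(u^a)$. For $x$, every outgoing edge leads to some $u'^0$, so $d(x, v^a) \geq (a+1) + \min_{u'} d(u'^a, v^a) \geq (a+1) + (a+3) = 2a+4$, since the shortest $u'^a \to v^a$ walk $u'^a \to c_U \to c_V \to v^0 \to \cdots \to v^a$ has length at least $a+3$ in every instance. Hence $\varepsilon(x) \geq 2a+4$ unconditionally. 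Combining with the eccentricity analysis of Theorem~\ref{thm:2direcc}, which gives $\varepsilon(u^a) = a+3$ when $(u,w)$ is a hitting set and $\varepsilon(u^a) \geq 2a+3$ otherwise, the radius equals $a+3$ precisely when some hitting set $(u,w)$ exists and is $\geq 2a+3$ otherwise. The running-time bookkeeping is identical to that of Theorem~\ref{thm:2direcc}, so the assumed algorithm would refute the unbalanced 3-HS hypothesis.

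The main subtlety is the unconditional bound $\varepsilon(x) \geq 2a+4$: without it the hub $x$ could beat every $u^a$ in stages where only a weaker ``non-uniform cover'' of $V$ exists—namely when for every $v$ some $u'$ (possibly depending on $v$) satisfies $u' \cdot v \cdot w \neq 0$—which would spoil the gap. The bound is forced by the fact that any path from $x$ must first traverse the length-$(a+1)$ prefix $x \to u'^0 \to \cdots \to u'^a$ before starting any $u'^a \to v^a$ walk, which itself has length at least $a+3$. Once this structural fact is in hand, the remaining case analysis is already contained in Theorem~\ref{thm:2direcc}.
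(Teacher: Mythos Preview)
Your proposal is correct and follows essentially the same approach as the paper: the construction, staging, and gap values ($a+3$ versus $\ge 2a+3$) are identical, and the reduction of the radius analysis to the eccentricity analysis of Theorem~\ref{thm:2direcc} is exactly what the paper does. Your explicit computation that $\varepsilon(x)\ge 2a+4$ unconditionally is in fact a cleaner justification than the paper's somewhat terse sentence that paths from $x$ ``must be farther than $u^a$ from the farthest node from $u^a$,'' and your remark about the non-uniform-cover subtlety correctly identifies why that step deserves care.
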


\begin{proof}[Proof of Theorem~\ref{thm:2dirrad}]$ $

\subsubsection{Construction}
\subsubsection*{Initialization}

Let $a=\ceil{\frac{3-3\eps}{\eps} }+1.$

We construct the graph $G$ in the same way as in Theorem \ref{thm:2direcc}.

\subsubsection*{Stages} 

We proceed in $N^{1-2\delta}$ stages, one for each element $w\in W$. For the current $w$, for each coordinate $c$ where $w[c]=1$, we add edge $c_U\rightarrow c_V$. We then query for the radius of $G$. We will show that if there is a hitting set with $w$ and $u\in U$, then the radius of $G$ is $a+3$; otherwise, the radius is at least $2a+3$. A $(2-\eps)$-approximation algorithm for radius distinguishes between these two cases and can thus detect a hitting set that includes $w$ if one exists. If such a hitting set is not detected, we undo the edge additions for the stage and continue to the next $w$.

\subsubsection{Analysis}

\subsubsection*{Correctness}
All nodes other than the $u^i$ and $x$ have infinite eccentricity, so they cannot be the center of $G$. Each $u^a$ can reach all nodes in $u^i$ within distance $a+2$ and the distance from $u^a$ to every $v^a$ is at least $a+3$, so the farthest node from any $u^a$ must be to the right of it. All paths from $x$ or $u^i$ for $i<a$ to any node $v^i$ must go through $u^a$ for some $u$, and thus must be farther than $u^a$ from the farthest node from $u^a$. Thus, some $u^a$ must be the center.

From section \ref{analysis:2eccdir}, if for the current stage $u$ participates in a hitting set (i.e., has $u[c]=v[c]=w[c]=1$ for all $v$) then $u^a$ has eccentricity $a+3$, and if $u$ participates in an orthogonal triple then $u^a$ has eccentricity at least $2a+3$. Thus, the radius of $G$ is $a+3$ if there is a hitting set for the current stage, and it is $2a+3$ otherwise.
\end{proof}

\subsection{Finite approximation in directed graphs requires linear update time}

In this section we give a linear lower bound for finite approximation of fully dynamic diameter, radius, and fixed-vertex eccentricity for directed, unweighted graphs.

\begin{theorem}
\label{thm:findiam} 
Let $\eps$ be a positive constant. The $k$-Cycle conjecture implies that there exists no fully dynamic algorithm for finite-approximate Diameter, Radius, or fixed-vertex Eccentricity on directed, unweighted graphs with $n$ vertices and $\tilde{O}(n)$ edges, which has preprocessing time $p(n)=O(n^{2-\eps})$, amortized update time $u(n)=O(n^{1-\eps})$, and amortized query time $q(n)=O(n^{1-\eps})$.
\end{theorem}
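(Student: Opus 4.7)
The plan is to reduce $k$-Cycle on a sparse input $H=(V_H,E_H)$ with $n_H:=|V_H|$ and $m_H=\tilde{O}(n_H)$ edges to a short sequence of update-and-query stages on an auxiliary directed graph $G$. Assume for contradiction that there is a fully dynamic finite-approximation algorithm with preprocessing $O(n^{2-\eps})$ and amortized update and query time $O(n^{1-\eps})$ for fixed-vertex Eccentricity (or Diameter, or Radius) on $n$-vertex, $\tilde{O}(n)$-edge directed graphs. The goal is to run $n_H$ stages, each with $O(1)$ updates and one query, so the total cost is dominated by preprocessing and refutes the $k$-Cycle Hypothesis once $k$ is chosen large enough to make $f(k)<\eps$.

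\emph{Construction of $G$:} take $k+1$ disjoint copies $V^0,\ldots,V^k$ of $V_H$ (write $v^i\in V^i$ for the copy of $v\in V_H$) and, for each $(u,v)\in E_H$ and each $i=0,\ldots,k-1$, add the directed edge $u^i\to v^{i+1}$. A directed path $u^0\to v_1^1\to\cdots\to w^k$ in this ``layered gadget'' is in bijection with a $k$-walk $u\to v_1\to\cdots\to w$ in $H$. Adjoin two special vertices $s,t$ and permanent ``broadcast'' edges $t\to z$ for every $z\in V(G)\setminus\{t\}$. The graph has $n=(k+1)n_H+2=O(kn_H)$ vertices and $O(km_H+kn_H)=\tilde{O}(n)$ edges, meeting the sparsity requirement.

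\emph{Stages and correctness:} iterate over $w\in V_H$; in stage $w$, insert the two edges $s\to w^0$ and $w^k\to t$, query $\mathrm{ecc}(s)$, then delete the two inserted edges. Since the only outgoing edge of $s$ is $s\to w^0$, the vertex $s$ reaches $t$ iff the layered path $s\to w^0\to\cdots\to w^k\to t$ exists, i.e., iff $H$ contains a closed $k$-walk at $w$; when $s$ reaches $t$, the broadcast edges from $t$ give $s$ access to every other vertex, so $\mathrm{ecc}(s)$ is finite, and otherwise $\mathrm{ecc}(s)=\infty$. Iterating over all $w$ decides whether $H$ contains any closed $k$-walk, which by a standard role-splitting preprocessing on $H$ (replacing each vertex by $k$ role-copies and using layer-advancing edges, which blows up $|V_H|,|E_H|$ by only a factor $k$) is equivalent to deciding $k$-Cycle on a sparse graph. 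For Diameter we additionally install permanent edges $z\to s$ for every $z\ne s$ so that the graph is strongly connected in a stage iff $\mathrm{ecc}(s)<\infty$; for Radius we instead add a ``witness'' vertex $x$ whose unique in-edge is $s\to x$ and drop the broadcast edge $t\to s$, which forces $\mathrm{ecc}(z)=\infty$ for every $z\ne s$ in every stage and hence makes the radius equal $\mathrm{ecc}(s)$.

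\emph{Running time and refutation:} the total cost of the reduction is $O(n^{2-\eps})+n_H\cdot O(n^{1-\eps})=O((kn_H)^{2-\eps})$. Given any continuous $f$ with $f(k)\to 0$, pick a constant $k$ with $f(k)<\eps$; then the reduction solves $k$-Cycle on an $O(n_H)$-edge graph in time $O((kn_H)^{2-\eps})=o(n_H^{2-f(k)-o(1)})=o(m_H^{2-f(k)-o(1)})$, contradicting the $k$-Cycle Hypothesis. The main technical obstacle is exactly the closed-$k$-walk versus simple-$k$-cycle distinction; we handle it by the standard role-splitting preprocessing alluded to above, which preserves sparsity and only loses a constant factor (depending on $k$) in the edge count, so the asymptotic refutation still goes through.
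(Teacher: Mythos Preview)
Your reduction is essentially the same as the paper's: a layered $(k+1)$-copy gadget for the input graph, $n_H$ stages in which you hook a distinguished source $s$ to $w^0$ and $w^k$ to a broadcaster $t$, and then test whether $\mathrm{ecc}(s)$ is finite. The one structural difference is that the paper builds a \emph{two-sided} symmetric gadget (left and right copies glued at $s$, with each $t$ reaching all layer vertices on its side and all layer vertices reaching $s$), so that \emph{every} vertex's eccentricity is finite if and only if there is a path $s\to t$; this lets the same construction serve Diameter, Radius, and fixed-vertex Eccentricity simultaneously. Your one-sided gadget is a bit leaner but needs the per-problem tweaks you describe; both are correct and yield the same asymptotic bounds.

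There is one genuine slip. Your ``role-splitting preprocessing'' (replace each vertex by $k$ role-copies with layer-advancing edges) does \emph{not} turn closed-$k$-walk detection into $k$-cycle detection: in the role-split graph $H'$, a simple $k$-cycle $w_0^{(0)}\to w_1^{(1)}\to\cdots\to w_{k-1}^{(k-1)}\to w_0^{(0)}$ only certifies a closed $k$-walk $w_0\to w_1\to\cdots\to w_0$ in $H$, and the $w_i$ need not be distinct (e.g.\ a $2$-cycle in $H$ produces a $4$-cycle in $H'$). The direction you need is the opposite reduction, and the standard fix is color-coding: randomly assign each vertex of $H$ a color in $\{0,\ldots,k-1\}$, keep only edges from color $i$ to color $i{+}1\bmod k$, and observe that in the resulting $k$-partite subgraph every closed $k$-walk is simple; a $k$-cycle of $H$ survives with probability $k!/k^k$, so $O_k(1)$ repetitions suffice. (The paper glosses over this point entirely, relying implicitly on the well-known fact that the $k$-Cycle Hypothesis is already hard on $k$-partite instances.) With this correction your argument goes through.
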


\begin{proof}[Proof of Theorem~\ref{thm:findiam}]$ $

\subsubsection{Construction}
\subsubsection*{Initialization}

We begin with an instance of $k$-Cycle, with directed graph $G=(V,E)$, where $|V| = n$ and $|E|=\tilde{O}(n)$. We make a new graph $G'=(V',E')$, with two sets of $k+1$ copies of $V$, denoted by $\{V_{left}^i\}_{i=0}^k$ and $\{V_{right}^i\}_{i=0}^k$. $G'$ also has a node $s$ and nodes $t_{left}$ and $t_{right}$. For each directed edge $u\rightarrow v$ in $E$, we add edges $u_{left}^{i}\rightarrow v_{left}^{i+1}$ for all $i$ to $G'$. We do the same on the right side. We also add edges $t_{left}\rightarrow v_{left}$ and $v_{left},\rightarrow s$ to $G'$ for all $v_{left}\in\bigcup_i{V_{left}^i}$, and do the same on the right side.

\begin{figure}[h]
  \centering
  \includegraphics[width=\linewidth]{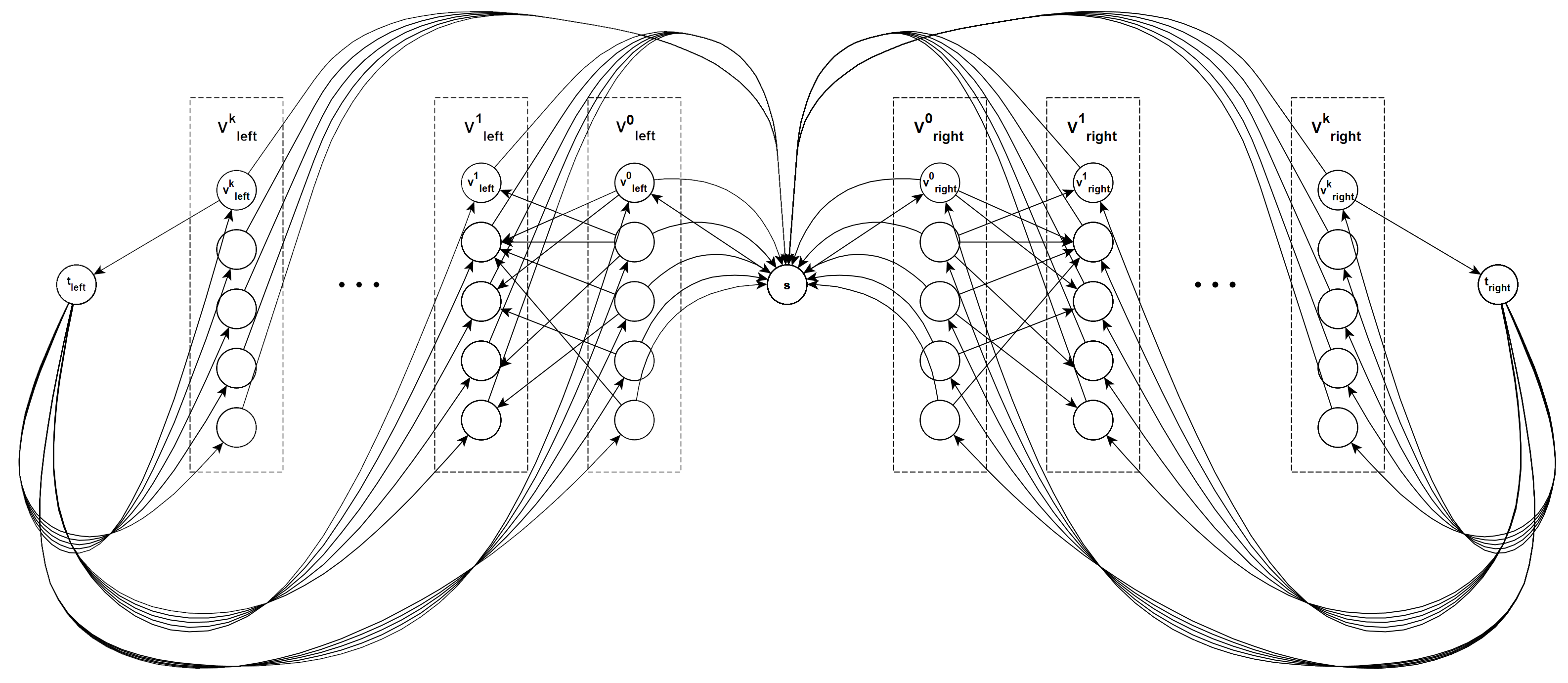}
  \caption{Sketch of Theorem \ref{thm:findiam} Construction.}
\end{figure}

\subsubsection*{Stages} 
We proceed in $n$ stages, one for each element of $V$. For the current $v$, we add edges $s\rightarrow v_{left}^0$, $s\rightarrow v_{right}^0$, $v_{left}^k\rightarrow t$, and $v_{right}^k\rightarrow t$ to $G'$, then query the Diameter or Radius of $G'$, or the Eccentricity of any fixed node in $G'$. We will show that the eccentricities of all nodes in $G'$ are finite if and only if $v$ participates in a $k$-Cycle. If the queried value is infinite, we remove the edges we added in this stage and continue to the next $v$.

\subsubsection{Analysis}

\subsubsection*{Correctness}

We first claim that if for the current stage that there is a $k$-Cycle in $G$ beginning and ending with $v$, then all eccentricities are finite. There is some cycle $p=v, v_1, \dots,v_{k-1},v$ in $G$, where $(v_{i},v_{i+1})\in E$. This means that in $G'$, there is a path $p' = s,v^0, v_1^1,\dots,v_{k-1}^{k-1},v^k,t$ on both the left and right side. By construction, $t_{left}$ can reach all nodes $v_{left}$ and all nodes $v_{left}$ can reach $s$, and the same holds on the right side. Thus, all nodes are at a finite distance from one another via $s$.

We now claim that if for the current stage there is no $k$-Cycle, then all eccentricities are infinite. We note that all nodes on the left side of $s$ must pass through $s$ to reach $t_{right}$, and all nodes on the right side of $s$ must pass through $s$ to reach $t_{left}$. Thus, if $s$ cannot reach $t_{left}$ (or symmetrically, $t_{right}$), then all eccentricities will be infinite. But we have no $k$-Cycle in $G$, so there is no path from $v^0$ to $v^{k}$ on either side of $G'$. The only outgoing edges of $s$ are to $v_{left}^0$ and $v_{right}^0$, so there is no path from $s$ to $t$ on either side. Thus all eccentricities must be infinite.

Distinguishing between a finite and infinite Diameter, Radius, or fixed-vertex Eccentricity is sufficient to distinguish whether all eccentricities are finite or infinite.

\subsubsection*{Running time}
Suppose for the sake of contradiction that the algorithm of Theorem \ref{thm:findiam} exists. The graph $G'$ has $O(n)$ nodes and $\tilde{O}(n)$ edges. We perform $\tilde{O}(n)$ edge updates and $n$ queries, so the total update and query time is $\tilde{O}(n^{2-\eps})$. The preprocessing time is $\tilde{O}(n^{2-\eps})$ as well. This contradicts the $k$-cycle conjecture, which states that no algorithm exists for $k$-cycle in $O(n^{2-f(k)-\eps'})$ time for any $\eps'>0$ and any function $f$ that goes to 0 as $k$ goes to infinity.

\end{proof}
\section{Partially dynamic algorithms}
\subsection{Randomized partially dynamic nearly $3/2$-approximation}


In this section we present a nearly $3/2$-approximation for incremental/decremental diameter, a nearly $3/2$-approximation for incremental/decremental radius, and a nearly $5/3$-approximation for incremental/ decremental eccentricities, given access to a black-box incremental/decremental approximate SSSP algorithm as specified in the preliminaries. 

\subsubsection{Diameter}

\begin{theorem}\label{thm:incred} There is a Las Vegas randomized algorithm for incremental (resp., decremental) diameter in unweighted, directed graphs against an oblivious (resp., adaptive) adversary that given $\eps>0$, runs in total time $\tilde{O}(\max_{D_f\leq D'\leq D_0}\{T_{inc}(n,m,D',\eps)\frac{\sqrt{n/ D'}}{\eps^2}\})$ (resp., $\tilde{O}(\max_{D_0\leq D'\leq D_f}\{T_{dec}(n,m,D',\eps)\frac{\sqrt{n/ D'}}{\eps^2}\})$) with high probability, and maintains an estimate $\hat{D}$ such that $\frac{2(1-\eps)}{3}D-\frac{2}{3}\leq \hat{D}\leq D$ where $D$ is the diameter of the current graph.
\end{theorem}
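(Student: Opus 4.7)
The plan is to dynamize the static $3/2$-approximation algorithm of Roditty and Vassilevska W.~\cite{RV13}, combined with a subsampling trick that shaves a factor of $\sqrt{D'}$ from the number of SSSP data structures we maintain. The $\sqrt{D'}$ savings is crucial in directed graphs, where the best known truncated-SSSP algorithm (Even--Shiloach) has amortized cost linear in the truncation distance.

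First I would invoke Lemma~\ref{lem:max} with $k=1$ to reduce to the following subproblem: given a fixed guess $D'$, maintain $\hat{D}\leq D$ with $\hat{D}\geq \tfrac{2}{3}(1-\eps)D'-\tfrac{2}{3}$ whenever $D'\leq D$, using total time $\tilde{O}(T(n,m,D',\eps)\sqrt{n/D'}/\eps^2)$. The subroutine for fixed $D'$ is: sample $S\subseteq V$ of size $\tilde{\Theta}(\sqrt{n/D'}/\eps^2)$ uniformly at random, and maintain out-$\mathcal{A}(s,\Theta(D'),\eps)$ and in-$\mathcal{A}(s,\Theta(D'),\eps)$ for each $s\in S$. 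Via the dummy-vertex reduction of Claim~\ref{claim:pre}, these structures dynamically estimate $d(v,S)$ and $d(S,v)$ for every $v$. At each time step, select a vertex $w$ that (approximately) maximises $d(w,S)$, maintain in/out-$\mathcal{A}$ from $w$, and let $B\subseteq V$ be (an approximation of) the $\tilde{\Theta}(\sqrt{nD'}/\eps)$ vertices closest to $w$. Subsample $S'\subseteq B$ of size $\tilde{\Theta}(\sqrt{n/D'}/\eps^2)$ uniformly and maintain in/out-$\mathcal{A}$ from each $s'\in S'$. Output $\hat{D}$ equal to the largest approximate distance ever reported by any of these structures.

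For correctness I would fix a diameter pair $u,v$ with $d(u,v)=D\geq D'$ and split on the value of $d(w,S)$. If $d(w,S)\lesssim D'/3$, then by the choice of $w$ we also have $d(v,S)\lesssim D'/3$, so some $s\in S$ satisfies $d(s,v)\lesssim D'/3$; the walk $u\to s\to v$ then forces $d(u,s)\geq D-D'/3\geq 2D'/3$, which in-$\mathcal{A}(s)$ witnesses up to the $(1-\eps)$ and $-2/3$ losses inherited from Claim~\ref{claim:pre}. If instead $d(w,S)\gtrsim D'/3$, then by concentration the ball $N(w,D'/3)$ has at most $\tilde{O}(\sqrt{nD'}\eps^2)$ vertices, so $B$ has radius $\geq D'/3$; consequently $B$ contains a contiguous segment of length $\Theta(D')$ on the shortest path from $w$ to whichever of $u,v$ is farther, and the vertices in the ``middle third'' of this segment each have eccentricity at least $2D'/3$. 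A union-bound calculation shows that the random subsample $S'$ hits this middle third with high probability, at which point in/out-$\mathcal{A}$ from the hit vertex reports the required distance.

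The main obstacle is maintaining the vertex $w$ across updates without paying to reinitialize the $\Theta(|S'|)$ associated SSSP structures too often. In the decremental setting, distances are monotone nondecreasing, so $\max_x d(x,S)$ is as well; the rule ``keep $w$ while $d(w,S)\geq (1-\eps)\max_x d(x,S)$'' changes $w$ only $\tilde{O}(1/\eps)$ times, each change costing $\tilde{O}(|S'|\cdot T_{dec})$, which fits within the target. In the incremental setting distances are monotone nonincreasing, and against a worst-case adversary a single insertion could collapse $d(w,S)$; I would therefore exploit the oblivious adversary by sampling $w$ uniformly at random from the set of all currently ``far-from-$S$'' candidates. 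Since the adversary's update sequence is fixed in advance, the expected number of updates before the randomly chosen $w$ becomes invalid is large, which amortizes the reinitialization cost. Combining these ingredients yields the bound $\tilde{O}(\sqrt{n/D'}/\eps^2\cdot T(n,m,D',\eps))$ for each $D'$, and Lemma~\ref{lem:max} turns this into the stated maximum over $D'\in[D_f,D_0]$ (resp.\ $[D_0,D_f]$).
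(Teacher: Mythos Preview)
Your proposal follows the same skeleton as the paper: reduce via Lemma~\ref{lem:max} to a fixed guess $D'$, sample $S$ of size $\tilde\Theta(\sqrt{n/D'})$, maintain a vertex $w$ that is far from $S$, subsample $S'$ from the out-ball around $w$, and return the largest approximate distance seen. The incremental maintenance of $w$ (uniformly random choice against an oblivious adversary, bounding the number of re-selections) is exactly what the paper does.

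Two places where your sketch diverges from the paper and would not go through as written. First, your Case~2 argument looks at the path from $w$ to ``whichever of $u,v$ is farther'' and asserts that the middle third of that segment has eccentricity $\ge 2D'/3$. This fails: if the farther endpoint is the diameter \emph{source} $s^*$, a vertex $q$ at distance $\approx D'/3$ from $w$ along the $w\to s^*$ path need not be far from anything. The paper instead always targets the sink $t^*$, first disposing of the sub-case $d'(w,t^*)\ge 2D'/3$ (where out-$\mathcal A(w)$ already certifies the bound), and otherwise takes $q$ at distance exactly $\lfloor D'/3\rfloor$ on the $w\to t^*$ path; then $d(q,t^*)<D'/3$, and the triangle inequality on $s^*\to q'\to t^*$ gives $d(s^*,q')\gtrsim 2D'/3$ for any $q'\in S'$ within $\delta D'$ of $q$. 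Relatedly, $S'$ is not a uniform subsample aimed at hitting one fixed ``middle third'': each vertex of $B_{out}(w,D'/3)$ is included independently with probability $\tilde\Theta(1/(\delta D'))$, so that w.h.p.\ \emph{every} vertex of the ball (in particular $q$) has some $s'\in S'$ within $\delta D'$.

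Second, your decremental maintenance of $w$ is more elaborate than needed. The paper does not require $w$ to be an approximate maximizer of $d(\cdot,S)$; it only needs $w\in W:=\{v:d'(v,S)>D'/3\}$, and the case split is on whether $s^*\in W$. In the decremental setting distances only grow, so once $w\in W$ it never leaves, and a single phase suffices---no reinitialization rule is required at all.
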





By Lemma~\ref{lem:max}, the following lemma implies Theorem~\ref{thm:incred}.

\begin{lemma}\label{lem:incred}
There is a Las Vegas randomized algorithm for incremental (resp., decremental) diameter in unweighted, directed graphs that given $ D',\eps>0$, runs in total time $\tilde{O}\left(T_{inc}(n,m,D',\eps)\frac{\sqrt{n/ D'}}{\eps}\right)$ (resp., $\tilde{O}\left(T_{dec}(n,m,D',\eps)\frac{\sqrt{n/ D'}}{\eps}\right)$) with high probability, and maintains an estimate $\hat{D}\leq D$ such that if $ D'\leq D$ then $\hat{D}\geq \frac{2(1-\eps)}{3} D'-\frac{2}{3}$ where $D$ is the diameter of the current graph. The incremental algorithm works against an oblivious adversary.
\end{lemma}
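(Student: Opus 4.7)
My plan is to dynamize the static $3/2$-approximation of Roditty--Vassilevska W. The static algorithm samples a set $S$ of about $\sqrt n$ vertices, runs BFS from each, identifies $w^* := \arg\max_v d(v,S)$ (the vertex farthest from $S$), and then BFSes from $w^*$ and from its $\sqrt n$ nearest neighbors; the maximum distance observed is a $3/2$-approximation of the diameter $D$. In the dynamic version I would replace every BFS by the truncated approximate SSSP oracles out-$\mathcal{A}$ and in-$\mathcal{A}$ from the preliminaries, truncated at depth $\Theta(D')$, and the reported estimate $\hat D$ would be the maximum returned by any oracle. The $(1-\eps)$ slack of the oracles combined with integer rounding accounts for the $\tfrac{2(1-\eps)}{3}D' - \tfrac{2}{3}$ guarantee whenever $D \geq D'$.

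The first deviation from the static setting is that I cannot afford $\sqrt n$ SSSP instances---the total budget is $T(n,m,D',\eps) \cdot \sqrt{n/D'}/\eps$. Accordingly I would sample $|S| = \tilde\Theta(\sqrt{n/D'})$ vertices (not $\sqrt n$) and maintain no SSSP oracles from the nearest-neighbor set of $w^*$ at all. A standard hitting-set / Chernoff argument says these $\tilde\Theta(\sqrt{n/D'})$ random vertices hit any fixed set of size at least $\sqrt{nD'}$ with high probability, since $\sqrt{n/D'} \cdot \sqrt{nD'} = n$. Pushing this into the RV case analysis: if the $\sqrt{nD'}$-nearest ball around $w^*$ has size $\geq \sqrt{nD'}$, one of the $s \in S$ already lies in that ball and the oracle from $s$ witnesses a long path; otherwise the ball is small and the oracle from $w^*$ itself is enough. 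This is the ``subsampling quadratically decreases the dependence on the diameter'' idea from the introduction.

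The second and more delicate deviation is that the identity of $w^*$ changes under updates, and each reinitialization of an oracle costs $T(n,m,D',\eps)$. In the decremental case $d(w^*,S)$ is monotone non-decreasing, so I would pick a new $w$ only when $\max_v d'(v,S)$ crosses a $(1+\eps)$-multiplicative threshold; this occurs $\tilde O(1/\eps)$ times, fitting the budget. In the incremental case distances only decrease, so $d(w,S)$ can drop unexpectedly; here I would pick $w$ \emph{uniformly at random} from the set of near-maximizers of $d'(v,S)$ and invoke the oblivious-adversary assumption to argue that each insertion disqualifies only a small fraction of candidates in expectation, so a fresh random $w$ survives many updates before needing replacement.

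The main obstacle I anticipate is the incremental amortization. The charging argument will need to partition the update sequence into phases between successive reinitializations of $w$, bound the expected length of each phase by charging to the total number of distance decreases reported by the SSSP oracles (which is globally $\leq T_{inc}(n,m,D',\eps)$), and conclude that in expectation only $\tilde O(\sqrt{n/D'}/\eps)$ reinitializations of $w$ occur. Combining the cost of the $|S|$ maintained SSSP instances with these reinitializations then gives the claimed $\tilde O(T_{inc}(n,m,D',\eps) \cdot \sqrt{n/D'}/\eps)$ total time, after which Lemma~\ref{lem:max} lifts the single-$D'$ guarantee to the parameter-oblivious form of Theorem~\ref{thm:incred} at the cost of an additional $1/\eps$ factor.
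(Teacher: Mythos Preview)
Your proposal has a genuine gap in the correctness argument: you drop the SSSP oracles from the neighbors of $w$ entirely, and your case analysis does not survive this. In the paper's Case~2 (equivalently: $d'(s^*,S)>D'/3$ so some $w\in W$ exists, and $d'(w,t^*)<2D'/3$), neither the oracle rooted at $w$ nor any oracle rooted in $S$ witnesses a distance $\ge 2D'/3$. The paper fixes this by taking the vertex $q$ on the $w\to t^*$ shortest path at distance $\lfloor D'/3\rfloor$ from $w$, observing $q\in B_{out}(w,D'/3)$, and using an in-oracle rooted at some $q'$ close to $q$ to see $d(s^*,q')\gtrsim 2D'/3$. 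To get such a $q'$ within budget, the paper \emph{subsamples} $B_{out}(w,D'/3)$: each vertex of the ball is put into a set $S'$ independently with probability $\tilde\Theta(1/(\eps D'))$, so that $S'$ is an $(\eps D')$-net of the ball, and in-$\mathcal{A}$ is run from every $s'\in S'$. Because $|S|=\tilde\Theta(\sqrt{n/D'})$ hits $N_{out}(w,\sqrt{nD'})$ while $d(w,S)>D'/3$, one gets $|B_{out}(w,D'/3)|\le\sqrt{nD'}$ and hence $|S'|=\tilde O(\sqrt{n/D'}/\eps)$, exactly the budget. Your sentence ``otherwise the ball is small and the oracle from $w^*$ itself is enough'' is simply false: the out-oracle from $w$ returns at most $d(w,t^*)<2D'/3$ in this case. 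You have correctly identified one half of the ``quadratic subsampling'' trick (shrinking $|S|$ from $\sqrt n$ to $\sqrt{n/D'}$), but missed the second half (shrinking the neighbor set from $|B_{out}(w,D'/3)|\le\sqrt{nD'}$ down to an $\eps D'$-net of size $\sqrt{n/D'}/\eps$).

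Your phase bookkeeping is also more complicated than needed. In the decremental setting the paper uses exactly \emph{one} phase: pick any $w\in W$ once; since distances only grow, $w$ never leaves $W$. In the incremental setting the paper picks $w$ uniformly at random from $W=\{v:d'(v,S)>D'/3\}$ and starts a new phase only when $w$ leaves $W$. Against an oblivious adversary, with probability $\ge 1/2$ at least half of the current $W$ leaves $W$ no later than $w$ does; thus after $O(\log n)$ ``successful'' picks $W$ is empty, and a negative-binomial tail bound gives $\tilde O(1)$ phases with high probability. No charging to distance decreases is required, and the number of re-picks is $\tilde O(1)$, not $\tilde O(\sqrt{n/D'}/\eps)$. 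The per-phase cost is dominated by the $|S'|=\tilde O(\sqrt{n/D'}/\eps)$ in-oracles just described, which together with the $|S|=\tilde O(\sqrt{n/D'})$ initial oracles gives the stated total time.
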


\begin{proof}$ $

\paragraph{Algorithm}
Let $\delta=2\eps/11$. Throughout the incremental (resp., decremental) algorithm we will run in-$\mathcal{A}_{inc}$ (in-$\mathcal{A}_{dec}$) and out-$\mathcal{A}_{inc}$ (out-$\mathcal{A}_{dec}$) from carefully chosen sets of vertices. For ease of notation, we let in-$\mathcal{A}$ denote either in-$\mathcal{A}_{inc}$ or in-$\mathcal{A}_{dec}$, depending on the setting, and similarly we let out-$\mathcal{A}$ denote either out-$\mathcal{A}_{inc}$ or out-$\mathcal{A}_{dec}$. 

\subparagraph{Initialization}
Let $\alpha$ be such that $ D'= \Theta(n^{1-2\alpha})$. We randomly sample a set $S$ of size $\Theta(n^\alpha\log^2 n)$ so that with high probability, for every vertex $v$, $S$ hits $N_{out}(v,n^{1-\alpha})$. For the incremental algorithm, since the adversary is oblivious it is also true that with high probability, for every vertex $v$, \emph{after every update}, $S$ hits $N_{out}(v,n^{1-\alpha})$.

Throughout the entire execution of the algorithm, for all $s\in S$ we run in-$\mathcal{A}(s,D',\delta)$. Additionally, we maintain the approximate distance $d'(v,S)$ from every vertex $v$ to $S$ as described in the preliminaries.
Let $W$ be the dynamically changing set of vertices $v$ that satisfy $d'(v,S)> D'/3$.

\subparagraph{Phases}
The algorithm runs phases. The first phase begins right after initialization. At the beginning of each phase, we choose a vertex $w\in W$ if $W\not=\emptyset$. The decremental algorithm only has one phase and we let $w$ be an arbitrary vertex in $W$. We note that in the decremental setting, distances can only increase so $w$ never leaves $W$. 

In the incremental setting on the other hand, distances can decrease so vertices can leave $W$. The incremental algorithm may have many phases, and at the beginning of each phase, we choose $w\in W$ uniformly at random. The beginning of a new phase is triggered when $w$ leaves the set $W$.

Throughout the phase, we run out-$\mathcal{A}(w,D',\delta)$. Also, we will define a subset $S'\subseteq B_{out}(w, \frac{ D'}{3})$ and for all $s'\in S'$, we run in-$\mathcal{A}(s',D',\delta)$. $S'$ is initially empty and we independently add each vertex in $B_{out}(w, \frac{ D'}{3})$ to $S'$ with probability $\min\{1,\frac{\log^2 n}{\delta  D'}\}$. In the incremental setting (but not the decremental setting), $B_{out}(w, \frac{ D'}{3})$ can grow, and whenever a vertex $u$ joins $B_{out}(w, \frac{ D'}{3})$ we add $u$ to $S'$ with probability $\min\{1,\frac{\log^2 n}{\delta  D'}\}$.

\subparagraph{Reinitialization}

If at any point during the execution of the algorithm, any of the events listed below occur, we reinitialize the entire algorithm. We will show in the analysis that with high probability we never reinitialize the algorithm.

\begin{itemize}
    \item $|B_{out}(w, \frac{ D'}{3})|> n^{1-\alpha}$
    \item $|S'|> \frac{n^{1-\alpha}\log^4 n}{\delta  D'}$
    \item there is a vertex $v\in B_{out}(w, \frac{ D'}{3})$ such that $d'(v,S')>\delta  D'$.
\end{itemize}

\subparagraph{Query}

Following each update, the return value $\hat{D}$ is the maximum distance estimate found over all instantiations of out-$\mathcal{A}$ and in-$\mathcal{A}$. That is, $\hat{D}=\max\{\max_{v\in V}d'(w,v),\max_{v\in V,s\in S\cup S'}d'(v,s)\}$. 

To maintain this value, we maintain the following heaps. For every vertex $v$ that we run out-$\mathcal{A}$ (resp., out-$\mathcal{A}$) from, we keep a max-heap ${\cal H}(v)$ that stores for each 
other vertex $u$ the estimate $d'(v,u)$ (resp., $d'(u,v)$). Let $\hat d_{out}(v)$ be the value that ${\cal H}(v)$ outputs.
Additionally we keep a max-heap $\mathcal{H}$ which stores each
$\hat d_{out}(v)$.

\paragraph{Analysis}

\subparagraph{Correctness}

The return value $ \hat{D}$ is $d'(u,v)$ for some $u,v$, so $ \hat{D}\leq D$. It remains to show that if $D'\leq D$, then $\frac{2(1-\eps)}{3}D'-\frac{2}{3}\leq  \hat{D}$.


Let $s^*$ and $t^*$ be the true diameter endpoints.

\noindent{\bf Case 1: $s^*\not\in W$.} Let $s\in S$ be such that $d'(s^*,s)\leq \frac{ D'}{3}$. Then $d(s^*,s)\leq D'(\frac{1}{3(1-\delta)})$. Then by the triangle inequality, $d(s,t^*)\geq D- D'(\frac{1}{3(1-\delta)})\geq  D'(1-\frac{1}{3(1-\delta)})$. Thus, $\hat{D}\geq d'(s,t^*)\geq  D'(1-\delta)(1-\frac{1}{3(1-\delta)})= D'(\frac{2}{3}-\delta$).

\noindent{\bf Case 2: $s^*\in W$.} We don't explicitly use the fact that $s^*\in W$; we just use the fact that $w$ exists. If $d'(w,t^*)\geq \frac{2 D'}{3}$, then we are done. So suppose otherwise; that is, suppose $d'(w,t^*)< \frac{2 D'}{3}$ so $d(w,t^*)< D'( \frac{2}{3(1-\delta)})$. 

Consider the shortest path from $w$ to $t^*$. Let $q$ be the vertex on this path at distance $\lfloor\frac{ D'}{3}\rfloor$ from $w$. So $d(q,t^*)<  D'(\frac{2}{3(1-\delta)})-\lfloor\frac{ D'}{3}\rfloor\leq  D'(\frac{2}{3(1-\delta)}-\frac{1}{3})+\frac{2}{3}$. We know that $q\in B_{out}(w, \frac{ D'}{3})$, so $d'(S',q)\leq\delta D'$ or else we would have reinitialized the algorithm. Thus, by Claim~\ref{claim:pre} from the preliminaries, $d(S',q)\leq\frac{\delta D'}{1-2\delta}$. Let $q'\in S'$ be a vertex with $d(q',q)\leq\frac{\delta D'}{1-2\delta}$.

By the triangle inequality, $d(q',t^*)\leq d(q',q)+d(q,t^*)\leq  D'(\frac{\delta}{1-2\delta}+\frac{2}{3(1-\delta)}-\frac{1}{3})+\frac{2}{3}$. By the triangle inequality, $d(s^*,q')\geq D- D'(\frac{\delta}{1-2\delta}+\frac{2}{3(1-\delta)}-\frac{1}{3})-\frac{2}{3}\geq  D'(\frac{4}{3}-\frac{\delta}{1-2\delta}-\frac{2}{3(1-\delta)})-\frac{2}{3}$. Thus, $\hat{D}\geq d'(s^*,q')\geq D'(1-\delta)(\frac{4}{3}-\frac{\delta}{1-2\delta}-\frac{2}{3(1-\delta)})-\frac{2}{3})= D'(\frac{4(1-\delta)}{3}-\frac{\delta(1-\delta)}{1-2\delta}-\frac{2}{3})-\frac{2}{3}$.


Setting $\delta= 2\eps/11$ completes the proof of correctness.



\subparagraph{Running time}

\subparagraph{Reinitialization}
We will argue that with high probability, we never reinitialize the algorithm. One event that triggers algorithm reinitialization is if $|B_{out}(w, \frac{ D'}{3})|> n^{1-\alpha}$. 
Recall that with high probability, for every vertex $v$, $S$ hits $N_{out}(v,n^{1-\alpha})$; this is true for the decremental algorithm only initially, and for the incremental after every update.
By the definition of $W$, $ D'/3<d'(w,S)\leq d(w,S)$. Thus, $B_{out}(w, \frac{ D'}{3})$ contains no vertices in $S$. So, with high probability $B_{out}(w, \frac{ D'}{3})$ does not contain $N_{out}(v,n^{1-\alpha})$. That is, with high probability, $|B_{out}(w, \frac{ D'}{3})|< n^{1-\alpha}$. For the decremental algorithm we have shown that this inequality holds only for the initial graph, however it also holds after every update since $|B_{out}(w, \frac{ D'}{3})|$ can only decrease over time. Thus, for both the incremental and decremental algorithms, with high probability we never reinitialize the algorithm due to $|B_{out}(w, \frac{ D'}{3})|> n^{1-\alpha}$.

Another event that triggers reinitialization is if
$|S'|> \frac{n^{1-\alpha}\log^2 n}{\delta  D'}$. $|S'|$ is a random variable drawn from a binomial distribution with  $p=\frac{\log^2 n}{\delta D'}$ and expected value $\frac{|B_{out}(w, \frac{ D'}{3})|\log^2 n}{\delta D'}$. We know that $|B_{out}(w, \frac{ D'}{3})|\leq n^{1-\alpha}$ or else we would have reinitialized the algorithm due to the above event. Thus, $|S'|\leq  \frac{n^{1-\alpha}\log^4 n}{\delta  D'}$ with high probability.

The last event that triggers reinitialization is if there is a vertex $v\in B_{out}(w, \frac{ D'}{3})$ such that $d'(v,S')>\delta  D'$.
The expected size of $S'$ is $\frac{|B_{out}(w, \frac{ D'}{3})|\log^2 n}{\delta D'}$. Thus, with high probability, for all vertices $v\in B_{out}(w, \frac{ D'}{3})$, after every update, $S'$ hits a vertex of distance at most $\delta D'$ to $v$.

We have shown that each of the three events that trigger reinitialization do not occur with high probability (probability at least $1-1/n^c$ for all constants $c$). Thus, with high probability we never reinitialize the algorithm.

\subparagraph{Running in-$\mathcal{A}$ and out-$\mathcal{A}$}
We will calculate the total number $a$ of vertices that we ever run in-$\mathcal{A}$ or out-$\mathcal{A}$ from. Then the total time is $\tilde{O}(aT_{inc}(n,m,D',\eps))$; maintaining the heaps ${\cal H}(v)$ and $\mathcal H$ increases the running time only by a
factor of $O(\log n)$.

In the initialization step, we initialize in-$\mathcal{A}$ from all $\tilde{O}(n^{\alpha})$ vertices in $S$ as well as a dummy vertex. Throughout each phase, we run out-$\mathcal{A}$ from $w$ and we run in-$\mathcal{A}$ from all $\tilde{O}(\frac{n^{1-\alpha}}{\eps D'})$ vertices that are added to $S'$ during the phase, as well as a dummy vertex.

The decremental algorithm has only one phase. We calculate the number of phases in the incremental algorithm. The beginning of a new phase is triggered when $w$ leaves the set $W$. 
We note that since we are in the incremental setting, no vertices are added to $W$ during a phase. The sequence of updates dictates if and when each vertex is removed from $W$. Each update may trigger any number of vertices to leave $W$.


Fix a choice of $w$ and let $W_0$ be the set $W$ at the point in time that the algorithm chooses $w$. We say that $w$ is a \emph{success} if at most half of the vertices in $W_0$ leave $W$ after $w$ leaves $W$. Since $w$ is chosen randomly and the adversary is oblivious, the probability that $w$ is a success is at least $1/2$. Once $\log_2 n$ choices of $w$ are successful, then $W$ is empty. Let $Y$ be a random variable defined as the number of times the algorithm chooses a new $w$ until $W$ is empty. $Y$ is a negative binomial random variable, which implies the following concentration bound for any constant $c$: $P[Y > 2c\log_2 n]\leq \exp(\frac{-c(1-1/c)^2}{2}\log_2 n)$. Thus, $Y=\tilde{O}(1)$ with high probability. 


Putting everything together, with high probability, $a=\tilde{O}(n^{\alpha}+\frac{n^{1-\alpha}}{ \eps D'})=\tilde{O}\left(\sqrt{n/ D'}\right)$, so the total time is $\tilde{O}\left(T_{inc}(n,m,D',\eps)\frac{\sqrt{n/ D'}}{\eps}\right)$.
\end{proof}

\subsubsection{Radius}
Our theorems and proofs for radius are analogous to those for diameter, with a number of key differences. In this section we describe the differences. 

\begin{theorem}\label{thm:incredrad} There is a Las Vegas randomized algorithm for incremental (resp., decremental) radius in unweighted, undirected graphs that given $\eps>0$ runs in total time $\tilde{O}(\max_{R_f\leq R'\leq R_0}\{T_{inc}(n,m,2R',\eps)\frac{\sqrt{n/ R'}}{\eps^2}\})$ (resp., $\tilde{O}(\max_{R_0\leq R'\leq R_f}\{T_{dec}(n,m,2R',\eps)\frac{\sqrt{n/ R'}}{\eps^2}\})$) with high probability, and maintains an estimate $\hat{R}$ such that $R\leq \hat{R}\leq (1+\eps)(\frac{3}{2}R+\frac{1}{2})$ where $R$ is the radius of the current graph. The incremental algorithm works against an oblivious adversary.
\end{theorem}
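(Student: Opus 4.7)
By Lemma~\ref{lem:min} it suffices to prove the radius analogue of Lemma~\ref{lem:incred}: given $R',\eps>0$, there is a Las Vegas algorithm running in total time $\tilde O(T(n,m,2R',\eps)\sqrt{n/R'}/\eps)$ that maintains $\hat R\ge R$ and, whenever $R'\ge R$, satisfies $\hat R\le \tfrac{3}{2}(1+\eps)R'+\tfrac{1}{2}$. The $2R'$ SSSP distance parameter (rather than $R'$) appears because we estimate eccentricities, which can be as large as $D\le 2R\le 2R'$. A static constant-factor radius approximation needed by Lemma~\ref{lem:min} follows from a single BFS.

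The algorithm mirrors Lemma~\ref{lem:incred}'s structure with undirected SSSP and a handful of substitutions. Choose $\alpha$ with $R'=\Theta(n^{1-2\alpha})$, sample $S$ of size $\tilde\Theta(n^\alpha)$ hitting each $N(v,n^{1-\alpha})$ whp, run SSSP from each $s\in S$ up to distance $2R'$, and define $W=\{v:d'(v,S)>R'/2\}$. The threshold $R'/2$ (rather than $D'/3$) is chosen to match the static $3/2$-approximation. Pick $w\in W$ exactly as in Lemma~\ref{lem:incred} (randomly against an oblivious adversary in the incremental case, arbitrarily in the decremental case), keep the same phase structure and reinitialization triggers, but replace the ball $B_{out}(w,D'/3)$ with the undirected ball $B(w,R'/2)$ and sample $S'$ from it with probability $\min\{1,\log^2 n/(\delta R')\}$ per vertex. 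Run SSSP from $w$ and from every $s'\in S'$ up to distance $2R'$. Output $\hat R=\min_{v\in S\cup\{w\}\cup S'}\lceil \max_u d'(v,u)/(1-\delta)\rceil$, which is always at least $R$ since $\max_u d'(v,u)\ge(1-\delta)\varepsilon(v)\ge(1-\delta)R$.

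Correctness follows the static $3/2$-approximation argument for radius. Let $c^*$ denote the current center. If $c^*\notin W$, Claim~\ref{claim:pre} gives $d(c^*,S)\le R'/(2(1-2\delta))$, so some $s\in S$ has $\varepsilon(s)\le R+R'/(2(1-2\delta))\le (1+O(\delta))(3R'/2)$ since $R\le R'$. Otherwise $c^*\in W$, so $w$ exists. If $d(w,c^*)\le R'/2$, then $c^*\in B(w,R'/2)$ and the $S'$-sampling (valid by the reinitialization conditions that transfer from Lemma~\ref{lem:incred}) yields $q'\in S'$ with $d(q',c^*)\le \delta R'/(1-2\delta)$, hence $\varepsilon(q')\le R+O(\delta R')$. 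If $d(w,c^*)>R'/2$, pick $q$ at distance $\lfloor R'/2\rfloor$ from $w$ along a shortest $w$-to-$c^*$ path; then $q\in B(w,R'/2)$ and $d(q,c^*)\le R-R'/2+1$, and the $S'$-sampling produces $q'$ within $\delta R'/(1-2\delta)$ of $q$, giving $\varepsilon(q')\le R+d(q',c^*)\le 2R-R'/2+O(\delta R')+1\le (1+O(\delta))(3R'/2)+1/2$. Choosing $\delta=\Theta(\eps)$ and propagating the $(1-\delta)^{-1}$ scaling in $\hat R$ yields the claimed bound.

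The running-time, reinitialization, and phase-count arguments transfer verbatim: $|B(w,R'/2)|<n^{1-\alpha}$ whp because $S$ hits $N(w,n^{1-\alpha})$ while $d(w,S)>R'/2$ forces $S\cap B(w,R'/2)=\emptyset$, so $|S'|=\tilde O(n^{1-\alpha}/(\delta R'))=\tilde O(\sqrt{n/R'}/\eps)$ and the total number of SSSP instantiations is $\tilde O(\sqrt{n/R'}/\eps)$ with high probability (the $\tilde O(1)$ phase-count bound against the oblivious adversary is unchanged). The main obstacle is the correctness analysis: both the threshold for $W$ and the radius of the sampled ball must be set to $R'/2$ (rather than the $D'/3$ used for diameter) so that the midpoint-style argument in the $d(w,c^*)>R'/2$ sub-case yields $\varepsilon(q')\le 3R'/2+O(\delta R')+1$, and the $(1-\delta)$ SSSP error together with the additive $\delta R'$ sampling error must be carefully propagated through the triangle-inequality computations to hit the $(1+\eps)(3R'/2)+1/2$ target.
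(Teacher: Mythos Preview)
Your proposal is correct and follows essentially the same approach as the paper: reduce via Lemma~\ref{lem:min} to a single-scale statement, then run the diameter algorithm of Lemma~\ref{lem:incred} with the substitutions $D'/3\mapsto R'/2$, SSSP cutoff $2R'$, undirected $\mathcal{A}$, and output $\hat R=\frac{1}{1-\delta}\min_{s\in S\cup S'\cup\{w\}}\max_u d'(s,u)$, with the two-case correctness analysis on whether the center $c^*$ lies in $W$. Your additional sub-case split in Case~2 on whether $d(w,c^*)\le R'/2$ is unnecessary (the paper handles both at once since $d(w,c^*)\le R\le R'$ always guarantees a vertex $q$ at distance $\lfloor R'/2\rfloor$ on the $w$--$c^*$ path, taking $q=c^*$ when the path is shorter), but it is harmless and the bounds agree.
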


By Lemma~\ref{lem:min}, the following lemma implies Theorem~\ref{thm:incredrad}.

\begin{lemma}\label{lem:incredrad}
There is a Las Vegas randomized algorithm for incremental (resp., decremental) radius in unweighted, undirected graphs that given $ R'$ and $\eps>0$, runs in total time $\tilde{O}\left(T_{inc}(n,m,2R',\eps)\frac{\sqrt{n/ R'}}{\eps}\right)$ (resp., $\tilde{O}\left(T_{dec}(n,m,2R',\eps)\frac{\sqrt{n/ R'}}{\eps}\right)$) with high probability, and maintains an estimate $\hat{R}\geq R$ such that if $ R'\geq R$ then $\hat{R}\leq (1+\eps)(\frac{3}{2}R'+\frac{1}{2})$ where $R$ is the radius of the current graph. The incremental algorithm works against an oblivious adversary.
\end{lemma}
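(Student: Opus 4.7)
The plan is to mirror the proof of Lemma~\ref{lem:incred} with radius-specific adjustments, then apply Lemma~\ref{lem:min} to obtain Theorem~\ref{thm:incredrad}. Since the graph is undirected we need only one black-box SSSP algorithm $\mathcal A$, and we truncate at $2R'$ because every eccentricity is at most $2R\le 2R'$. I will sample $S$ of size $\tilde\Theta(n^\alpha)$ with $R'=\Theta(n^{1-2\alpha})$ so that w.h.p.\ $S$ hits the closest $n^{1-\alpha}$ vertices to every vertex; run $\mathcal A(s,2R',\delta)$ from every $s\in S$ and maintain $d'(v,S)$ with a dummy source. Define $W=\{v:d'(v,S)>R'/2\}$, pick $w\in W$ (randomly in the incremental case, arbitrarily in the decremental case), sample $S'\subseteq B_{out}(w,R'/2)$ with inclusion probability $\min\{1,\log^2 n/(\delta R')\}$, and use the same three reinitialization triggers as the diameter algorithm with $D'/3$ replaced by $R'/2$.

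The genuinely new ingredient is the query. Since $\mathcal A$ underestimates distances but $\hat R$ must overestimate $R$, for each root $u$ I track $\hat\varepsilon(u)=\max_v d'(u,v)/(1-\delta)$, which satisfies $\varepsilon(u)\le \hat\varepsilon(u)\le \varepsilon(u)/(1-\delta)$ because $\varepsilon(u)\le 2R\le 2R'$ keeps all relevant distances inside the truncation window, and return $\hat R=\min_{u\in S\cup S'\cup\{w\}}\hat\varepsilon(u)$. A max-heap per root together with a global min-heap of the $\hat\varepsilon(u)$ values supports this in $\tilde O(1)$ extra time per change, analogous to $\mathcal H(v)$ and $\mathcal H$ in the diameter algorithm, and the inequality $\hat R\ge R$ is immediate.

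For the upper bound, fix a true center $c^*$ with $\varepsilon(c^*)=R$ and split on whether $c^*\in W$. If $c^*\notin W$, then $d'(c^*,S)\le R'/2$, so Claim~\ref{claim:pre} gives some $s\in S$ with $d(c^*,s)\le R'/(2(1-2\delta))$, yielding $\varepsilon(s)\le R+R'/(2(1-2\delta))$. If $c^*\in W$, then $W\ne\emptyset$ and some $w$ is selected; since $d(w,c^*)\le R\le R'$, the vertex $q$ on a shortest $w$-to-$c^*$ path at distance $\min\{\lfloor R'/2\rfloor,d(w,c^*)\}$ from $w$ lies in $B_{out}(w,R'/2)$ and satisfies $d(q,c^*)\le R-\lfloor R'/2\rfloor\le R'/2+1/2$, so $\varepsilon(q)\le 3R'/2+1/2$. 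The reinitialization invariant then forces some $s'\in S'$ with $d(q,s')\le \delta R'/(1-2\delta)$, giving $\varepsilon(s')\le 3R'/2+1/2+\delta R'/(1-2\delta)$. In both cases the extra $1/(1-\delta)$ factor introduced by $\hat\varepsilon$ absorbs into $(1+\eps)(3R'/2+1/2)$ after setting $\delta$ to a small constant multiple of $\eps$.

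The running-time analysis transfers essentially verbatim from the diameter case: none of the three reinitialization events fires w.h.p., since $d'(w,S)>R'/2$ forces $B_{out}(w,R'/2)\cap S=\emptyset$ and $S$ hits the closest $n^{1-\alpha}$ vertices to $w$, so $|B_{out}(w,R'/2)|\le n^{1-\alpha}$; balancing $n^\alpha$ against $n^{1-\alpha}/(\delta R')$ at $n^{1-2\alpha}=R'$ gives $\tilde O(\sqrt{n/R'}/\eps)$ roots and hence the claimed total time. Decrementally, distances only grow, so $w$ never leaves $W$ and there is a single phase; incrementally, distances only shrink, so once $v$ leaves $W$ it never returns, and the halving argument on a uniformly random $w\in W$ against an oblivious adversary bounds the total number of phases by $\tilde O(1)$ w.h.p. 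The main obstacle will be tracking the $1/(1-\delta)$ and $1/(1-2\delta)$ losses together with the $\lfloor R'/2\rfloor$ rounding carefully, so that the final bound lands at $(1+\eps)(3R'/2+1/2)$ rather than a slightly worse constant.
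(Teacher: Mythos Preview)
Your proposal is correct and follows essentially the same approach as the paper's own proof: same algorithm (replace $D'/3$ by $R'/2$, truncate at $2R'$, undirected $\mathcal A$, sample $S'$ inside $B_{out}(w,R'/2)$), same return value $\hat R=\frac{1}{1-\delta}\min_u\max_v d'(u,v)$ with the same heap structure, and the same two-case correctness argument on whether $c^*\in W$. The paper fixes $\delta=\eps/4$ and in Case~2 simply takes $q$ at distance $\lfloor R'/2\rfloor$ along the $w$--$c^*$ path (using $d(w,c^*)\le R\le R'$ so that $d(q,c^*)\le\lceil R'/2\rceil$), whereas you add the harmless $\min\{\lfloor R'/2\rfloor,d(w,c^*)\}$ guard and in Case~1 invoke Claim~\ref{claim:pre} to get a $1/(1-2\delta)$ loss rather than $1/(1-\delta)$; these are cosmetic differences that do not affect the final bound.
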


\begin{proof}$ $
\paragraph{Algorithm}
The algorithm is identical to the algorithm of Lemma~\ref{lem:incred} except for the following substitutions:
\begin{itemize}
 \item Set $\delta=\eps/4$.
    \item All instantiations of in-$\mathcal{A}$ and out-$\mathcal{A}$ are replaced by the corresponding algorithm $\mathcal{A}$ for undirected graphs and the input parameter $k$ is set to $2R'$ instead of $D'$.
    \item All instances of $D'/3$ are replaced by $R'/2$.
    \item All remaining instances of $D'$ are replaced by $R'$.
    \item The return value $\hat{R}$ is the minimum eccentricity estimate found over all instantiations of $\mathcal{A}$, with a correction factor of $\frac{1}{1-\delta}$. That is, $\hat{R}=\frac{1}{1-\delta}(\min_{s\in S\cup S'\cup \{w\}}\max_{v\in V}d'(v,s))$. To maintain this value, we maintain the following heaps. For every vertex $v$ that we run $\mathcal{A}$ from, we keep a max-heap ${\cal H}(v)$ that stores for each 
other vertex $u$ the estimate $d'(v,u)$. Let $\hat d_{out}(v)$ be the output value of ${\cal H}(v)$.
Additionally we keep a min-heap $\mathcal{H}$ which stores each
$\hat d_{out}(v)$.
    
\end{itemize}
\paragraph{Analysis}
The running time analysis is identical to that of Lemma~\ref{lem:incred} with the above substitutions. We present the analysis of correctness.

Let $c^*$ be a vertex such that $\varepsilon(c^*)=R$. For all $s\in S\cup S'\cup \{w\}$, let $\varepsilon'(s)$ be the estimated eccentricity $\max_{v\in V}d'(s,v)$.

The return value $ \hat{R}$ is $\frac{1}{1-\delta}(\varepsilon'(v))\geq \varepsilon(v)$ for some $v$, so $\hat{R}\geq R$. It remains to show that if $ R'\geq R$, then $\hat{R}\leq (1+\eps)(\frac{3}{2}R'+\frac{1}{2})$.

\noindent{\bf Case 1: $c^*\not\in W$.} Let $s\in S$ be such that $d'(c^*,s)\leq \frac{ R'}{2}$. Then $d(c^*,s)\leq R'(\frac{1}{2(1-\delta)})$. Then by the triangle inequality, $\varepsilon(s)\leq R'(\frac{1}{2(1-\delta)})+R\leq  R'(\frac{1}{2(1-\delta)}+1)$, so $\varepsilon'(s)\leq R'(\frac{1}{2(1-\delta)}+1)$. Thus, $\hat{R}\leq \frac{1}{1-\delta}(\varepsilon'(s))\leq R'(\frac{1}{1-\delta})(\frac{1}{2(1-\delta)}+1)$

\noindent{\bf Case 2: $c^*\in W$.} We don't explicitly use the fact that $c^*\in W$; we just use the fact that $w$ exists.
Consider the shortest path from $w$ to $c^*$. Let $q$ be the vertex on this path at distance $\lfloor\frac{ R'}{2}\rfloor$ from $w$. So $d(q,c^*)\leq \lceil\frac{R'}{2}\rceil\leq \frac{R'}{2}+\frac{1}{2}$. We know that $q\in B_{out}(w, \frac{ R'}{2})$, so $d'(S',q)\leq\delta R'$ or else we would have reinitialized the algorithm. Thus, by Claim~\ref{claim:pre} in the preliminaries, $d(S',q)\leq\frac{\delta R'}{1-2\delta}$. Let $q'\in S'$ be a vertex with $d(q',q)\leq\frac{\delta R'}{1-2\delta}$.

By the triangle inequality, $d(q',c^*)\leq d(q',q)+d(q,c^*)\leq  R'(\frac{\delta }{1-2\delta}+\frac{1}{2})+\frac{1}{2}$. By the triangle inequality, $\varepsilon(q')\leq R'(\frac{\delta }{1-2\delta}+\frac{1}{2})+\frac{1}{2}+R\leq R'(\frac{\delta }{1-2\delta}+\frac{3}{2})+\frac{1}{2}$, so $\varepsilon'(q')\leq R'(\frac{\delta }{1-2\delta}+\frac{3}{2})+\frac{1}{2}$. Thus, $\hat{R}\leq (\frac{1}{1-\delta})\varepsilon'(q')\leq (\frac{1}{1-\delta})(R'(\frac{\delta }{1-2\delta}+\frac{3}{2})+\frac{1}{2})$.

Setting $\delta= \eps/4$ completes the proof of correctness.

\end{proof}

\subsubsection{Eccentricities}
Our theorems and proofs for eccentricities are analogous to those for diameter, with a number of key differences. In this section we describe the differences.

\begin{theorem}\label{thm:incredecc} There is a Las Vegas randomized algorithm for incremental (resp., decremental) eccentricities in unweighted, undirected graphs that given $\eps$ with $0<\eps<0.45$ runs in total time \\$\tilde{O}(\max_{R_f\leq D'\leq D_0}\{T_{inc}(n,m,D',\eps)\frac{\sqrt{n/ D'}}{\eps^2}\})$ (resp., $\tilde{O}(\max_{R_0\leq D'\leq D_f}\{T_{dec}(n,m,D',\eps)\frac{\sqrt{n/ D'}}{\eps^2}\})$) with high probability, and for all $v\in V$ maintains an estimate $\hat{\varepsilon}(v)$ such that $\frac{3(1-\eps)}{5}\varepsilon(v)-1\leq \hat{\varepsilon}(v)\leq \varepsilon(v)$ where $\varepsilon(v)$ is the eccentricity of $v$ in the current graph. The incremental algorithm works against an oblivious adversary.
\end{theorem}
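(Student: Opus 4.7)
The plan is to mirror the proof of Theorem~\ref{thm:incred}, first applying Lemma~\ref{lem:max} to the $n$ parameters $\pi_i=\varepsilon(v_i)$ (seeded by a single BFS for the static constant-factor approximation) to reduce to a per-$D'$ statement analogous to Lemma~\ref{lem:incred}: given $D'>0$ and $\eps>0$, maintain estimates $\hat{\varepsilon}(v)\leq\varepsilon(v)$ such that whenever $\varepsilon(v)\geq D'$ one has $\hat{\varepsilon}(v)\geq (3/5)(1-\eps)D'-1$, in total time $\tilde{O}(T(n,m,D',\eps)\sqrt{n/D'}/\eps)$. The restriction $\eps<0.45$ ensures that a suitable $\delta=\Theta(\eps)$ exists so that all $(1-\delta)$ and $(1-2\delta)^{-1}$ factors compose below the $(3/5)(1-\eps)$ target.

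The algorithm will be that of Lemma~\ref{lem:incred} with three modifications: (i) replace the directed in/out primitives by the single undirected $\mathcal{A}$ and return, per vertex, $\hat{\varepsilon}(v)=\max_{u\in S\cup\{w\}\cup S'}d'(v,u)$, maintained by a max-heap $\mathcal{H}(u)$ per BFS source; (ii) change the $W$-threshold and the ball radius from $D'/3$ to $2D'/5$, so $W=\{v:d'(v,S)>2D'/5\}$ and $S'\subseteq B_{out}(w,2D'/5)$ is sampled at rate $\log^2 n/(\delta D')$; and (iii) choose $w$ to \emph{approximate} $w^*=\arg\max_v d'(v,S)$ rather than arbitrarily within $W$, exploiting monotonicity of $d(w^*,S)$ in the decremental case and an oblivious-adversary random sample from a high-$d'(\cdot,S)$ subset of $W$ in the incremental case. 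For correctness, fixing $v$ with $\varepsilon(v)\geq D'$ and letting $t$ be the vertex at distance $\lfloor D'\rfloor$ from $v$ on a shortest $v$-$t_v$ path, Case~1 ($t\notin W$) follows by Claim~\ref{claim:pre} exactly as in the diameter proof: some $s\in S$ satisfies $d(t,s)\leq 2D'/(5(1-2\delta))$, and $d'(v,s)$ alone meets the target by the triangle inequality. Case~2 ($t\in W$) mimics the static Chechik et al.~$5/3$ argument: split on whether $d(v,w)\geq 3D'/5$ (in which case $d'(v,w)$ directly suffices), and otherwise route the estimate through a vertex $q\in B_{out}(w,2D'/5)$ lying on a shortest path toward $t$, together with an $s'\in S'$ within $\delta D'/(1-2\delta)$ of $q$ guaranteed by the reinitialization invariant, so that two successive triangle inequalities yield $d(v,s')\geq (3/5)(1-\eps)D'-1$.

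The main obstacle will be the dynamic maintenance of $w\approx w^*$ and the accompanying Case~2 argument. Unlike the diameter proof, where the fixed diameter pair $(s^*,t^*)$ anchored the path argument and any $w\in W$ sufficed, here the eccentricity witness $t$ varies with $v$, and only a vertex that is approximately farthest from $S$ structurally ensures that the single ball $B_{out}(w,2D'/5)$ (through its $S'$ sample) hits a vertex far along every $v$-$t$ path simultaneously. Establishing the tight case analysis requires carefully exploiting the $w\approx w^*$ invariant to control $d(w,t)$ in the residual sub-case, and bounding the amortized number of $w$-refreshes in the incremental setting requires a negative-binomial concentration argument against an oblivious adversary that is more delicate than the $Y=\tilde{O}(1)$ argument in Lemma~\ref{lem:incred}; once these pieces are in place, the reinitialization conditions and the $\tilde{O}(T(n,m,D',\eps)\sqrt{n/D'}/\eps)$ running-time bookkeeping transfer verbatim from the diameter algorithm.
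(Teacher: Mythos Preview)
Your reduction via Lemma~\ref{lem:max} and the substitution $D'/3\mapsto 2D'/5$ are right, but the return value in your modification (i) is the wrong estimator, and this breaks Case~2. The paper does \emph{not} return $\hat{\varepsilon}(v)=\max_{u}d'(v,u)$; it returns
\[
\hat{\varepsilon}(v)=\max_{s\in S\cup S'\cup\{w\}}\Bigl\{d'(v,s),\ (1-\delta)\varepsilon'(s)-\tfrac{d'(v,s)}{1-\delta}\Bigr\},
\]
where $\varepsilon'(s)=\max_u d'(s,u)$ is the estimated eccentricity of the source $s$. The second term encodes the triangle-inequality lower bound $\varepsilon(v)\geq \varepsilon(s)-d(v,s)$ and is what drives Case~2: the paper picks $q$ on the shortest $w$--$v$ path (not $w$--$t$) at distance $\lfloor 2D'/5\rfloor$ from $w$, takes $x\in S'$ near $q$, and uses that $x$ is \emph{close} to $v$ so that $\varepsilon(x)$ is close to $\varepsilon(v)$.

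Your Case~2 attempts the opposite: find $s'\in S'$ \emph{far} from $v$ by routing along the $w$--$t$ path. This does not yield the $3D'/5$ bound. Take $d(v,w)$ small and $d(w,t)$ close to $d(v,t)=D'$ (consistent with $t\in W$ and with triangle inequalities); then for $q$ at distance $2D'/5$ from $w$ on the $w$--$t$ path one has $d(q,t)\approx 3D'/5$, hence $d(v,q)\geq d(v,t)-d(q,t)\approx 2D'/5$ only, and every $s'\in S'$ lies in $B(w,2D'/5)$, so $\max_u d'(v,u)$ over your sources can be $\approx D'/2$ or $2D'/5$. The invariant $w\approx w^*$ gives no control over $d(w,t)$ and does not rescue this.

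Consequently your modification (iii) and the ``main obstacle'' you identify are both red herrings. The paper explicitly notes in Case~2 that it does not use $v'\in W$ beyond the mere existence of $w$; any $w\in W$ suffices (the decremental choice is arbitrary, the incremental choice uniform in $W$ solely to bound the number of phases, exactly as in the diameter lemma). Once you add the second estimator term and redirect the Case~2 path to run from $w$ to $v$, the phase structure, reinitialization events, and running-time bookkeeping carry over from the diameter proof with no further change.
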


By Lemma~\ref{lem:max}, the following lemma implies Theorem~\ref{thm:incredrad}.

\begin{lemma}\label{lem:incred}
There is a Las Vegas randomized algorithm for incremental (resp., decremental) eccentricities in unweighted, directed graphs that given $ D',\eps>0$, runs in total time $\tilde{O}\left(T_{inc}(n,m,D',\eps)\frac{\sqrt{n/ D'}}{\eps}\right)$ (resp., $\tilde{O}\left(T_{dec}(n,m,D',\eps)\frac{\sqrt{n/ D'}}{\eps}\right)$) with high probability, and for all $v\in V$ maintains an estimate $\hat{\varepsilon}(v)\leq \varepsilon(v)$ such that if $D'\leq \varepsilon(v)$, then $\hat{\varepsilon}(v) \geq \frac{3(1-\eps)}{5}\varepsilon(v)-1$. The incremental algorithm works against an oblivious adversary.
\end{lemma}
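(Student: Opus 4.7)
The plan is to adopt the algorithmic template of Lemma~\ref{lem:incred} essentially unchanged, modifying only what is queried. Throughout, maintain the random sample $S$ of size $\tilde{\Theta}(n^\alpha)$ (with $D' = \Theta(n^{1-2\alpha})$), the dynamically chosen vertex $w \in W = \{v : d'(v,S) > D'/3\}$ (uniform over $W$ in the incremental case with a new phase begun whenever $w$ leaves $W$; arbitrary in the decremental case), and the subsample $S'$ obtained by including each vertex of $B_{out}(w, D'/3)$ independently with probability $\Theta(\log^2 n/(\delta D'))$ for $\delta = \Theta(\eps)$. Reinitialize on exactly the same three events as Lemma~\ref{lem:incred}, and run in-$\mathcal{A}$ from every vertex in $S \cup S' \cup \{w\}$ (adding $w$ to the set of in-$\mathcal{A}$ sources is the only change in the collection of data structures we maintain). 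For each $v \in V$ we set
\[
\hat{\varepsilon}(v) \;=\; \max_{u \in S \cup S' \cup \{w\}} d'(v, u),
\]
tracked by a per-vertex max-heap ${\cal H}(v)$ fed by the in-$\mathcal{A}$ data structures; each estimate change triggers an $O(\log n)$ heap update, charged to the work of the underlying in-$\mathcal{A}$ step.

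For correctness, the upper bound $\hat{\varepsilon}(v) \leq \varepsilon(v)$ is immediate because every $d'(v, u) \leq d(v, u) \leq \varepsilon(v)$. For the lower bound, fix any $v$ with $\varepsilon(v) \geq D'$ and let $t$ be an eccentric vertex of $v$. In Case 1 ($t \notin W$), the argument is the direct analogue of Case 1 in Lemma~\ref{lem:incred}: some $s \in S$ has $d'(t, s) \leq D'/3$, hence $d(t, s) \leq D'/(3(1 - 2\delta))$ by Claim~\ref{claim:pre}, and the triangle inequality yields $d(v, s) \geq \varepsilon(v) - D'/(3(1-2\delta))$, so $\hat{\varepsilon}(v) \geq (1-\delta)d(v, s)$ clears the bound $(3(1-\eps)/5)\varepsilon(v) - 1$ for $\delta = \Theta(\eps)$ (the ratio we actually obtain is at least $2/3$, stronger than $3/5$). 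In Case 2 ($t \in W$), the plan is to port the static $5/3$-approximation argument of Chechik et al.~\cite{chechikdiam}. We split further on $d'(v, w)$: if it is already at least $(3(1-\eps)/5)\varepsilon(v) - 1$ we conclude via $w$ itself; otherwise, $d(v, w)$ is small, and we look for an appropriate intermediate vertex $q$ on a shortest $w$-to-$t$ path that lies in $B_{out}(w, D'/3)$. The reinitialization invariant then guarantees $d'(S', q) \leq \delta D'$, which by Claim~\ref{claim:pre} gives $q' \in S'$ with $d(q, q') \leq \delta D'/(1-2\delta)$; chaining triangle inequalities (from $v$ through $q$ to $t$, and from $q$ to $q'$) lower bounds $d(v, q')$, and $\hat{\varepsilon}(v) \geq d'(v, q') \geq (1-\delta) d(v, q')$ yields the $3/5$-bound after the usual $\delta = \Theta(\eps)$ calibration.

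The running-time analysis carries over verbatim from Lemma~\ref{lem:incred}: the three reinitialization events each fail with high probability by the same sampling arguments, the number of phases in the incremental setting is $\tilde{O}(1)$ by the negative-binomial concentration on the random choice of $w$, and in total in-$\mathcal{A}$ is instantiated from $\tilde{O}(\sqrt{n/D'}/\eps)$ sources; the per-vertex heaps add only a $\log n$ factor. The main obstacle will be the Case-2 correctness calculation. Unlike the diameter setting, where $w$ approximately plays the role of one endpoint of the realizing path, here the shortest path of interest runs from $v$ to $v$'s eccentric vertex $t$, while $w$ sits off to the side as a witness of ``far from $S$''. The choice of $q$ and the split on $d'(v, w)$ must therefore be tuned so that $q$ actually lies in $B_{out}(w, D'/3)$, the approximation slack from $\mathcal{A}$ and from Claim~\ref{claim:pre} sums to at most an $\eps$-fraction of the target, and the final ratio comes out to exactly $3/5$ rather than something weaker; this is the one step that is more than a mechanical substitution into the proof of Lemma~\ref{lem:incred}.
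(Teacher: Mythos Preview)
Your Case~2 plan does not work, and the root cause is the choice of estimator: setting $\hat{\varepsilon}(v)=\max_{u\in S\cup S'\cup\{w\}} d'(v,u)$ cannot deliver a $3/5$-fraction in this case. In Case~2 you are assuming $d'(v,w)$ is small (below roughly $\tfrac{3}{5}\varepsilon(v)$), and every vertex of $S'\cup\{w\}$ lies in (or near) $B_{out}(w,D'/3)$, hence within distance $\approx D'/3$ of $w$. So for every $u\in S'\cup\{w\}$ we only have $d(v,u)\le d(v,w)+d(w,u)\lesssim \tfrac{3}{5}\varepsilon(v)+\tfrac{1}{3}D'$, and nothing prevents $d(v,u)$ from being tiny (take $w$ near $v$). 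Your proposed chain of triangle inequalities along the $w$-to-$t$ path does not help: to lower-bound $d(v,q')$ via $d(v,t)-d(q',t)$ you would need an \emph{upper} bound on $d(w,t)$ (since $d(q,t)=d(w,t)-D'/3$), and no such bound is available. The set $S$ gives no rescue either, since $t\in W$ only says $S$ is far from $t$, not that any $s\in S$ is far from $v$.

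The missing idea is exactly what the paper supplies: besides $d'(v,s)$, also use the estimate $(1-\delta)\varepsilon'(s)-\tfrac{d'(v,s)}{1-\delta}$, where $\varepsilon'(s)=\max_u d'(s,u)$ is the estimated eccentricity of the source $s$. This quantity is always $\le \varepsilon(v)$ (since $\varepsilon(s)-d(v,s)\le d(v,s')\le\varepsilon(v)$ for $s'$ farthest from $s$), and it is \emph{large} precisely when $s$ is \emph{close} to $v$, which is the regime your estimator fails in. With this extra term, the right move in Case~2 is to look at the shortest $w$-to-$v$ path (not $w$-to-$t$), take $q$ at distance $\lfloor 2D'/5\rfloor$ from $w$ on it, pick $x\in S'$ with $d(q,x)\le \tfrac{\delta D'}{1-2\delta}$, and bound $d(v,x)$ by $d(v,q)+d(q,x)\lesssim \varepsilon(v)/5$; then $\hat{\varepsilon}(v)\ge (1-\delta)^2\varepsilon(x)-\tfrac{d(v,x)}{1-\delta}\gtrsim \varepsilon(v)-2d(v,x)\gtrsim \tfrac{3}{5}\varepsilon(v)$. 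Note also that the paper replaces the threshold $D'/3$ by $2D'/5$ and works in undirected graphs (so $d'(v,s)=d'(s,v)$ and a single $\mathcal{A}$ suffices); the lemma's ``directed'' is a slip, consistent with Theorem~\ref{thm:incredecc} being stated for undirected graphs.
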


\begin{proof}$ $


\paragraph{Algorithm}


We run the algorithm from Lemma~\ref{lem:incred} with the following substitutions:
\begin{itemize}
 \item Set $\delta=\eps/9$.
     \item All instantiations of in-$\mathcal{A}$ and out-$\mathcal{A}$ are replaced by the corresponding algorithm $\mathcal{A}$ for undirected graphs.
    \item All instances of $D'/3$ are replaced by $2D'/5$.
    \item For all $s\in S\cup S'\cup \{w\}$, let $\varepsilon'(s)$ be the estimated eccentricity $\max_{v\in V}d'(s,v)$. For all $v\in V$, the return value is $\hat{\varepsilon}(v)=\max_{s\in S\cup S'\cup \{w\}} \{d'(v,s),(1-\delta)\varepsilon'(s)-\frac{d'(v,s)}{1-\delta}\}$. To maintain these values, for every vertex $v$ we keep a max-heap $\mathcal{H}(v)$ which stores $d'(v,s)$ and $(1-\delta)\varepsilon'(s)-\frac{d'(v,s)}{1-\delta}$ for each vertex $s$ that we run $\mathcal{A}$ from.
\end{itemize}

\paragraph{Analysis}
The running time analysis is identical to that of Lemma~\ref{lem:incred} with the above substitutions. 
We present the analysis of correctness. 
For all $v\in V$, let $v'$ be the farthest vertex from $v$. Fix $v\in V$. 

First we show that $\hat{\varepsilon}(v)\leq {\varepsilon}(v)$. If $\hat{\varepsilon}(v)=d'(v,s)$ for some $s$ then it is clear that $\hat{\varepsilon}(v)\leq \varepsilon(v)$. Otherwise, from the expression of the return value, for some $s$, $\hat{\varepsilon}(v)\leq(1-\delta)\varepsilon'(s)-\frac{d'(v,s)}{1-\delta}\leq \varepsilon(s)-d(v,s)\leq d(v,s')\leq \varepsilon(v)$. It remains to show that if $D'\leq \varepsilon(v)$, then $\frac{3(1-\eps)}{5}\varepsilon(v)-1\leq \hat{\varepsilon}(v)$.




\noindent{\bf Case 1. $v'\not\in W$.} 
Let $s\in S$ be such that $d'(v',s)\leq 2D'/5$. Then $d(v',s)\leq\frac{D'}{5(1-\delta)}$. By the triangle inequality, $d(v,s)\geq \varepsilon(v)-d(v',s)\geq \varepsilon(v)-\frac{2D'}{5(1-\delta)}\geq \varepsilon(v)(1-\frac{2}{5(1-\delta)})$. Then, $\varepsilon'(v)\geq d'(v,s)\geq \varepsilon(v)(1-\delta)(1-\frac{2}{5(1-\delta)})=\varepsilon(v)(3/5-\delta)$.

\noindent{\bf Case 2. $v'\in W$.} We don't explicitly use the fact that $v'\in W$; we just use the fact that $w$ exists. If $d'(w,v)\geq\frac{3\varepsilon(v)}{5}$, then we are done. So suppose otherwise; that is, suppose $d'(w,v)<\frac{3\varepsilon(v)}{5}$ so $d(w,v)<\frac{3\varepsilon(v)}{5(1-\delta)}$. Consider the shortest path from $w$ to $v$. Let $q$ be the vertex on this path at distance $\lfloor\frac{2D'}{5}\rfloor$ from $w$. So $d(q,v)<\frac{3\varepsilon(v)}{5(1-\delta)}-\lfloor\frac{2D'}{5}\rfloor\leq \frac{3\varepsilon(v)}{5(1-\delta)}-\frac{2D'}{5}+\frac{4}{5}$. We know that $q\in B_{out}(w,\frac{2D'}{5})$, so $d'(S',q)\leq \delta D'$ or else we would have reinitialized the algorithm. Thus, by Claim~\ref{claim:pre} in the preliminaries, $d(S',q)\leq \frac{\delta D'}{1-2\delta}$. Let $x\in S'$ be a vertex with $d(q,x)\leq \frac{\delta D'}{1-2\delta}$.

Recall from the description of the return value that $\hat{\varepsilon}(v)\geq (1-\delta)\varepsilon'(x)-\frac{d'(v,x)}{1-\delta}\geq(1-\delta)^2\varepsilon(x)-\frac{d(v,x)}{1-\delta}$. 
By the triangle inequality, $\varepsilon(x)\geq d(x,v')\geq \varepsilon(v)-d(v,x)$. So,  $\hat{\varepsilon}(v)\geq (1-\delta)^2(\varepsilon(v)-d(v,x))-\frac{d(v,x)}{1-\delta}\geq (1-\delta)^2\varepsilon(v)-d(v,x)((1-\delta)^2+\frac{1}{1-\delta})$.

By the triangle inequality, $d(x,v)\leq d(x,q)+d(q,v)\leq \frac{\delta D'}{1-2\delta}+\frac{3\varepsilon(v)}{5(1-\delta)}-\frac{2D'}{5}+\frac{4}{5}\leq \varepsilon(v)(\frac{\delta}{1-2\delta}+\frac{3}{5(1-\delta)}-\frac{2}{5})+\frac{4}{5}$. 

Combining the previous two paragraphs, we have $\hat{\varepsilon}(v)\geq\varepsilon(v)((1-\delta)^2- (\frac{\delta}{1-2\delta}+\frac{3}{5(1-\delta)}-\frac{2}{5})((1-\delta)^2+\frac{1}{1-\delta}))-1$.

One can verify that for all $\delta<0.05$, $(1-\delta)^2- (\frac{\delta}{1-2\delta}+\frac{3}{5(1-\delta)}-\frac{2}{5})((1-\delta)^2+\frac{1}{1-\delta})\geq 3(1-9\delta)/5$. Thus we can set $\delta=\eps/9$ and obtain the result.

\end{proof}

\subsection{Deterministic incremental $(1+\eps)$-approximation}\label{subsec:det}

In this section we present a {\em deterministic} incremental $(1+\eps)$-approximation
for diameter, radius, and eccentricities in directed graphs. The algorithm for eccentricities only works for strongly connected graphs.
\subsubsection{Diameter}

\begin{theorem}\label{thm:incred2} 
There is a deterministic algorithm  for incremental diameter in unweighted, directed  graphs that, for any   $\eps$ with  $0<\eps<2$, runs in total time $\tilde{O}(\max_{D_f\leq D'\leq D_0} \{(T_{inc}(n,m,D',\eps) +
m) n/(\eps^2 D') \})$, and maintains an estimate $\hat{D}$ such that $(1-\eps)D\leq \hat{D}\leq D$, where $D$ is the diameter of the current graph.
\end{theorem}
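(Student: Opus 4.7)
The plan is to apply Lemma~\ref{lem:max} (with $k=1$, using a known static near-linear-time $O(1)$-approximation for directed diameter as the constant-factor estimator) to reduce the theorem to the following parameterized subproblem: given $D'$ and $\eps>0$, deterministically maintain $\hat D\le D$ satisfying $\hat D\ge (1-\eps)D'$ whenever $D\ge D'$, in total time $\tilde{O}((T_{inc}(n,m,D',\eps)+m)\,n/(\eps D'))$. Lemma~\ref{lem:max} then absorbs the extra $1/\eps$ factor and the outer $\max$ over $D'$ present in the theorem statement.

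For the parameterized subproblem, the plan is to deterministically maintain a pivot set $H$ of size $\tilde O(n/(\eps D'))$ with the out-covering property $V\subseteq\bigcup_{h\in H}N_{out}(h,\eps D')$, i.e.\ every vertex is reachable from some pivot within $\eps D'$ steps, and to run in-$\mathcal{A}_{inc}(h,D',\eps)$ from every $h\in H$. The reported estimate is $\hat D=\max_{h\in H,\,v\in V}d'(v,h)$, tracked with per-source and global max-heaps exactly as in the proof of Lemma~\ref{lem:incred}. The set $H$ is built by a greedy out-covering: process vertices in some canonical order and add $v$ to $H$ whenever no current $h\in H$ has $d(h,v)\le\eps D'$. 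Because distances are monotonically non-increasing in the incremental setting, once a vertex becomes out-covered by some pivot it stays covered, so $H$ only shrinks over time---reducing maintenance to the initial greedy construction plus occasional retirements.

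Correctness follows from the directed triangle inequality. When $D\ge D'$, let $u^*,v^*$ be diameter endpoints with $d(u^*,v^*)=D$; by the out-covering property some $h\in H$ has $d(h,v^*)\le\eps D'$. The inequality $D=d(u^*,v^*)\le d(u^*,h)+d(h,v^*)$ then yields $d(u^*,h)\ge D-\eps D'\ge (1-\eps)D'$, so $\varepsilon(h)\ge d(u^*,h)\ge(1-\eps)D'$ in the in-direction. The truncated in-SSSP algorithm in-$\mathcal{A}_{inc}(h,D',\eps)$ certifies $d'(u^*,h)\ge(1-\eps)^2 D'$ if $d(u^*,h)\le D'$, and otherwise the max over the truncated-frontier estimates already gives $\hat D\ge D'$. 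The running-time accounting multiplies $|H|$ truncated SSSPs at cost $T_{inc}(n,m,D',\eps)$ each by an $\tilde O(1)$ heap-maintenance overhead per update, matching the target bound once $|H|$ is controlled.

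The main obstacle is bounding $|H|=\tilde O(n/(\eps D'))$ deterministically, since generic greedy $r$-nets need not have size $n/r$. The structural hypothesis $D\ge D'$ is crucial: in a strongly connected directed graph of diameter at least $D'$, each vertex $h$ satisfies $\max\{\varepsilon^{out}(h),\varepsilon^{in}(h)\}\ge D'/2$ (by applying the triangle inequality to the diameter endpoints), so its out- or in-ball of radius $\eps D'/4$ contains at least $\eps D'/4+1$ vertices. Splitting $H$ into the two orientation classes according to which ball is large and charging the relevant balls to the $n$ vertices of the graph bounds each class by $O(n/(\eps D'))$. Carrying out this packing carefully in the directed setting---where asymmetry of $d$ means that neither all-out nor all-in balls of a greedy out-packing are automatically pairwise disjoint, so a mixed out/in charging scheme is required---is the most delicate step, and is where the quadratic gap between this deterministic $n/(\eps D')$ bound and the randomized $\sqrt{n/D'}$ of Theorem~\ref{thm:incred} enters essentially.
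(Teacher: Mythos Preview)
Your high-level plan---reduce via Lemma~\ref{lem:max}, pick a set of pivots so that every vertex is out-reachable from some pivot within $\eps D'$, run in-$\mathcal{A}_{inc}$ from each pivot, and return the maximum estimate---matches the paper exactly, and your correctness argument is essentially the paper's. The problem is the size bound on $H$.

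Your greedy (add $v$ to $H$ whenever no current $h\in H$ has $d(h,v)\le\eps D'$) does \emph{not} produce $\tilde O(n/(\eps D'))$ pivots in directed graphs, and the mixed out/in-ball charging you sketch cannot be completed: the only relation you get between two pivots is $d(h_i,h_j)>\eps D'$ for $i<j$, and this one-directional separation does not force out-balls, in-balls, or any mixture thereof to be disjoint. Concretely, take vertices $h_1,\dots,h_k,x,p_1,\dots,p_L$ with edges $h_i\to x$, $x\to p_1\to\cdots\to p_L$, and $p_L\to h_i$ for all $i$. The graph is strongly connected with diameter $D=L+2$. If you process $h_1,\dots,h_k$ first, then $d(h_i,h_j)=L+2>\eps D'$ for all $i\ne j$, so every $h_i$ becomes a pivot and $|H|\ge k$. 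With $k=L$ you get $|H|=L$ while $n/(\eps D')\approx 2/\eps$, so your claimed bound fails by a factor of $L$. Note also that all the $h_i$ share identical out-balls (through $x$) and identical in-balls (through $p_L$), so no ball-packing can separate them. Your remark that $D\ge D'$ is ``crucial'' does not help: the counterexample satisfies $D\ge D'$.

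The paper's construction is genuinely different and sidesteps packing entirely. It maintains a residual graph $G'$ and repeatedly picks as a center any $v$ with $|Out(v,\eps' D'/2,G')|\ge\lfloor\eps' D'/2\rfloor$, then \emph{deletes} those vertices from $G'$. Each center thus consumes $\Omega(\eps D')$ fresh vertices, giving $O(n/(\eps D'))$ centers directly by disjointness of the consumed sets---no diameter hypothesis needed. The leftover vertices (those whose residual out-ball is small) are then labeled by an already-chosen center via a short argument using sources of $G'$, without creating new centers. In the counterexample above, after $h_1$ becomes a center and $x,p_1,\dots$ are removed, the remaining $h_i$ have empty residual out-balls and never become centers.
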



By Lemma~\ref{lem:max}, the following lemma implies Theorem~\ref{thm:incred2}.

\begin{lemma}\label{lem:incred2}
There is a deterministic algorithm for incremental diameter in unweighted, directed graphs that,  given a parameter $D'$, for any $\eps$ with $0<\eps<2$, runs in total time $\tilde{O}((T_{inc}(n,m,D',\eps) + m) n/{(\eps D')})$, and maintains an estimate $\hat{D} \le D$ such that if $D' \le D$ then  $\hat{D}\geq (1-\eps)D'$, where $D$ is the diameter of the current graph.
\end{lemma}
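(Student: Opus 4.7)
My proposal is to maintain incremental truncated out-SSSP from a fixed pivot set $S$ of size $\tilde{O}(n/(\eps D'))$, deterministically chosen at preprocessing time. Set $\delta=\eps/3$ and $k=\lceil \eps D'/3\rceil$. In the initial graph, compute a hitting set $S\subseteq V$ of size $O((n/(\eps D'))\log n)$ that hits $N_{out}(v,k)$ for every $v\in V$ whose $N_{out}(v,k)$ has size $k$; such $S$ exists because each set has size $k=\Theta(\eps D')$ in a universe of size $n$, and it can be built deterministically by greedy set cover (or by derandomizing a uniform sample of appropriate size). Then throughout the incremental sequence run out-$\mathcal{A}_{inc}(s,D',\delta)$ from every $s\in S$ and maintain a global max-heap holding every current estimate $d'(s,v)$; the output $\hat D$ is the top of this heap (treated as $D'$ if the SSSP reports a value exceeding $D'$).

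The correctness argument rests on a simple structural fact about the $N_{out}$ sets. If a vertex $v$ has some outgoing vertex at distance at least $k$, then the shortest path to that vertex contains $k+1$ distinct vertices at distances $0,1,\dots,k$ from $v$, so $|B_{out}(v,k)|\geq k+1$, which forces $N_{out}(v,k)\subseteq B_{out}(v,k)$, i.e.\ every vertex in $N_{out}(v,k)$ is truly within distance $k$ of $v$. Consider any time $t$ at which the current diameter $D\geq D'$, with endpoints $s^*,t^*$. In the initial graph, $s^*$'s eccentricity is at least its current eccentricity $D\geq D'\geq k$, so the fact applies in the initial graph and $S$ contains some $s$ with $d_0(s^*,s)\leq k$. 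Because edge insertions can only decrease distances, $d_t(s^*,s)\leq k = \eps D'/3$; the triangle inequality then gives $d_t(s,t^*)\geq D - \eps D'/3\geq (1-\eps/3)D'$. If $d_t(s,t^*)\leq D'$ the SSSP guarantee yields $d'(s,t^*)\geq (1-\delta)(1-\eps/3)D'\geq (1-\eps)D'$; otherwise the data structure already witnesses a value of at least $D'$, and either way $\hat D\geq (1-\eps)D'$. The bound $\hat D\leq D$ holds because every estimate is dominated by a true distance (or by the cap at $D'\leq D$).

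For the running time, running $|S|=\tilde O(n/(\eps D'))$ copies of out-$\mathcal{A}_{inc}$ costs $\tilde O(T_{inc}(n,m,D',\eps)\cdot n/(\eps D'))$; an initial round of truncated BFS used to both construct $S$ and initialize the SSSP structures costs $\tilde O(m\cdot n/(\eps D'))$; the max-heap adds only logarithmic factors absorbed by $\tilde O$. Summed, this matches $\tilde O((T_{inc}(n,m,D',\eps)+m)\cdot n/(\eps D'))$ as required. The main obstacle I foresee is constructing $S$ deterministically within the stated budget without the naive $\Omega(mn)$ cost of enumerating every $N_{out}(v,k)$; this should be handled by an iterative greedy procedure that, in each round, uses a single truncated BFS (bounded at $k$ discovered vertices) to identify a new pivot covering many still-uncovered $v$. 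The reason a single initially-computed $S$ suffices forever is precisely the monotonicity of distances under insertions, which turns any initial hit into a lasting certificate.
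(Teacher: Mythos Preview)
Your high-level approach is sound and is the natural dual of the paper's: the paper picks centers $c$ so that every vertex $u$ has some center with $d(c,u)\le\eps'D'$ and runs in-$\mathcal{A}_{inc}$ from each center (so a center near $t^*$ witnesses $s^*$ far away), while you pick pivots $s$ so that every vertex $v$ has some pivot with $d(v,s)\le k$ and run out-$\mathcal{A}_{inc}$ (so a pivot near $s^*$ witnesses $t^*$ far away). The monotonicity argument that an initially valid pivot set stays valid under insertions is correct, and the $d_t(s,t^*)>D'$ case is easily patched by looking at the vertex at distance exactly $D'$ on the $s$--$t^*$ path (the paper's own write-up glosses over the same point).

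The genuine gap is exactly the one you flag, and your proposed fix does not close it. Greedy set cover on $\{N_{out}(v,k)\}_v$ requires those sets, and $n$ truncated BFSes cost $\Theta(nm)$ in the worst case. Your ``iterative greedy with one truncated BFS per round'' is not yet an algorithm: a forward BFS from a candidate $s$ tells you whom $s$ reaches, not which uncovered $v$ have $s\in N_{out}(v,k)$, and a backward BFS only bounds $d(v,s)$, which is not the same as membership in $N_{out}(v,k)$. The paper sidesteps the hitting-set formulation entirely with a two-phase ball-peeling construction: (i) while some vertex $v$ in the current graph has $|Out(v,\eps'D'/2)|\ge\lfloor\eps'D'/2\rfloor$, make $v$ a center and delete that ball---this gives $O(n/(\eps D'))$ centers, and each round costs $O(m)$ via a shared-counter BFS that never re-explores edges between successive center discoveries; (ii) once no large ball remains, every surviving vertex has out-reach of length $<\eps'D'/2$, and strong connectivity (finite diameter) lets one attach each leftover to an already-chosen center through a source of the residual graph and one extra edge. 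Phase~(ii), which handles the ``thin'' leftovers, is the nontrivial ingredient your sketch lacks; without it an iterative peeling procedure has no reason to terminate with the required coverage.
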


\begin{proof}$  $

The basic idea for the proof is that it is possible deterministically to pick a set of {\em center} nodes, to estimate the in-eccentricity of each of the centers (i.e. for each center $c$, $\max_{v\in V}d(v,c)$) by running in-$\mathcal{A}$ 
and to return the maximum values of these estimates. Let $s^*$ and $t^*$ be vertices such that $d(s^*, t^*) = D$. Let $I$ be the set of nodes
 that can reach $t^*$ by a path of length at most $\eps D$. Then  for every vertex
$x \in I$ it must hold that $d(s^*,x)\geq (1-\eps)D$. Thus, $x$ has in-eccentricity at least $(1-\eps) D$. Hence, we will chose centers in such a way that
it is guaranteed that one of the nodes in $I$ is a center. Since we do not know $t^*$, we pick a set of centers such that for {\em every} node we can guarantee
that it has distance at most $\eps D$ from a center.
The proof below uses this basic idea, but it slightly more complicated as the SSSP data structure that we use as subroutine only gives approximate distances.

\paragraph{Algorithm}
Let $\eps'=\eps/2$. When we run in-$\mathcal{A}_{inc}$ from a vertex $v$, we also keep a max-heap ${\cal H}(v)$ that stores for each 
other vertex $u$ the estimate $d'(u,v)$  returned by the algorithm and which outputs the value $\hat d_{in}(v)
=\max_u d'(u,v)$. Additionally we keep a max-heap $\mathcal{H}$ which stores for each center $v$ the value
$\hat d_{in}(v)$.
This heap is updated whenever $\hat d_{in}(v)$
changes for any center $v$.


Our deterministic algorithm is a derandomized version of the following randomized Monte-Carlo algorithm: 
Pick a set $S$ of $C n \log n /(\eps' D')$  random nodes (for a suitable constant C), which 
are called {\em centers} and for each center $c$, run
in-$\mathcal{A}_{inc}(c,D',\eps')$. 
Output as $\hat{D}$ the maximum over all nodes $v \in S$ of $\hat d_{in}(v)$.

For the correctness of this algorithm we have to show that
(i) $\hat{D} \le D$ and (ii) if $D' \le D$ then $\hat{D} \ge (1-\eps) D'$.
(i) $\hat{D} \le D$  follows from the fact that every value $\hat  d_{in}(v)$ represents the length
of an actual path in $G$.
(ii) By a standard Chernoff bound one can show that whp for every pair of
vertices $(u, w)$ there exists a center $v$ that lies on the shortest path from
$u$ to $w$ such that $d(v,w) \le \eps' D'$. This implies that whp there exists a
center $v^*$ that lies on the shortest path from $s^*$ to $t^*$ such that
$d(v^*,t^*) \le \eps' D'$, 
where $s^*$ and $t^*$ are vertices with
$d(s^*,t^*) = D$.  It follows that whp $d(s^*,v^*) \ge D - \eps' D'.$
Note that $D' \le D$ implies that whp $d(s^*,v^*) \ge D'(1-\eps')$.
Since we run in-$\mathcal{A}_{inc}$ from $v^*$ and $\hat{D} \ge \hat{d}_{in}(v^*)$
it follows that whp $\hat{D} \ge (1-\eps) D'$, which is the bound we needed to show.
We run
$O(n/(\eps' D')) = O(n/(\eps D'))$ instantiations of algorithm in-$\mathcal{A}_{inc}$, which gives
a 
total time of $\tilde{O}(T_{inc}(n,m,D',\eps) n/{(\eps D')})$.

We now show how to pick centers in a deterministic way such that for every vertex $u$, there exists a center $v$ such that $d(v,u)\leq \eps'D'$, which derandomizes the above algorithm. Our construction is similar to the deterministic
center construction in~\cite{HenzingerKNSICOMP16}.

Let $Out(v, \gamma, H) = \{x \in H, d_H(v,x) \le \gamma\}$ be the set of nodes that vertex $v$ can reach in at most $\gamma$ steps in $H$.

\subparagraph{Initialization}
We construct a set $S $ of vertices, called {\em centers}, as follows:
\begin{enumerate}
\item Set $G' = G$. 
\item Repeatedly add to $S$ a node $v$ of $G'$ with $|Out(v, \eps' D'/2, G')| \ge
\lfloor\eps' D'/2\rfloor$ and remove from $G'$ all nodes 
$x$ in $Out(v, \eps' D'/2, G')$, labeling each of them with their center $v$.
The procedure stops when for every vertex $v$ in $G'$ it holds that  
$|Out(v, \eps' D'/2, G')| < \lfloor\eps' D'/2\rfloor.$
\item Determine the set $A$ of  source nodes $u$ of $G'$ (i.e. vertices with indegree 0) and for each source $u$ store the center label of one of the in-neighbors $w$ of $u$ in $G$ as value
$c(u)$. (Since $G$ is strongly connected, $u$ must have at least one in-neighbor in $G$ and since
$u$ is a source in $G'$, this in-neighbor must have been removed from $G'$ and, thus, be labeled.)
\item Repeatedly remove a node $u$ from $A$, label all nodes in $Out(u, n, 
G')$ with $c(u)$ and remove them from $G'$. 
The procedure stops when $A$ is empty.
\end{enumerate}

\subparagraph{Edge insertion}
We run the algorithm in-$\mathcal{A}_{inc}$
for every center and update them and the corresponding heaps after each edge insertion in $G$.


\subparagraph{Query}
The algorithm returns the maximum value in $\mathcal{H}$.

\paragraph{Analysis}

\subparagraph{Correctness}
We first show that for every vertex $u$ of $G$ there exists a center $v$ such that 
$d(v,u) \le \eps' D'$, i.e. $v$ has a path of length at most $\eps' D'$ to $u$. 

\begin{claim}\label{claim:near2}
After every edge update there exists for every vertex $u$ of $G$  a center  $v$ such that  $d(v,u) \le \eps' D'$.
\end{claim}
\begin{proof}
We first argue that the claim holds after the initialization. 
For the vertices $x$ that receive their center label in Step 2 the claim holds as all these vertices belong to the set
$Out(v, \eps' D'/2, G')$ of some center $v$ and $G'$ is a subgraph of $G$.
Now all nodes that are not labeled in Step 2, belong to $G'$ in Step 3
and, thus,  must belong to
$Out(u, n, G')$ for some source node $u$ in $A$. Hence, it will be labeled  by a center during
Step 4. We now show that this label also fulfills the condition of the claim.

For every source $u$ in $A$ in Step 3 of the initialization it holds that
$|Out(u, \eps' D'/2, G')| < \lfloor\eps' D'/2\rfloor$ which implies that for all vertices
$x$ in $G'$ either $d_{G'}(u,x) < \lfloor\eps' D'/2\rfloor$ or $d_{G'}(u,x) = \infty$.
This implies that in Step 3 of the initialization $Out(u, n, G') = Out(u, \eps' D'/2 -1, G')$. Let $v$ be the value of $c(u)$ as set in Step 3. 
It follows that  $u$ has an edge from a vertex $w$ that received $v$ as center label in Step 2. 
With the previous argument this implies that $d_G(v,u) \le d_G(v,w) + 1 \le \eps' D'/2 +1$ and, thus, that for all nodes $z$ in $Out(u, \eps' D'/2 -1, G')$ it holds that
$d_G(v,z) \le  d_G(v,u) + d_{G'}(u,z) \le \eps' D'/2 +1 + \eps' D'/2 -1=
\eps' D'$.

Thus, the claim holds after initialization. 
Note that edge insertions cannot decrease distances between nodes and, thus, the
claim continues to hold throughout the algorithm.
\end{proof}

The remainder of the correctness argument is identical to that of the randomized algorithm that we have derandomized.

\subparagraph{Running time}

\begin{claim}\label{claim:centernum2}
There are $O({\frac{n}{\eps D'}})$ centers.
\end{claim}
\begin{proof}
Centers are only created in Step 2 of the initialization. When a vertex $v$ becomes a
center, it holds that $|Out(v, \eps' D'/2, G')| \ge
\eps' D'/2$. Assign the nodes that are in $Out(v, \eps' D'/2, G')$  at this 
point in time to $v$. Note that these nodes are not assigned to any other center as they
are removed from $G'$ as soon as they are assigned to $v$. Thus we uniquely assigned to each center at least
$\eps' D'/2 = \Omega(\eps D')$ vertices of $G$, which implies that
there are $O({\frac{n}{\eps D'}})$ centers.
\end{proof}

Recall that $m$ be the number of edges in the initial graph. 
We first show how to implement Step 2 of the initialization in time $O({\frac{nm}{\eps D'}})$.
More specifically, we show that the time spent to find a new node to add to $S$ or
to determine that there are  no further suitable nodes takes time $O(m)$. We test the
vertices of $V$ in some fixed order (say in order of increasing ID) whether they can be added to $S$. Assume the last
test that added a vertex to $S$ was the $t$-th test. At that time we set a counter at
every node in $G'$ to 0. Then we perform BFS (using only out-edges) from an arbitrary node $v$ of $G'$.
If $|Out(v, \eps' D'/2, G')| \ge
\eps' R'/2$ then we add $v$ to $S$, otherwise we store $|Out(v, \eps' D'/2, G')|$ as
counter at $v$
and start a new BFS. However, this and all later BFS traversals do not recursively
call themselves on nodes with non-zero counter: Let $y$ be a node at which a BFS started.
 If the BFS reaches a
node $x$ with non-zero counter, it does not explore the outcoming edges of $x$. Instead it uses the counter of $x$ to update its internal variable that stores the number of nodes
that can be reached from $y$ and continue the BFS by a backtracking step. When the BFS of $y$ has
terminated, we test $|Out(y, \eps' D'/2, G')|$ and proceed with $y$ as with $v$ before. If all nodes of $G'$ have been explored and none was added to $S$, 
the loop in Step 1 ends. Note that in this way every edge in $G'$ is traversed at most
once and then either a new node is added to $S$ or Step 2 terminates. 
Together with Claim~\ref{claim:centernum2} it follows that Step 2 takes time
 $O({\frac{nm}{\eps D'}})$. 

The other steps of the initialization can be implemented in time $O(m)$ each.
Thus the total time for initialization apart from the time
spent for initializing the in-$\mathcal{A}_{inc}$
data structures is $O({\frac{nm}{\eps D'}})$. 

The total time for initializing and maintaining the in-$\mathcal{A}_{inc}$
data structures is $O(T_{inc}(n,m,D',\eps){\frac{n}{\eps D'}} )$. 
Maintaining the heaps ${\cal H}(v)$ and the global heap $\mathcal H$  increases the running time only by a
factor of $O(\log n)$ as the number of heap operations is at most linear in 
$O(T_{inc}(n,m,D',\eps){\frac{n}{\eps D'}} )$.
To summarize, the total time for the algorithm is $O((T_{inc}(n,m,D',\eps) \log n + m) {\frac{n}{\eps D'}} )$. 
\end{proof}

\subsubsection{Radius}


\begin{theorem}\label{thm:incredrad2} 
There is a deterministic algorithm  for incremental radius in unweighted, directed  graphs that, for any   $\eps$ with  $0<\eps<2$, runs in total time $\tilde{O}(\max_{R_f\leq R'\leq R_0} \{(T_{inc}(n,m,2R',\eps) +
m) n/(\eps^2 R') \} + m n)$, 
and maintains an estimate $\hat{R}$ such that $R\leq \hat{R}\leq (1+\eps) R$, where $R$ is the radius of the current graph.
For strongly connected graphs, the additive factor of $mn$ in the time is not necessary.
\end{theorem}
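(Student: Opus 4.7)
The plan is to mirror the diameter argument of Theorem~\ref{thm:incred2}, swapping in-distances for out-distances. By Lemma~\ref{lem:min}, it suffices to prove an analog of Lemma~\ref{lem:incred2}: given a parameter $R'$, a deterministic incremental algorithm maintaining $\hat{R} \geq R$ with $\hat{R} \leq (1+\eps)R'$ whenever $R' \geq R$, in total time $\tilde{O}((T_{inc}(n,m,2R',\eps)+m)\,n/(\eps R'))$ (plus an $O(mn)$ overhead in the non-strongly-connected case). The full statement of Theorem~\ref{thm:incredrad2}, including the extra $1/\eps$ factor, then follows from Lemma~\ref{lem:min}.

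Algorithm. I would reuse the deterministic center construction from Lemma~\ref{lem:incred2}, substituting $\eps'R'$ for $\eps'D'$ throughout. The proofs of Claim~\ref{claim:near2} and Claim~\ref{claim:centernum2} go through unchanged since insertions cannot shrink out-balls, yielding a set $S$ of $O(n/(\eps R'))$ centers such that after every update every vertex $v \in V$ admits some $c \in S$ with $d(c,v) \leq \eps' R'$. From each $c \in S$ I run out-$\mathcal{A}_{inc}(c, 2R', \delta)$, maintain a max-heap $\mathcal{H}(c)$ outputting $\hat{\varepsilon}(c) := \max_u d'(c, u)$, and keep a global min-heap over the $\hat{\varepsilon}(c)$'s. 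The query returns $\hat{R} := (\min_{c \in S} \hat{\varepsilon}(c))/(1-\delta)$, with $\delta$ and $\eps'$ set to small constant multiples of $\eps$ so that $(1+\eps')/(1-\delta) \leq 1+\eps$.

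Correctness. For the upper bound, fix a true center $c^*$ with $\varepsilon(c^*) = R$ and the center $c \in S$ with $d(c, c^*) \leq \eps' R'$ guaranteed by the construction. The directed triangle inequality gives $\varepsilon(c) \leq d(c, c^*) + R \leq (1+\eps')R'$ whenever $R' \geq R$, which is below the truncation threshold $2R'$, so $\hat{\varepsilon}(c) \leq \varepsilon(c) \leq (1+\eps')R'$ and thus $\hat{R} \leq (1+\eps)R'$. For the lower bound, every center $c$ has $\varepsilon(c) \geq R$, and either its farthest vertex lies within distance $2R'$ (yielding $\hat{\varepsilon}(c) \geq (1-\delta)\varepsilon(c) \geq (1-\delta)R$) or the truncated SSSP reports some vertex within the threshold at distance at least $R$ from $c$ (again yielding $\hat{\varepsilon}(c) \geq (1-\delta)R$); hence $\hat{R} \geq R$ in either case. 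The running-time accounting is inherited verbatim from Lemma~\ref{lem:incred2}.

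The main obstacle, and the source of the additive $O(mn)$ when the graph is not strongly connected, is Step~3 of the deterministic center construction, which relied on strong connectivity to provide every source of the residual graph $G'$ with a labeled in-neighbor in $G$. More fundamentally, when $G$ is not strongly connected only vertices in the unique source SCC of $G$ can reach the true center $c^*$, so centers chosen outside this SCC are useless for approximating $c^*$. I would address this by first locating the source SCC (whose existence is equivalent to the radius being finite), restricting the center construction to it, and extending each center's label outward to the rest of $G$ via a forward BFS. Under insertions the source SCC can only grow by absorbing other SCCs; recomputing it and rebuilding the centers and heaps whenever it changes contributes the $O(mn)$ overhead, which disappears entirely when the input is strongly connected.
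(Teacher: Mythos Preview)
Your proposal is essentially correct and follows the same approach as the paper: reduce via Lemma~\ref{lem:min}, reuse the deterministic center construction from the diameter lemma with $R'$ in place of $D'$, run out-$\mathcal{A}_{inc}$ (rather than in-$\mathcal{A}_{inc}$) from each center with threshold $2R'$, and return the minimum estimated eccentricity scaled by $1/(1-\eps')$. The correctness argument via a center $c$ with $d(c,c^*)\le \eps' R'$ matches the paper exactly, as does the identification of the top (source) SCC as the locus of $c^*$ in the non-strongly-connected case.

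Two details you gloss over that the paper handles explicitly. First, when the top SCC grows and you ``rebuild the centers,'' you must ensure the new center set is a \emph{superset} of the old one; otherwise you would have to discard and reinitialize existing out-$\mathcal{A}_{inc}$ instances, and the cost would no longer be $O(mn)$. The paper achieves monotonicity by assigning the newly absorbed vertices larger IDs so the greedy center-picking tests them last, leaving all previous centers intact. Second, your step of ``extending each center's label outward to the rest of $G$ via a forward BFS'' is unnecessary: you only need the analogue of Claim~\ref{claim:near2} for vertices inside the top SCC (where $c^*$ lives), not for all of $V$; the paper restricts the claim to $H$ accordingly. Finally, to detect growth of the top SCC within the stated budget the paper invokes an incremental SCC algorithm (Haeupler--Sen--Tarjan, total $O(m^{3/2})$) rather than recomputing SCCs from scratch after every insertion.
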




By Lemma~\ref{lem:min}, the following lemma implies Theorem~\ref{thm:incredrad2}.

\begin{lemma}\label{lem:incredrad2}
There is a deterministic algorithm for incremental radius in unweighted, directed graphs that,  given a parameter $R'$, for any $\eps$ with $0<\eps<2$, runs in total time $\tilde{O}((T_{inc}(n,m,2R',\eps) + m) n/{(\eps R')}+mn)$, 
and maintains an estimate $\hat{R} \ge R$ such that if $R' \ge R$ then  $\hat{R} \le (1+\eps)R'$, where $R$ is the radius of the current graph. For strongly connected graphs, the additive factor of $mn$ in the time is not necessary.
\end{lemma}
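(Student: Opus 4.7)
The plan is to adapt Lemma~\ref{lem:incred2} to radius in the same way Lemma~\ref{lem:incredrad} adapts Lemma~\ref{lem:incred}: replace the max over centers by a min (with a correction factor), replace in-$\mathcal{A}_{inc}$ by out-$\mathcal{A}_{inc}$, and apply the triangle inequality in the opposite direction. I set $\eps' = \eps/4$ and run exactly the deterministic center construction from the proof of Lemma~\ref{lem:incred2} with $R'$ in place of $D'$. This produces a set $S$ of $O(n/(\eps R'))$ centers satisfying: for every vertex $v$ of $G$ there exists $s \in S$ with $d(s,v) \le \eps' R'$ (the analog of Claim~\ref{claim:near2} holds verbatim). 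For each $s \in S$ I run out-$\mathcal{A}_{inc}(s, 2R', \eps')$ and maintain its eccentricity estimate $\hat{\varepsilon}(s) := \max_v d'(s,v)$ in a max-heap $\mathcal{H}(s)$, with the convention that $\hat{\varepsilon}(s) = +\infty$ whenever some distance estimate from $s$ exceeds the truncation threshold $2R'$. A global min-heap over the values $\hat{\varepsilon}(s)/(1-\eps')$ lets me output $\hat R := \min_{s \in S} \hat{\varepsilon}(s)/(1-\eps')$ in constant time per query.

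For correctness, whenever $\hat{\varepsilon}(s)$ is finite the truncated SSSP guarantee gives $d(s,v) \le d'(s,v)/(1-\eps')$ for every $v$, hence $\varepsilon(s) \le \hat{\varepsilon}(s)/(1-\eps')$; since $R$ is the minimum eccentricity, this forces $\hat R \ge R$ (if every $\hat{\varepsilon}(s) = +\infty$ then trivially $\hat R = +\infty \ge R$). For the upper bound when $R' \ge R$, let $c^*$ be a vertex with $\varepsilon(c^*) = R$; by the center construction some $s^* \in S$ satisfies $d(s^*, c^*) \le \eps' R'$. Because $d(s^*, c^*)$ is finite, $s^*$ reaches $c^*$, and $c^*$ reaches every vertex of $G$ (it has finite eccentricity $R$), so by transitivity $s^*$ reaches every vertex of $G$ as well. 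The triangle inequality then gives $\varepsilon(s^*) \le d(s^*, c^*) + \varepsilon(c^*) \le \eps' R' + R \le (1+\eps')R' \le 2R'$, so the truncated estimates from $s^*$ are valid everywhere and $\hat R \le \hat{\varepsilon}(s^*)/(1-\eps') \le (1+\eps')R'/(1-\eps') \le (1+\eps)R'$, where the last step uses $\eps' \le \eps/(2+\eps)$.

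The running time analysis copies that of Lemma~\ref{lem:incred2}: the deterministic center construction costs $\tilde{O}(nm/(\eps R'))$ by the counting argument of Claim~\ref{claim:centernum2}, initializing and updating the $O(n/(\eps R'))$ out-SSSP data structures costs $\tilde{O}(T_{inc}(n,m,2R',\eps) \cdot n/(\eps R'))$, and heap maintenance contributes only a logarithmic overhead, totalling $\tilde{O}((T_{inc}(n,m,2R',\eps) + m)\, n/(\eps R'))$ for strongly connected $G$. The main obstacle, and the source of the extra $mn$ term in the general case, is that Step~3 of the center construction in Lemma~\ref{lem:incred2} uses strong connectivity to guarantee that every source of the pruned graph $G'$ has an in-neighbor in $G$. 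For arbitrary directed graphs I would spend $O(mn)$ time up front running BFS from every vertex to identify the (unique, when $R < \infty$) source SCC of $G$ that reaches all vertices; any vertex of $G$ with no in-neighbor in $G$ must lie in this SCC and can simply be added to $S$ as a forced center. Since insertions can only destroy global sources, never create them, this preprocessing is done once, and the $+mn$ term disappears when $G$ is strongly connected throughout.
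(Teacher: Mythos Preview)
Your argument for the strongly connected case is essentially the paper's: pick the deterministic centers of Lemma~\ref{lem:incred2} with $R'$ in place of $D'$, run out-$\mathcal{A}_{inc}(\cdot,2R',\eps')$ from each, and output the corrected minimum eccentricity. The correctness computation via a center $s^*$ with $d(s^*,c^*)\le\eps'R'$ and the bound $\varepsilon(s^*)\le(1+\eps')R'\le 2R'$ (so the truncation is harmless) match the paper; your $\eps'=\eps/4$ is slightly more conservative than the paper's $\eps'=\eps/2$, but either works.

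Your treatment of the general (not strongly connected) case, however, does not go through, and the paper handles it quite differently. The difficulty is not merely that Step~3 needs an in-neighbor: in the incremental setting the radius can be infinite initially and become finite only after many insertions, so at initialization there may be no ``top SCC that reaches all vertices'' at all, and there may be $\Theta(n)$ global sources. Your fix of adding as forced centers the global sources that lie in the (initial) top SCC then adds nothing, and the centers you pick at time zero need not cover the eventual $c^*$. Concretely, start with an isolated vertex $u$ together with a large directed cycle on $v_1,\dots,v_n$; your construction produces centers only among the $v_i$, and if later only the edge $u\to v_1$ is inserted then $R=n$ with $c^*=u$, yet no center reaches $u$, so $\hat R=\infty$ even though $R'\ge R$. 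Conversely, if you instead add \emph{all} initial global sources as forced centers (to make the construction complete), the number of centers can be $\Theta(n)$ and you pay $n\cdot T_{inc}(n,m,2R',\eps)$, which is not absorbed by the $+mn$ term.

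The paper's route is to \emph{dynamically} track the unique SCC $H$ that can reach all vertices (via an incremental SCC algorithm), run the center construction only on $H$, and each time $H$ grows, rerun the center picking on the enlarged $H$ with the new vertices placed last in the ID order so that the previously chosen centers are retained. Since $H$ can grow at most $n$ times and each rerun costs $O(m)$, this contributes the $+mn$ term; the number of centers stays $O(n/(\eps R'))$ throughout, so the SSSP cost is as claimed. Your one-shot preprocessing misses exactly this dynamic re-centering step.
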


\begin{proof}$  $

The basic idea for the proof of the lemma is that it is possible deterministically (1) to pick a set of {\em center} nodes, (2) to compute the eccentricity of each of the centers
and (3) to return the minimum values of these eccentricities. Let $s^*$ be a vertex with $\epsilon(s^*) = R$. We note that in this proof we do not refer to $s^*$ as the center --- instead a center is a node in the set chosen in step (1) above.  Note that $s^*$ has a path to
every node of length at most $R$. Let $I$ be the set of nodes
 that can reach $s^*$ by a path of length at most $\eps R$. Then  for every vertex
$x \in I$ it must hold that
$x$ has to every other node a path of length at most $(1+\eps)R$  and, thus,  $x$ has an eccentricity of at most $(1+\eps) R$. Hence, we will chose centers in such a way that
it is guaranteed that one of the nodes in $I$ is a center. In reality, the proof of correctness is slightly more complicated as the SSSP data structure that we use as subroutine only gives approximate distances.

Although our goal for choosing centers is essentially the same as that for diameter, we cannot directly apply the algorithm from diameter because step 3 assumes the graph is strongly connected. For diameter, we can assume that the graph is strongly connected because if it is not then the diameter is infinite. In contrast, the radius of a directed graph that is not strongly connected can be finite.

We can bypass this issue by using an algorithm for picking centers similar to that from the diameter algorithm, but only running this algorithm on the strongly connected component (SCC) that contains $s^*$. This SCC may grow so we also need to dynamically augment the set of centers.

\paragraph{Algorithm}
We run a deterministic incremental algorithm that maintains SCCs as a subroutine. An algorithm of Haeupler, Sen, and Tarjan~\cite{haeupler2008incremental} does this in total time $O(m^{3/2})$.

We let a \emph{top} SCC be the SCC that can reach all vertices in the graph. Note that there are either 0 or 1 top SCCs. Given the vertices in the top SCC (if it exists), we also maintain the graph $H$ induced by these vertices. (This is simple: whenever a vertex is added to the top SCC we iterate through its incident edges to see which are to or from a vertex in the top SCC, and after each edge insertion we check whether both of its endpoints are in the top SCC.) We note that all vertices not in $H$ have infinite eccentricity, so if $H$ is nonempty, then $H$ must contain the vertex $v$ for which $\varepsilon(v)=R$.

Let $\eps'=\eps/2$. When we run algorithm out-$\mathcal{A}_{inc}$ from a vertex $v$, we also keep a max-heap ${\cal H}(v)$ that stores for each other vertex $u$ the estimate $d'(v,u)$ returned by the algorithm and which outputs the value $\hat d_{out}(v)
=\max_u d'(v,u)/(1-\eps')$. Additionally we keep a min-heap $\mathcal{H}$ which stores for each center $v$ the value
$\hat  d_{out}(v)$.
This heap is updated whenever $\hat d_{out}(v)$
changes for any center $v$.


\subparagraph{Initialization}
We deterministically choose the centers in exactly the same way as for diameter, except $D'$ is replaced with $R'$.

\subparagraph{Edge insertion}
If at least one new vertex has been added to $H$ as a result of the edge insertion, we update the set of centers as follows. We simply rerun the center picking algorithm described in the running time analysis for diameter using the graph $H$ as input. That algorithm tests the vertices in increasing order of ID whether they can be added to $S$; before running it we change the IDs of the vertices newly added to $H$ so that they are tested after all of the old vertices. This way, the new set of centers is a superset of the old set of centers.

Then, for every center $c$, we run the algorithm out-$\mathcal{A}_{inc}(c,2R',\eps')$ 
and update the corresponding heaps.


\subparagraph{Queries}
The algorithm returns the minimum value in $\mathcal{H}$.

\paragraph{Analysis}

\subparagraph{Correctness}

By an identical proof to Claim~\ref{claim:near2} (with $D'$ replaced by $R'$), after every edge update there exists for every vertex $u$ in $H$  a center  $v$ such that  $d(v,u) \le \eps' R'$. Recall that $s^*$ is in $H$.

Now the correctness proof proceeds as follows:
Recall that we aim to show that (i) $\hat{R} \geq R$ and (ii) if $R'\geq R$, 
then $\hat{R} \le (1+\eps) R'.$

(i) For each center $v$  the value $\hat d_{out}(v) = \max_u d'(v,u)/(1-\eps') \ge \varepsilon(v)$ and the algorithm outputs
$\min_{v \in S} \hat d_{out}(v)$.
Then $R = \min_x \varepsilon(x) \le \min_{v \in S} \varepsilon(v) \le  \min_{v \in S} \hat d_{out}(v) / (1-\eps')= \hat{R}$.  

(ii)
Let $s^*$  be a vertex with $\varepsilon(s^*)= R$ and let $v^*$ be a center 
 with  $d( v^*,s^*) \le \eps' R'$. 
Note that $\varepsilon(s^*)= R$
 implies $\varepsilon(v^*) \le \varepsilon(s^*) + \eps' R' = R + \eps' R'.$
 If $R' \geq R$, it follows that $\varepsilon(v^*) \le (1 + \eps') R'$.
As the value $\hat{R}$ returned by a query is $\min_{v \in S} \hat d_{out}(v)/(1-\eps')$, it follows that
$\hat{R} \le \varepsilon(v^*)/(1-\eps') \le \frac{1+\eps'}{1-\eps'} R'$.
Setting $\eps'=\eps/2$ completes the correctness proof.

\subparagraph{Running time}
An identical argument to that of diameter (with $D'$ replaced by $R'$) shows that without the SCC algorithm and the algorithm for updating centers upon edge insertion, the running time is $\tilde{O}((T_{inc}(n,m,2R',\eps) + m) n/{(\eps R')})$. The SCC algorithm runs in time $O(m^{3/2})$. The algorithm for updating centers runs in time $O(m)$ from the running time analysis for diameter. New vertices are added to $H$ at most $n$ times, so the total time for updating centers is $O(mn)$. Thus, the total time is $\tilde{O}((T_{inc}(n,m,2R',\eps) + m) n/{(\eps R')}+mn)$.
\end{proof}

\subsubsection{Eccentricities}

\begin{theorem}\label{thm:detecc} 
There is a deterministic algorithm  for incremental eccentricities in unweighted, directed, strongly connected graphs that, for any   $\eps$ with  $0<\eps<2$, runs in total time \\$\tilde{O}(\max_{R_f\leq D'\leq D_0} \{(T_{inc}(n,m,D',\eps) +
m) n/(\eps^2 D') \})$, 
and for all $v\in V$ maintains an estimate $\hat{\varepsilon}(v)$ such that $(1-\eps)\varepsilon(v)\leq \hat{\varepsilon}(v)\leq \varepsilon(v)$ where $\varepsilon(v)$ is the eccentricity of $v$ in the current graph.
\end{theorem}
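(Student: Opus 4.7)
The plan is to mirror the two-step structure of the preceding deterministic theorems. First, using Lemma~\ref{lem:max} together with a near-linear-time static constant-factor approximation for directed eccentricities (whose existence is noted just after Lemma~\ref{lem:max}), it suffices to prove the following inner lemma: for any $D'>0$ and $0<\eps<2$ there is a deterministic incremental algorithm on directed, strongly connected graphs running in total time $\tilde O((T_{inc}(n,m,D',\eps)+m)\,n/(\eps D'))$ that maintains $\hat\varepsilon(v)\le\varepsilon(v)$ for every $v$, with $\hat\varepsilon(v)\ge(1-\eps)D'$ whenever $D'\le\varepsilon(v)$. Applying Lemma~\ref{lem:max} to this inner lemma yields Theorem~\ref{thm:detecc}.

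The estimator rests on the directed triangle inequality $\varepsilon(v)\ge\varepsilon(c)-d(v,c)$, valid because concatenating a shortest $v\to c$ path with any $c\to u$ walk realizing $\varepsilon(c)$ gives a $v\to u$ walk of length $d(v,c)+\varepsilon(c)$. Set $\eps'=\eps/4$. Deterministically build a set $S$ of centers such that every vertex $v$ has some $c\in S$ with $d(v,c)\le\eps'D'$; this is exactly the construction of Lemma~\ref{lem:incred2} carried out on the reverse graph $G^R$ (which is again strongly connected), giving $|S|=O(n/(\eps D'))$ and initialization plus incremental maintenance cost $\tilde O(mn/(\eps D'))$ by the same argument. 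Because incremental distances only decrease, the covering invariant is automatically preserved. From each $c\in S$, run both out-$\mathcal A_{inc}(c,D',\eps')$ and in-$\mathcal A_{inc}(c,D',\eps')$, maintain $\varepsilon'(c)=\max_u d'(c,u)$ in a per-center max-heap, and for each $v$ define
\[
\hat\varepsilon(v)\;=\;\max_{c\in S}\Bigl(\varepsilon'(c)-\tfrac{d'(v,c)}{1-\eps'}\Bigr),
\]
maintained by a per-vertex max-heap over these $|S|$ candidates.

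Correctness has two parts. For $\hat\varepsilon(v)\le\varepsilon(v)$: when $d(v,c)\le D'$ we have $\varepsilon'(c)\le\varepsilon(c)$ and $d'(v,c)/(1-\eps')\ge d(v,c)$, so $\varepsilon'(c)-d'(v,c)/(1-\eps')\le\varepsilon(c)-d(v,c)\le\varepsilon(v)$ by the triangle inequality; when $d(v,c)>D'$ the standard truncation convention makes the corresponding candidate $-\infty$ (or at most $0$), so it cannot violate the bound. For the lower bound when $D'\le\varepsilon(v)$, let $u^*$ realize $\varepsilon(v)$ and pick $c^*\in S$ with $d(v,c^*)\le\eps'D'$; then $\varepsilon(c^*)\ge d(c^*,u^*)\ge\varepsilon(v)-\eps'D'\ge(1-\eps')D'$, and since the truncated SSSP is accurate up to $D'$ one gets $\varepsilon'(c^*)\ge(1-\eps')^2 D'$, while $d'(v,c^*)/(1-\eps')\le\eps'D'/(1-\eps')$. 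A short computation shows $(1-\eps')^2-\eps'/(1-\eps')\ge1-4\eps'=1-\eps$, so the candidate for $c^*$ already gives $\hat\varepsilon(v)\ge(1-\eps)D'$.

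For the running time, the $2|S|$ SSSP instances contribute $\tilde O(T_{inc}(n,m,D',\eps)\,n/(\eps D'))$; center-picking and its incremental maintenance contribute $\tilde O(mn/(\eps D'))$ by Lemma~\ref{lem:incred2}; and heap maintenance adds only polylogarithmic factors once one observes that each $\varepsilon'(c)$ is integer-valued and monotone non-increasing in $[0,D']$, so it changes at most $D'$ times per center, whence total propagation into the $n$ per-vertex heaps is $\tilde O(n|S|D')=\tilde O(n^2/\eps)$, absorbed by $(T_{inc}+m)n/(\eps D')$ since $m\ge n$ for a strongly connected graph. The part I expect to require the most care is precisely this heap-maintenance accounting: as in Theorem~\ref{thm:incredecc} one must amortize the drops in $\varepsilon'(c)$ across all vertices' heaps, and pin down the truncation convention for $d'$ on distances above $D'$ carefully enough that the upper bound $\hat\varepsilon(v)\le\varepsilon(v)$ holds deterministically at every step rather than merely on average.
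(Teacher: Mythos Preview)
Your reduction to Lemma~\ref{lem:max} and the deterministic center construction are fine, but the core estimator is incorrect for directed graphs. The inequality you invoke, $\varepsilon(v)\ge\varepsilon(c)-d(v,c)$, does \emph{not} hold in directed graphs, and your justification (``concatenating a shortest $v\to c$ path with a $c\to u$ walk realizing $\varepsilon(c)$ gives a $v\to u$ walk of length $d(v,c)+\varepsilon(c)$'') proves only $d(v,u)\le d(v,c)+\varepsilon(c)$, an \emph{upper} bound on $d(v,u)$, not the lower bound you need. The correct directed inequality is $\varepsilon(v)\ge\varepsilon(c)-d(c,v)$, with the opposite direction on the $c$--$v$ leg. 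A small counterexample: take vertices $v,c,u$, arcs $v\!\to\! c$, $v\!\to\! u$, $u\!\to\! c$ of length~$1$, and arcs $c\!\to\! u$, $c\!\to\! v$, $u\!\to\! v$ of length $M$; then $\varepsilon(v)=1$ while $\varepsilon(c)-d(v,c)=M-1$. Consequently your upper bound $\hat\varepsilon(v)\le\varepsilon(v)$ can fail badly for centers $c$ far from $v$ in the $c\to v$ direction. Swapping $d'(v,c)$ for $d'(c,v)$ in the subtraction would repair the upper bound but break your lower-bound argument, since your reverse-graph center construction only guarantees small $d(v,c)$, not small $d(c,v)$; the two directions are genuinely incompatible for this estimator in directed graphs. (This is exactly why the analogous estimator in Theorem~\ref{thm:incredecc} is stated only for undirected graphs.)

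The paper's proof sidesteps the issue entirely with a much simpler estimator: keep the center construction of Lemma~\ref{lem:incred2} on $G$ itself (so every vertex $u$ has a center $c$ with $d(c,u)\le\eps' D'$), run only in-$\mathcal{A}_{inc}(c,D',\eps')$ from each center, and set $\hat\varepsilon(v)=\max_{c\in S} d'(v,c)$. The upper bound is then trivial, and for the lower bound one applies the center-cover to the vertex $v'$ realizing $\varepsilon(v)$: the center $v^*$ with $d(v^*,v')\le\eps' D'$ satisfies $d(v,v^*)\ge\varepsilon(v)-\eps' D'$, giving $\hat\varepsilon(v)\ge(1-\eps')^2\varepsilon(v)$. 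This also eliminates your heap-accounting headache, since each vertex's heap stores only the $|S|$ values $d'(v,c)$ already maintained by the SSSP data structures.
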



By Lemma~\ref{lem:max}, the following lemma implies Theorem~\ref{thm:detecc}.

\begin{lemma}
There is a deterministic algorithm for incremental eccentricities in unweighted, directed, strongly connected graphs that,  given a parameter $D'$, for any $\eps$ with $0<\eps<2$, runs in total time $\tilde{O}((T_{inc}(n,m,D',\eps) + m) n/{(\eps D')}+mn)$, 
and for all $v\in V$ maintains an estimate $\hat{\varepsilon}(v)\leq \varepsilon(v)$ such that if $D'\leq \varepsilon(v)$, then $\hat{\varepsilon}(v) \geq (1-\eps)\varepsilon(v)$.
\end{lemma}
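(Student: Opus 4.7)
The plan is to mimic the structure of the deterministic incremental diameter algorithm (Lemma~\ref{lem:incred2}), modified so that the output is a per-vertex estimate rather than a single global value. The key simplification relative to the deterministic radius algorithm (Lemma~\ref{lem:incredrad2}) is that strong connectivity of $G$ lets us invoke the center-picking procedure of Lemma~\ref{lem:incred2} directly on $G$, making the SCC maintenance and dynamic augmentation of the center set that were needed for radius unnecessary here.

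First, I would set $\eps'=\eps/2$ and construct the center set $S$ using the deterministic procedure from the proof of Lemma~\ref{lem:incred2}, applied once to $G$. This yields $|S|=O(n/(\eps' D'))$ centers together with the guarantee, proved exactly as in Claim~\ref{claim:near2}, that every vertex $u$ of $G$ has some $c\in S$ with $d(c,u)\le \eps' D'$; this property persists in the incremental setting because distances only grow. Next, for each $c\in S$ I would maintain in-$\mathcal{A}_{inc}(c, D', \eps')$, and for each vertex $v$ a max-heap ${\cal H}(v)$ over $\{d'(v,c): c\in S\}$, defining $\hat\varepsilon(v)$ to be the maximum of ${\cal H}(v)$; the heap is updated whenever one of the estimates $d'(v,c)$ changes (the SSSP algorithm outputs the set of changed estimates after every update by assumption).

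For correctness, $\hat\varepsilon(v)\le\varepsilon(v)$ is immediate since each $d'(v,c)\le d(v,c)\le \varepsilon(v)$. For the lower bound, assuming $D'\le\varepsilon(v)$, let $v'$ be a vertex realizing $\varepsilon(v)=d(v,v')$ and let $x$ be the vertex on the shortest $v\to v'$ path at distance $\lfloor(1-\eps')D'\rfloor$ from $v$, which exists because $\varepsilon(v)\ge D'$. Applying the center condition at $x$ yields $c\in S$ with $d(c,x)\le\eps' D'$, so the directed triangle inequality gives $d(v,c)\ge d(v,x)-d(c,x)\ge (1-2\eps')D'$. Then in-$\mathcal{A}_{inc}(c,D',\eps')$ provides $d'(v,c)\ge (1-\eps')(1-2\eps')D'\ge(1-\eps)D'$, and the outer wrapper Lemma~\ref{lem:max} applied at geometrically spaced scales of $D'$ upgrades the resulting $(1-\eps)D'$ bound into the claimed $(1-\eps)\varepsilon(v)$ bound. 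The running time $\tilde{O}((T_{inc}(n,m,D',\eps)+m)\,n/(\eps D'))$ comes from the $|S|$ SSSP instances totaling $\tilde{O}(T_{inc}(n,m,D',\eps)\,n/(\eps D'))$, the $O(mn/(\eps D'))$ cost of the center construction as in Lemma~\ref{lem:incred2}, and an $O(\log n)$ overhead for heap maintenance.

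The main obstacle is the directed-graph subtlety that already appears in the proof of Lemma~\ref{lem:incred2}: the center condition bounds $d(c,x)$ but not $d(x,c)$, so we do not immediately get an upper bound on $d(v,c)$, whereas we need $d(v,c)\le D'$ in order for in-$\mathcal{A}_{inc}(c,D',\eps')$ to provide its $(1-\eps')$-approximation. I would resolve this exactly as in the diameter proof: using Step~4 of the center-picking procedure, the witness center $c$ is taken to be the label of some vertex inside $v$'s out-ball of radius at most $D'$, ensuring $d(v,c)\le D'$. Beyond this point the argument is a routine adaptation of Lemma~\ref{lem:incred2}.
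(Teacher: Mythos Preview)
Your algorithm is the same as the paper's: deterministically pick centers $S$ as in the diameter lemma, run in-$\mathcal{A}_{inc}$ from each center, and return $\hat\varepsilon(v)=\max_{c\in S} d'(v,c)$. The running-time analysis also matches. The divergence is in the correctness argument, and there your proof does not establish the lemma as stated.

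The lemma asks for $\hat\varepsilon(v)\ge(1-\eps)\varepsilon(v)$ whenever $D'\le\varepsilon(v)$, but you only prove $\hat\varepsilon(v)\ge(1-\eps)D'$ and then invoke Lemma~\ref{lem:max} to ``upgrade'' this. That invocation is how the \emph{theorem} follows from the lemma; it is not part of the lemma's own proof. The paper obtains the stronger bound directly, and more simply: instead of picking an intermediate point $x$ at distance $\lfloor(1-\eps')D'\rfloor$ from $v$, apply the center property at $v'$ itself (the farthest vertex from $v$). This gives a center $v^*$ with $d(v^*,v')\le\eps'D'\le\eps'\varepsilon(v)$, so the triangle inequality yields $d(v,v^*)\ge(1-\eps')\varepsilon(v)$, and hence $d'(v,v^*)\ge(1-\eps')^2\varepsilon(v)\ge(1-\eps)\varepsilon(v)$. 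No intermediate point, no appeal to the wrapper. (As a minor point, with your choice $\eps'=\eps/2$ the product $(1-\eps')(1-2\eps')$ is not $\ge 1-\eps$ for $\eps<1$; you would need $\eps'=\eps/3$ or similar.)

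Your ``main obstacle'' --- that $d(v,c)\le D'$ is needed for the truncated-SSSP guarantee --- is a genuine issue, but your proposed fix does not work and is not what the diameter proof does. The label of a vertex $z$ is a center $c$ with $d(c,z)$ small, not $d(z,c)$ small, so picking $z$ in $v$'s out-ball gives no upper bound on $d(v,c)$. The paper simply does not address this point; it implicitly assumes (as holds for Even--Shiloach trees) that when the true distance exceeds the truncation threshold $D'$, the estimate returned is still at least $(1-\eps')D'$.
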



\begin{proof}$  $

The basic idea for the proof of the lemma is that it is possible deterministically (1) to pick a set of {\em center} nodes, (2) to compute the distance from every vertex to each of the centers 
and (3) to return for each vertex $v$ the maximum distance from $v$ to a center. Fix $v$ and let $v'$ be the farthest vertex from $v$. Let $I$ be the set of nodes
 that can reach $v'$ by a path of length at most $\eps (\varepsilon(v))$. Then  for every vertex
$x \in I$ it must hold that $d(v,x)\geq (1-\eps)\varepsilon(v)$.
Hence, we will chose centers in such a way that
it is guaranteed that one of the nodes in $I$ is a center. In reality, the proof of correctness is slightly more complicated as the SSSP data structure that we use as subroutine only gives approximate distances.


\paragraph{Algorithm}

Let $\eps'=\eps/2$. For every vertex $v$ we keep a max-heap ${\cal H}(v)$ that stores for each center $u$ the estimate $d'(v,u)$ returned by the algorithm in-$\mathcal{A}_{inc}$.

\subparagraph{Initialization}
We deterministically choose the centers in exactly the same way as for diameter

\subparagraph{Edge insertion}

Then, for every center $c$, we run the algorithm in-$\mathcal{A}_{inc}(c,D',\eps')$ 
and update the corresponding heaps.

\subparagraph{Queries}

Given a query for $\varepsilon(v)$, the algorithm returns the maximum value in $\mathcal{H}(v)$.

\paragraph{Analysis}

\subparagraph{Correctness}

By an identical proof to Claim~\ref{claim:near2}, after every edge update there exists for every vertex $u$ a center  $v$ such that  $d(v,u) \le \eps' D'$. Fix $v$ and let $v'$ be such that $d(v,v')=\varepsilon(v)$

Now the correctness proof proceeds as follows:
Recall that we aim to show that (i) $\hat{\varepsilon}(v)\leq \varepsilon(v)$ and (ii) if $D'\leq \varepsilon(v)$, then $\hat{\varepsilon}(v) \geq (1-\eps)\varepsilon(v)$.

(i) The return value $\hat d_{out}(v)$ is for some $u$, $d'(v,u)\leq d(v,u)\leq\varepsilon(v)$.

(ii)
Let $v^*$ be a center 
 with  $d( v^*,v') \le \eps' D'$. So, if $D'\leq \varepsilon(v)$ then $d( v^*,v') \le \eps'(\varepsilon(v))$. Then by the triangle inequality, $d(v,v^*)\geq (1-\eps')\varepsilon(v)$. Thus, $d'(v,v^*)\geq (1-\eps')^2\varepsilon(v)$. As the value $\hat{\varepsilon}(v)$ returned by a query is $\max_{u \in S} d'(v,u)$, it follows that $\hat{\varepsilon}(v)\geq (1-\eps')^2\varepsilon(v)$. 
Setting $\eps'=\eps/2$ completes the correctness proof.

\subparagraph{Running time}
An identical argument to that of diameter shows that the running time is\\ $\tilde{O}((T_{inc}(n,m,D',\eps) + m) n/{(\eps D')})$.
\end{proof}

\paragraph{Acknowledgements} The authors would like to thank Roei Tov for discussions.

\bibliographystyle{plain}
\bibliography{referencesMH}

\begin{thebibliography}{10}

\bibitem{abboud2016near}
Amir Abboud, Keren Censor-Hillel, and Seri Khoury.
\newblock Near-linear lower bounds for distributed distance computations, even
  in sparse networks.
\newblock In {\em International Symposium on Distributed Computing}, pages
  29--42. Springer, 2016.

\bibitem{AbboudW14}
Amir Abboud and Virginia {Vassilevska Williams}.
\newblock Popular conjectures imply strong lower bounds for dynamic problems.
\newblock In {\em 55th {IEEE} Annual Symposium on Foundations of Computer
  Science, {FOCS} 2014, Philadelphia, PA, USA, October 18-21, 2014}, pages
  434--443, 2014.

\bibitem{AbboudWW16}
Amir Abboud, Virginia {Vassilevska Williams}, and Joshua~R. Wang.
\newblock Approximation and fixed parameter subquadratic algorithms for radius
  and diameter in sparse graphs.
\newblock In {\em Proceedings of the Twenty-Seventh Annual {ACM-SIAM} Symposium
  on Discrete Algorithms, {SODA} 2016, Arlington, VA, USA, January 10-12,
  2016}, pages 377--391, 2016.

\bibitem{albertdiam}
R.~Albert, H.~Jeong, and A.L. Barabasi.
\newblock Diameter of the world wide web.
\newblock {\em Nature}, 401:130 -- 131, 1999.

\bibitem{AlYuZw97}
N.~Alon, R.~Yuster, and U.~Zwick.
\newblock Finding and counting given length cycles.
\newblock {\em Algorithmica}, 17:209--223, 1997.

\bibitem{diamstoc18}
Arturs Backurs, Liam Roditty, Gilad Segal, Virginia {Vassilevska Williams}, and
  Nicole Wein.
\newblock Towards tight approximation bounds for graph diameter and
  eccentricities.
\newblock In {\em Proceedings of STOC'18}, page to appear, 2018.

\bibitem{bringmann2018note}
Karl Bringmann and Sebastian Krinninger.
\newblock A note on hardness of diameter approximation.
\newblock {\em Information Processing Letters}, 133:10--15, 2018.

\bibitem{cgr}
Massimo Cairo, Roberto Grossi, and Romeo Rizzi.
\newblock New bounds for approximating extremal distances in undirected graphs.
\newblock In {\em Proceedings of the Twenty-Seventh Annual {ACM-SIAM} Symposium
  on Discrete Algorithms, {SODA} 2016, Arlington, VA, USA, January 10-12,
  2016}, pages 363--376, 2016.

\bibitem{chechikdiam}
Shiri Chechik, Daniel~H. Larkin, Liam Roditty, Grant Schoenebeck, Robert~Endre
  Tarjan, and Virginia {Vassilevska Williams}.
\newblock Better approximation algorithms for the graph diameter.
\newblock In {\em Proceedings of the Twenty-Fifth Annual {ACM-SIAM} Symposium
  on Discrete Algorithms, {SODA} 2014, Portland, Oregon, USA, January 5-7,
  2014}, pages 1041--1052, 2014.

\bibitem{patternscycles19}
Mina Dalirrooyfard, Thuy~Duong Vuong, and Virginia {Vassilevska Williams}.
\newblock Graph pattern detection: Hardness for all induced patterns and faster
  non-induced cycles.
\newblock In {\em STOC}, page to appear, 2019.

\bibitem{DemetrescuI04}
Camil Demetrescu and Giuseppe~F. Italiano.
\newblock A new approach to dynamic all pairs shortest paths.
\newblock {\em Journal of the ACM}, 51(6):968--992, 2004.
\newblock Announced at STOC'03.

\bibitem{EvenS81}
Shimon Even and Yossi Shiloach.
\newblock An on-line edge-deletion problem.
\newblock {\em Journal of the ACM}, 28(1):1--4, 1981.

\bibitem{haeupler2008incremental}
Bernhard Haeupler, Siddhartha Sen, and Robert~E Tarjan.
\newblock Incremental topological ordering and strong component maintenance.
\newblock {\em arXiv preprint arXiv:0803.0792}, 2008.

\bibitem{HenzingerK95}
Monika Henzinger and Valerie King.
\newblock Fully dynamic biconnectivity and transitive closure.
\newblock In {\em Symposium on Foundations of Computer Science (FOCS)}, pages
  664--672, 1995.

\bibitem{HenzingerKNFOCS14}
Monika Henzinger, Sebastian Krinninger, and Danupon Nanongkai.
\newblock Decremental single-source shortest paths on undirected graphs in
  near-linear total update time.
\newblock In {\em Symposium on Foundations of Computer Science (FOCS)}, pages
  146--155, 2014.

\bibitem{HenzingerKNSICOMP16}
Monika Henzinger, Sebastian Krinninger, and Danupon Nanongkai.
\newblock Dynamic approximate all-pairs shortest paths: Breaking the {$O(mn)$}
  barrier and derandomization.
\newblock {\em SIAM Journal on Computing}, 45(3):947--1006, 2016.
\newblock Announced at FOCS'13.

\bibitem{HenzingerKNS15}
Monika Henzinger, Sebastian Krinninger, Danupon Nanongkai, and Thatchaphol
  Saranurak.
\newblock Unifying and strengthening hardness for dynamic problems via the
  online matrix-vector multiplication conjecture.
\newblock In {\em Symposium on Theory of Computing (STOC)}, pages 21--30, 2015.

\bibitem{HenzingerLNV17}
Monika Henzinger, Andrea Lincoln, Stefan Neumann, and Virginia~Vassilevska
  Williams.
\newblock {Conditional Hardness for Sensitivity Problems}.
\newblock In Christos~H. Papadimitriou, editor, {\em 8th Innovations in
  Theoretical Computer Science Conference (ITCS 2017)}, volume~67 of {\em
  Leibniz International Proceedings in Informatics (LIPIcs)}, pages
  26:1--26:31, Dagstuhl, Germany, 2017. Schloss Dagstuhl--Leibniz-Zentrum fuer
  Informatik.

\bibitem{ipz2}
R.~Impagliazzo, R.~Paturi, and F.~Zane.
\newblock Which problems have strongly exponential complexity?
\newblock {\em J. Comput. Syst. Sci.}, 63(4):512--530, 2001.

\bibitem{legallmult}
Fran{\c{c}}ois {Le Gall}.
\newblock Powers of tensors and fast matrix multiplication.
\newblock In {\em International Symposium on Symbolic and Algebraic
  Computation, {ISSAC} '14, Kobe, Japan, July 23-25, 2014}, pages 296--303,
  2014.

\bibitem{openproblems}
Moshe Lewenstein, Seth Pettie, and Virginia {Vassilevska Williams}.
\newblock Open problems from {D}agstuhl seminar 16451: Structure and hardness
  in {P}, 2016.

\bibitem{Pettie04}
S.~Pettie.
\newblock A new approach to all-pairs shortest paths on real-weighted graphs.
\newblock {\em Theor. Comput. Sci.}, 312(1):47--74, 2004.

\bibitem{PettieR05}
Seth Pettie and Vijaya Ramachandran.
\newblock A shortest path algorithm for real-weighted undirected graphs.
\newblock {\em {SIAM} J. Comput.}, 34(6):1398--1431, 2005.

\bibitem{RV13}
Liam Roditty and Virginia {Vassilevska Williams}.
\newblock Fast approximation algorithms for the diameter and radius of sparse
  graphs.
\newblock In {\em Proceedings of the 45th annual ACM symposium on Symposium on
  theory of computing}, STOC '13, pages 515--524, 2013.

\bibitem{seidel}
R.~Seidel.
\newblock On the all-pairs-shortest-path problem in unweighted undirected
  graphs.
\newblock {\em J. Comput. Syst. Sci.}, 51(3):400--403, 1995.

\bibitem{vstoc12}
Virginia {Vassilevska Williams}.
\newblock Multiplying matrices faster than {C}oppersmith-{W}inograd.
\newblock In {\em Proceedings of the forty-fourth annual ACM symposium on
  Theory of computing}, pages 887--898. ACM, 2012.

\bibitem{survey}
Virginia {Vassilevska Williams}.
\newblock On some fine-grained questions in algorithms and complexity.
\newblock In {\em Proceedings of the International Congress of Mathematicians},
  page to appear, 2018.

\bibitem{TCS05}
R.~Williams.
\newblock A new algorithm for optimal $2$-constraint satisfaction and its
  implications.
\newblock {\em Theor. Comput. Sci.}, 348(2--3):357--365, 2005.

\bibitem{ryanapsp}
Ryan Williams.
\newblock Faster all-pairs shortest paths via circuit complexity.
\newblock In {\em Symposium on Theory of Computing, {STOC} 2014, New York, NY,
  USA, May 31 - June 03, 2014}, pages 664--673, 2014.

\bibitem{YuZw04}
R.~Yuster and U.~Zwick.
\newblock Detecting short directed cycles using rectangular matrix
  multiplication and dynamic programming.
\newblock In {\em Proc. SODA}, pages 247--253, 2004.

\bibitem{Zwick02}
Uri Zwick.
\newblock All pairs shortest paths using bridging sets and rectangular matrix
  multiplication.
\newblock {\em Journal of the ACM}, 49(3):289--317, 2002.
\newblock Announced at FOCS'98.

\end{thebibliography}
\end{document}